\title{Dicey Games: Shared Sources of Randomness in Distributed Systems} 
\author{L\'eonard Brice}{Institute of Science and Technology Austria, Austria \and \url{https://lnrdbrice.github.io/} }{leonard.brice@ista.ac.at}{https://orcid.org/0000-0001-7748-7716}{Part of this work was realised when this author was an FNRS aspirant at Université libre de Bruxelles.}
\author{Thomas A. Henzinger}{Institute of Science and Technology Austria, Austria \and \url{https://pub.ista.ac.at/~tah/}}{tah@ista.ac.at}{0000-0001-6077-7514}{}
\author{K. S. Thejaswini}{Universit\'e libre de Bruxelles, Belgium \and \url{https://thejaswiniraghavan.github.io/}}{thejaswini.raghavan@ulb.be}{0000-0001-6077-7514}{Part of this work was realised when this author was employed at IST Austria.}
\authorrunning{L. Brice, T. A. Henzinger, K. S. Thejaswini} 
\keywords{Concurrent games, Shared randomness, Topology, Algebraic Geometry} 
\pgfplotsset{compat=1.18}
\let\oldint\int
\renewcommand{\int}{\oldint\!\!}
\newtheorem*{question}{Question}
\newtheorem{problem}{Problem}
\renewcommand{\epsilon}{\varepsilon}
\renewcommand{\phi}{\varphi}
\renewcommand{\l}{\ell}
\renewcommand{\Game}{\mathcal{G}}
\newcommand{\Pack}{\mathcal{P}}
\newcommand{\Struct}{\mathcal{S}}
\newcommand{\DGame}{\mathcal{D}}
\newcommand{\Players}{\Pi}
\newcommand{\Actions}{A}
\newcommand{\Dice}{\Delta}
\newcommand{\Universe}{\Omega}
\newcommand{\Die}{D}
\newcommand{\acc}{\mathsf{acc}}
\newcommand{\Acc}{\mathsf{Acc}}
\newtcolorbox{boxB}{
    fontupper = \color{black}, 
    boxrule = 1.5pt,
    colframe = black,
    rounded corners,
    arc = 5pt   
}
\newcommand{\payoff}{\mu}
\newcommand{\player}{p}
\newcommand{\action}{a}
\newcommand{\roll}{r}
\newcommand{\strat}{\sigma}
\newcommand{\threshold}{t}
\newcommand{\everyplayer}{{\player \in \Players \cup \{\opp\}}}
\newcommand{\broll}{{\bar{\roll}}}
\newcommand{\bstrat}{{\bar{\strat}}}
\newcommand{\baction}{{\bar{\action}}}
\newcommand{\Conv}{\mathsf{Conv}}
\newcommand{\rand}{\mathsf{rand}}
\newcommand{\Active}{\mathsf{Active}}
\newcommand{\hx}{\hat{x}}
\newcommand{\hy}{\hat{y}}   
\newcommand{\bv}{\bar{v}}
\newcommand{\bzero}{\bar{0}}
\newcommand{\bx}{\bar{x}}
\newcommand{\by}{\bar{y}}
\newcommand{\bz}{\bar{z}}
\newcommand{\bbx}{\bar{\bar{x}}}
\newcommand{\bbaction}{{\bar{\baction}}}
\newcommand{\bblambda}{{\overline{\overline{\phantom{a}}} \hspace{-1.8mm}\lambda}}
\newcommand{\ada}{\mathfrak{A}}
\newcommand{\bert}{\mathfrak{B}}
\newcommand{\clau}{\mathfrak{C}}
\newcommand{\dev}{\mathfrak{D}}
\newcommand{\opp}{\dev}
\renewcommand{\H}{\mathsf{H}}
\newcommand{\T}{\mathsf{T}}
\newcommand{\prob}{\mathbb{P}}
\newcommand{\Nb}{\mathbb{N}}
\newcommand{\Rb}{\mathbb{R}}
\newcommand{\Eb}{\mathbb{E}}
\newcommand{\Qb}{\mathbb{Q}}
\newcommand{\Zb}{\mathbb{Z}}
  \newclass{\FNEXP}{FNEXP}
 \newclass{\NEXPSPACE}{NEXPSPACE}
 \newclass{\IIEXP}{2EXP}
 \newclass{\FIIEXP}{F2EXP}
\newcommand{\DQBF}{\mathrm{DQBF}}
\newtheorem{sublemma}[theorem]{Sublemma}
\crefname{slem}{Sublemma}{Sublemmas}
\newcommand{\val}{\mathsf{val}}
\newcounter{quest}[section]
\newcounter{conjec}[section]
\newenvironment{conjec}[1][]{%
    \stepcounter{conjec}%
    \ifstrempty{#1}%
    {\mdfsetup{%
        frametitle={%
            \tikz[baseline=(current bounding box.east),outer sep=0pt]
            \node[anchor=east,rectangle,fill=red!20]
            {\strut};}}
    }%
    {\mdfsetup{%
        frametitle={%
            \tikz[baseline=(current bounding box.east),outer sep=0pt]
            \node[anchor=east,rectangle,fill=red!20]
            {\strut #1};}}%
    }
    \mdfsetup{
        innertopmargin=0pt,
        linecolor=red!20,
        linewidth=2pt,
        topline=true,
        frametitleaboveskip=\dimexpr-\ht\strutbox\relax
    }
    \begin{mdframed}[]\relax%
}{\end{mdframed}}
\begin{document}

\maketitle

\begin{abstract}
 Consider a 4-player version of Matching Pennies where a team of three players competes against the Devil. Each player simultaneously says ``Heads'' or ``Tails''.  The team wins if all four choices match; otherwise the Devil wins. If all team players randomise independently, they win with probability $1/8$; if all players share a common source of randomness, they win with probability $1/2$.
What happens when \textit{each pair} of team players shares a source of randomness?
Can the team do better than win with probability 1/4? The surprising (and nontrivial) answer is yes!

We introduce Dicey Games, a formal framework motivated by the study of distributed systems with shared sources of randomness (of which the above example is a specific instance).
We characterise the existence, representation and computational complexity of optimal strategies in Dicey Games, and we study the problem of allocating limited sources of randomness optimally within a team.
\end{abstract}

\section{Introduction}

Concurrent games, and especially two-player zero-sum concurrent games, provide natural models of limited information, distributed decision-making, and simultaneous interaction. 

One example of concurrent games is \emph{matching pennies}. Two players, whom we refer to as Ada and the Devil, simultaneously declare either ``Heads'' or ``Tails''. Ada wins if both her and the Devil's declaration agree, that is, they both say Heads or they both say Tails, while the Devil prefers disagreement.
This is a zero-sum concurrent game, and the best possible strategy for Ada ensures her victory with probability $1/2$.
For two-player zero sum concurrent games, the
\emph{value} of the game is well defined, since the $\max\min$
value of the game (the largest payoff that Ada can guarantee) is equal to the $\min\max$ value (the smallest payoff the Devil can force).

Now consider a version of this game with one additional player: Bertrand. 
Similar to the previous examples, each player simultaneously says one of either ``Heads'' or ``Tails''. Ada and Bertrand now form a team, and win if all three players declare the same action; the Devil wins otherwise.
Ada and Bertrand may communicate beforehand to agree on a strategy, but they do not observe each other’s private random choices at play time. 
In such a game, the  $\max\min$ value is in general not equal to the $\min\max$ value.

Indeed, if the Devil commits first to a given strategy, then Ada and Bertrand can ensure 
victory
with probability at least $1/2$ (by both deterministically choosing the action that the Devil chooses with the largest probability). The $\min\max$ value of this game is therefore $1/2$. Conversely, assume Ada and Bertrand fix their strategies first. Ada picks ``Heads'' and ``Tails'' with probability $p_A$ and $1-p_A$, respectively, while Bertrand picks ``Heads'' and ``Tails'' with probability $p_B$ and $1-p_B$, respectively.
Then, the Devil can deterministically choose either Heads or Tails to ensure that Ada and Bertrand win with probability at most $\min\{p_Ap_B, (1-p_A)(1-p_B)\}$.
This quantity is maximised when $p_A=p_B = 1/2$, ensuring that the $\max\min$ value is $1/4$.
In this paper, we focus on the $\max\min$ value. 

A natural extension introduces a third team member, Claude, who joins Ada and Bertrand in attempting to match pennies with the  Devil.
If each player uses an individual source of randomness,
the optimal strategy for each player is to choose independently and uniformly, giving a $\max\min$ value of $1/8$.

Shared randomness, however, changes the picture.
We define a shared source of randomness (or \emph{die}) as a random variable ranging uniformly over the interval $[0,1]$, and to which only a given set of players (always excluding the Devil) have access (note that every continuous distribution can be mapped to the uniform one).
Suppose first that  Ada and Bertrand share a die.
Then, the team can, again, ensure victory with probability $1/4$, since Ada and Bertrand's outcomes can always be ensured to match.
Similarly, if all three team players share a die, the situation collapses to the two-player matching-pennies case (Ada vs. Devil) and the $\max\min$ value becomes $1/2$.

We now pose an illustrative question that motivates the broader class of games that we call \emph{dicey games}, addressed in this work.


\vspace{3mm}
\begin{conjec}[Triangular matching pennies with the Devil
]
What is the optimal strategy---and corresponding $\max\min$ value---for achieving an all-Heads or all-Tails outcome in the matching-pennies game played by Ada, Bertrand, and Claude against the Devil, under the assumption that Ada and Bertrand share a die, Bertrand and Claude share another, and Claude and Ada share a third?
\end{conjec}
\vspace{0.2cm}
Suppose, to begin with, that Ada and Bertrand rely solely on their shared die,
while Claude bases his action only on one of the other two dice available to him. In this restricted setting, we have already observed that the team can guarantee a success probability of at most $1/4$.
At this point we invite the reader to reflect: can the trio do better than $1/4$ when all three pairwise shared sources of randomness are available simultaneously? 

Since each player has access to two independent shared random variables, a strategy for any given player may be viewed as a measurable function $[0,1]^2 \to \{\H,\T\}$. Equivalently, one may visualise such a strategy as a partition of the unit square into two regions corresponding to ``Heads'' and ``Tails''.

A natural attempt a reader might consider is the following: declare Heads whenever the sum of the two shared random values is at most~$1$, and Tails otherwise. This strategy is depicted in \cref{fig:tetrahedron}. The probability that all three players simultaneously declare Heads (or all declare Tails) then corresponds to the volume of the red (resp blue) tetrahedral region shown in \cref{fig:TetrahedralHHH}, which is exactly $1/6$. Somewhat disappointingly, this is less than the value $1/4$
found earlier---but still better than if they randomise independently.

\begin{figure}
    \centering
    \begin{subfigure}[t]{0.25\textwidth}
    \centering
	\begin{tikzpicture}[scale=0.18]  
        \draw[->] (0,0) -- (0,10);
        \draw[->] (0,0) -- (10,0);

        \draw (10,0) node[below right] {$x$};
    	\draw (0,10) node[above left] {$y$};

        \fill[red, opacity=0.4] (0,0) -- (9,0) -- (0,9) -- cycle;
        \fill[blue, opacity=0.4] (9,0) -- (0,9) -- (9,9) -- cycle;
        \draw[red] (2.5,2.5) node {$\H$};
        \draw[blue!70!black] (6.5,6.5) node {$\T$};
    \end{tikzpicture}
    \caption{Ada's strategy}
    \label{fig:TetrahedralAda}
    \end{subfigure}
       \begin{subfigure}[t]{0.25\textwidth}
    \centering
	\begin{tikzpicture}[scale=0.18]  
        \draw[->] (0,0) -- (0,10);
        \draw[->] (0,0) -- (10,0);

        \draw (10,0) node[below right] {$y$};
    	\draw (0,10) node[above left] {$z$};

        \fill[red, opacity=0.4] (0,0) -- (9,0) -- (0,9) -- cycle;
        \fill[blue, opacity=0.4] (9,0) -- (0,9) -- (9,9) -- cycle;
        \draw[red] (2.5,2.5) node {$\H$};
        \draw[blue!70!black] (6.5,6.5) node {$\T$};
    \end{tikzpicture}
    \caption{Bertrand's strategy}
    \label{fig:TetrahedralBertrand}
    \end{subfigure}
       \begin{subfigure}[t]{0.25\textwidth}
    \centering
	\begin{tikzpicture}[scale=0.18]  
        \draw[->] (0,0) -- (0,10);
        \draw[->] (0,0) -- (10,0);

        \draw (10,0) node[below right] {$z$};
    	\draw (0,10) node[above left] {$x$};

        \fill[red, opacity=0.4] (0,0) -- (9,0) -- (0,9) -- cycle;
        \fill[blue, opacity=0.4] (9,0) -- (0,9) -- (9,9) -- cycle;
        \draw[red] (2.5,2.5) node {$\H$};
        \draw[blue!70!black] (6.5,6.5) node {$\T$};
    \end{tikzpicture}
    \caption{Claude's strategy}
    \label{fig:TetrahedralClaude}
    \end{subfigure}
    \begin{subfigure}[t]{0.35\textwidth}
    \centering
    \vspace{0pt}
	\begin{tikzpicture}[scale=1.8, line join=round]  
\newcommand{\tetrahe}[5]{%
    \fill[#5, opacity=0.4] #1 -- #2 -- #3 -- cycle;
    \fill[#5, opacity=0.4] #1 -- #2 -- #4 -- cycle;
    \fill[#5, opacity=0.4] #1 -- #3 -- #4 -- cycle;
    \fill[#5, opacity=0.4] #2 -- #3 -- #4 -- cycle;

    \draw[thick] #1 -- #2 -- #3 -- cycle;
    \draw[thick] #1 -- #4;
    \draw[thick] #2 -- #4;
    \draw[thick] #3 -- #4;
}

        \draw[->] (0,0,0) -- (1.2,0,0) node[right] {$x$};
        \draw[->] (0,0,0) -- (0,1.2,0) node[above] {$y$};
        \draw[->] (0,0,0) -- (0,0,1.2) node[left] {$z$};

 \draw[dashed] (0,0,0) -- (1,0,0) -- (1,1,0) -- (0,1,0) -- cycle;

 \draw[dashed] (0,0,1) -- (1,0,1) -- (1,1,1) -- (0,1,1) -- cycle;

\draw[dashed] (0,0,0) -- (0,0,1);
\draw[dashed] (1,0,0) -- (1,0,1);
\draw[dashed] (1,1,0) -- (1,1,1);
\draw[dashed] (0,1,0) -- (0,1,1);

\tetrahe{(0,0,0)}{(1,0,0)}{(0,1,0)}{(0,0,1)}{red}

\tetrahe{(1,1,1)}{(0,1,1)}{(1,0,1)}{(1,1,0)}{blue}
    \end{tikzpicture}
    \caption{The sets $\H\H\H$ and $\T\T\T$}
    \label{fig:TetrahedralHHH}
    \end{subfigure}
    \caption{A strategy to win with probability 1/6}
    \label{fig:tetrahedron}
\end{figure}

Readers who prefer to discover the optimal construction on their own may wish to pause here, as  
we will now reveal the solution.

The optimal collective strategy turns out to possess a simple geometric form.  Each player declares Heads if the value of both their dice are at least $\alpha$, 
for some fixed value
$\alpha \in [0,1]$, and Tails otherwise. The probability of an all-Heads (resp. all-Tails) outcome in such a case is equal to the volume of the red (resp. blue) solid in \cref{fig:optimalHHH}.
The value of $\alpha$ that optimises the $\max\min$ value is the one that makes that the volume of the red cube is exactly equal to the volume of the blue solid. Solving the cubic polynomial that captures that equality, we can conclude that the team’s winning  probability under this strategy is approximately $0.2781$.
Proving that this is optimal is however more challenging. 

\subsection{Our results}
In this work, we investigate a class of concurrent games called \emph{dicey games}. These are matrix games played by a team of players against a single opponent, the Devil, where the team may coordinate its actions through a finite collection of shared random sources, or \emph{dice}. Each die is observable to a designated subset of the team, and the pattern of observability determines the correlation structure available to the players. As in standard concurrent games, the payoff to the team is specified for every joint action of the team and the Devil. Formally, the dice structure is given by a set of dice, and a map assigning to each die the subset of players who may inspect its outcome. For example, in the triangular matching pennies game, we have a set of dice $\{D_1,D_2,D_3\}$, together with the assignment $D_1\mapsto \{\text{Ada, Bertrand}\}$, $D_2\mapsto\{\text{Bertrand,  Claude}\}$, $D_3\mapsto \{\text{Claude, Ada}\}$.
 Our work is motivated by the central question: 
 \begin{question}\label{question1}
Which collective strategies are sufficient to reach the $\max\min$ value in dicey games?
 \end{question}
In this framework, we prove two main results.

Our first result (\Cref{thm:itsOKtobestraight}) is a normal form for an optimal collective strategy. We prove that if  the team 
has a collective strategy ensuring expected payoff at least $t$, then it has a \emph{$k$-grid} one, where $k$ is the number of actions available to the Devil. 
To explain the notion of $k$-grid strategy, first recall that a strategy of a player with access to $\ell$ different dice is a map from the space $[0,1]^\ell$ to the set of actions available to that player. Equivalently, it can be seen as a labelled partition of the hypercube $[0,1]^\ell$, where each part 
is labelled by an action.
Similarly, if $d$ is the total number of dice, a collective strategy induces a partition of the unit cube $[0,1]^d$, labelled with tuples of actions.
Geometrically, a $k$-grid collective strategy partitions the 
unit cube into 
sub-cubes,
obtained by cutting each coordinate axis into $k$ intervals.

Describing such a collective strategy requires specifying the $k-1$ thresholds at which the axis is cut, for each axis.
It is then natural to wonder how these values can be represented: already in triangular matching pennies, both the optimal strategy and the resulting value bring us to the realm of irrational numbers. 

Our second result (\Cref{thm:exponentialbitsize}) shows that in every dicey game, there is an optimal collective strategy, with a finite representation whose bit-size grows exponentially (but not more) with the size of the game.
Establishing this bound  is non-trivial, requiring a novel blend of techniques from computational algebraic geometry (results by Collins~\cite{Col74} and Basu, Pollack, and Roy~\cite{BPR94}) and non-linear optimisation (Fritz John conditions~\cite{John48}). The resulting machinery might prove useful for attacking similar problems in the future.

Finally, we use those theorems to prove several complexity results (\Cref{thm:valuecomputation,thm:threshold_complexity,thm:allocate_complexity}) about natural problems related to dicey games.
Those include (1) deciding whether there exists some collective strategy to achieve at least a given threshold,  (2) computing the $\max\min$ value of a given dicey game, and (3) the corresponding threshold and value problems when the shared randomness structure is not fixed in advance and the team must decide how to allocate a fixed collection of dice among the players. 
The threshold problems require insights that we develop in \cref{sec:example}, while the computational versions require further mathematical work developed in \cref{sec:optimal}.

\begin{figure}[h]
    \centering
    \begin{subfigure}[t]{0.22\textwidth}
    \centering
	\begin{tikzpicture}[scale=0.18]  
        \draw[->] (0,0) -- (0,10);
        \draw[->] (0,0) -- (10,0);

        \draw (10,0) node[below right] {$x$};
    	\draw (0,10) node[above left] {$y$};

        \fill[red, opacity=0.4] (3,3) -- (3,9) -- (9,9) -- (9,3) -- cycle;
        \fill[blue, opacity=0.4] (0,0) -- (9,0) -- (9,3) -- (3,3) -- (3,9) -- (0,9) -- cycle;
        \draw[red] (6,6) node {$\H$};
        \draw[blue!70!black] (1.5,1.5) node {$\T$};
    \end{tikzpicture}
    \caption{Ada's strategy}
    \label{fig:optimalAda}
    \end{subfigure}
    \begin{subfigure}[t]{0.22\textwidth}
    \centering
	\begin{tikzpicture}[scale=0.18]  
        \draw[->] (0,0) -- (0,10);
        \draw[->] (0,0) -- (10,0);

        \draw (10,0) node[below right] {$z$};
    	\draw (0,10) node[above left] {$y$};

        \fill[red, opacity=0.4] (3,3) -- (3,9) -- (9,9) -- (9,3) -- cycle;
        \fill[blue, opacity=0.4] (0,0) -- (9,0) -- (9,3) -- (3,3) -- (3,9) -- (0,9) -- cycle;
        \draw[red] (6,6) node {$\H$};
        \draw[blue!70!black] (1.5,1.5) node {$\T$};
    \end{tikzpicture}
    \caption{Bertrand's strategy}
    \label{fig:optimalBertrand}
    \end{subfigure}
    \begin{subfigure}[t]{0.22\textwidth}
    \centering
	\begin{tikzpicture}[scale=0.18]  
        \draw[->] (0,0) -- (0,10);
        \draw[->] (0,0) -- (10,0);

        \draw (10,0) node[below right] {$x$};
    	\draw (0,10) node[above left] {$z$};

        \fill[red, opacity=0.4] (3,3) -- (3,9) -- (9,9) -- (9,3) -- cycle;
        \fill[blue, opacity=0.4] (0,0) -- (9,0) -- (9,3) -- (3,3) -- (3,9) -- (0,9) -- cycle;
        \draw[red] (6,6) node {$\H$};
        \draw[blue!70!black] (1.5,1.5) node {$\T$};
    \end{tikzpicture}
    \caption{Claude's strategy}
    \label{fig:optimalClaude}
    \end{subfigure}
    \begin{subfigure}[t]{0.25\textwidth}
    \centering
	\begin{tikzpicture}[scale=0.2, line join=round] 
        \draw[->] (0,0,0) -- (10,0,0);
    	\draw[->] (0,0,0) -- (0,10,0);
    	\draw[->] (0,0,0) -- (0,0,10);

        \draw (10,0,0) node[below right] {$x$};
    	\draw (0,10,0) node[above left] {$y$};
    	\draw (0,0,10) node[left] {$z$};

        \fill[blue, opacity=0.4] (0,9,0) -- (3,9,0) -- (3,3,0) -- (9,3,0) -- (9,0,0) -- (0,0,0) -- cycle;
        \draw[thick] (0,9,0) -- (3,9,0) -- (3,3,0) -- (9,3,0) -- (9,0,0) -- (0,0,0) -- cycle; 

        \fill[blue, opacity=0.4] (0,0,9) -- (3,0,9) -- (3,0,3) -- (9,0,3) -- (9,0,0) -- (0,0,0) -- cycle;
        \draw[thick] (0,0,9) -- (3,0,9) -- (3,0,3) -- (9,0,3) -- (9,0,0) -- (0,0,0) -- cycle; 

        \fill[blue, opacity=0.4] (0,0,9) -- (0,3,9) -- (0,3,3) -- (0,9,3) -- (0,9,0) -- (0,0,0) -- cycle;
        \draw[thick] (0,0,9) -- (0,3,9) -- (0,3,3) -- (0,9,3) -- (0,9,0) -- (0,0,0) -- cycle; 
    
        \fill[blue, opacity=0.4] (0,9,0) -- (3,9,0) -- (3,9,3) -- (0,9,3) -- cycle;
        \draw[thick] (0,9,0) -- (3,9,0) -- (3,9,3) -- (0,9,3) -- cycle; 

        \fill[blue, opacity=0.4] (9,0,0) -- (9,3,0) -- (9,3,3) -- (9,0,3) -- cycle;
        \draw[thick] (9,0,0) -- (9,3,0) -- (9,3,3) -- (9,0,3) -- cycle; 

        \fill[blue, opacity=0.4] (0,0,9) -- (3,0,9) -- (3,3,9) -- (0,3,9) -- cycle;
        \draw[thick] (0,0,9) -- (3,0,9) -- (3,3,9) -- (0,3,9) -- cycle; 

        \fill[blue, opacity=0.4] (0,3,3) -- (3,3,3) -- (3,9,3) -- (0,9,3) -- cycle;
        \fill[blue, opacity=0.4] (0,3,3) -- (3,3,3) -- (3,3,9) -- (0,3,9) -- cycle;
        \fill[blue, opacity=0.4] (3,0,3) -- (3,3,3) -- (3,3,9) -- (3,0,9) -- cycle;
        \fill[blue, opacity=0.4] (3,3,0) -- (3,3,3) -- (3,9,3) -- (3,9,0) -- cycle;
        \fill[blue, opacity=0.4] (3,3,0) -- (3,3,3) -- (9,3,3) -- (9,3,0) -- cycle;
        \fill[blue, opacity=0.4] (3,0,3) -- (3,3,3) -- (9,3,3) -- (9,0,3) -- cycle;

        \newcommand{\rcuboid}[6]{
          \pgfmathsetmacro{\x}{#1}
          \pgfmathsetmacro{\y}{#2}
          \pgfmathsetmacro{\z}{#3}
          \pgfmathsetmacro{\xa}{\x+#4}
          \pgfmathsetmacro{\ya}{\y+#5}
          \pgfmathsetmacro{\za}{\z+#6}
        
          \fill[red, opacity=0.4] (\x,\y,\z) -- (\xa,\y,\z) -- (\xa,\ya,\z) -- (\x,\ya,\z) -- cycle; 
          \fill[red, opacity=0.4] (\x,\y,\z) -- (\xa,\y,\z) -- (\xa,\y,\za) -- (\x,\y,\za) -- cycle; 
          \fill[red, opacity=0.4] (\x,\y,\z) -- (\x,\ya,\z) -- (\x,\ya,\za) -- (\x,\y,\za) -- cycle; 
          \fill[red, opacity=0.4] (\xa,\ya,\za) -- (\x,\ya,\za) -- (\x,\y,\za) -- (\xa,\y,\za) -- cycle; 
          \fill[red, opacity=0.4] (\xa,\ya,\za) -- (\xa,\y,\za) -- (\xa,\y,\z) -- (\xa,\ya,\z) -- cycle; 
          \fill[red, opacity=0.4] (\xa,\ya,\za) -- (\xa,\ya,\z) -- (\x,\ya,\z) -- (\x,\ya,\za) -- cycle; 
        
          \draw[thick] (\x,\y,\z) -- (\xa,\y,\z) -- (\xa,\ya,\z) -- (\x,\ya,\z) -- cycle; 
          \draw[thick] (\x,\y,\za) -- (\xa,\y,\za) -- (\xa,\ya,\za) -- (\x,\ya,\za) -- cycle; 
          \draw[thick] (\x,\y,\z) -- (\x,\y,\za);
          \draw[thick] (\xa,\y,\z) -- (\xa,\y,\za);
          \draw[thick] (\xa,\ya,\z) -- (\xa,\ya,\za);
          \draw[thick] (\x,\ya,\z) -- (\x,\ya,\za);
        }

        \rcuboid{3}{3}{3}{6}{6}{6}
    \end{tikzpicture}
    \caption{The sets $\H\H\H$ and $\T\T\T$}
    \label{fig:optimalHHH}
    \end{subfigure}
	\caption{An optimal collective strategy} \label{fig:optimalstrategy}
\end{figure}

\subsection{Motivation}
In the context of zero-sum games, dicey games represent a natural intermediate notion between the independent randomisation studied by Cournot~\cite{Cournot1838}, von Neumann and Morgenstern~\cite{vNM44}, and Nash~\cite{Nas50}, and the shared randomness inherent to correlated equilibria~\cite{Aum1974,AD74}. 
In correlated equilibria, all players can coordinate using one common source of randomness, which makes it a special case of ours.
Universal access to shared randomness is rarely realistic. As Goldreich notes in his book on pseudorandomness~\cite{Gol10}:
    ``randomness is not an inherent property of an object, but is rather subjective to the observer.'' 
This motivates 
modelling the topology of shared randomness as a part of the input, where correlation is limited by the specific sources an agent can observe.

Similar to how such concurrent games have found its uses in economics, formal verification~\cite{Umm10,kwiatkowska2018equilibria,AGHHKNPSW21}, economics~\cite{EconNash99}, social choice~\cite{HOWARD1992142,socialchoiceself2002} and even biology~\cite{smith1982evolution}, we foresee that our extension also models several settings 
in these fields. We highlight some areas and results which support this claim. 

\subparagraph*{Computer Science.}
Unrestricted access to shared randomness can degrade privacy (for example, in systems that are involved in secret sharing)~\cite{Bellare2020}, it may bias system behaviour~\cite{Syta2017}, create unintended communication channels~\cite{Simmons1984}, or violate correctness assumptions of distributed protocols~\cite{Gilad2017}. 
By modelling interactive and distributed systems that need to be verified under restricted randomness as games, we can enforce independence where needed, prevent accidental information leakage, maintain fairness and privacy guarantees, and model realistic constraints. 
Concurrent game models for computer science have often used the assumption that players belonging to a same coalition can resort to a shared source of randomness, but some recent work~\cite{DBLP:journals/corr/abs-2601-18303} has also considered the case of individual randomisation for games played on graphs.

\subparagraph*{Economics.}
Agents’ strategies in financial settings are more accurately modelled by different topologies of shared randomness~\cite{bergemann2016bayes}. 
Firms, consumers, and institutions often face legal, informational, or organizational constraints that limit their ability to randomise~\cite{kuhn2001fighting, vives2011information}. Such restrictions prevent tacit coordination and collusion, preserve decentralised decision-making, and sustain incentive compatibility~\cite{hayek1945use}. 

\subparagraph*{Social Choice.}
Shared randomness plays a central role in resolving ties~\cite{brandt2019collective}, aggregating preferences~\cite{brandl2016consistent}, and ensuring fairness~\cite{budish2013designing}, yet unrestricted access to common random signals can alter collective outcomes~\cite{Aum74b}. 
Correlation structure among agents is not merely a technical detail but a determinant of welfare; for instance, shared information sources can coordinate voters or agents onto specific collective outcomes that are otherwise unstable or suboptimal~\cite{alonso2016persuading,bergemann2019information}. %
Conversely, restricting shared randomness, such as in cases where only certain groups or institutions observe particular random signals, reflects realistic features of voting systems, committees, and media-consumption. 


\subsection{Structure of the paper}
We introduce the necessary definitions in \Cref{sec:definitions}.
In \cref{sec:example}, we start with a pictorial treatment of the triangular matching-pennies game, 
and then generalise our reasoning to prove the optimality of $k$-grid strategies.
In \cref{sec:optimal}, we analyse the descriptive complexity of optimal collective strategies, showing that both the optimal $\max\min$ and the strategies achieving it require at most exponentially many bits to be represented.
In \Cref{sec:algos}, we prove several complexity results.
In \Cref{sec:moreexamples}, we conclude with some examples and  conjectures on matching pennies games.

As is customary, the more patient proofs are deferred to the appendix. We also collect in Appendix~\ref{app:hammers} a small toolbox of results from algebraic geometry and nonlinear optimisation. 
%

\section{Definitions} \label{sec:definitions}
Before rolling any dice,
we pause to fix notation and define the formal setup of the games under consideration.
Given an integer $n$, we write $[n]$ to denote the set $\{1, \dots, n\}$.

For a set $X$, and an indexing set $I$ (assumed to be clear from the context), when we have defined an element $x_i \in X$ for every $i \in I$, we usually use the bar notation $\bx$ for the tuple $(x_i)_{i \in I} \in X^I$.
Conversely, when we introduce a tuple $\bx$, the notation $x_i$ refers to the element of index $i \in I$.
An incomplete tuple $(x_i)_{i \in I \setminus \{j\}}$ is written $\bx_{-j}$, and the notation $(\bx_{-j}, y)$ denotes the tuple $\bz$ where $z_i = x_i$ for each $i \in I \setminus \{j\}$, and $z_j = y$.
Unfortunately, we sometimes have to use the double-bar $\bbx$ to denote tuples of tuples.

\subsection{Probabilities}

Given a set of outcomes $\Universe$ and a probability measure $\prob$ over $\Universe$, let $X$ be a random variable over $\Universe$, i.e., a mapping $X: \Universe \to \Rb$.
We write $\Eb^\prob[X]$, or simply $\Eb[X]$, for the expected value of $X$, when defined.
Given a finite set $S$, a \emph{probability distribution} over $S$ is a mapping $d: S \to [0,1]$ satisfying the equality $\sum_{x \in S} d(x) = 1$.

\subsection{Games}
Throughout this work, the word \emph{game} refers to matrix games played by a team of players against one opponent.

\begin{definition}[Game]
    A \emph{game} is a tuple $\Game = (\Players, \opp, (\Actions_\player)_{\player \in \Players \cup \{\opp\}}, \payoff)$,
    that consists of:
    \begin{itemize}
        \item a finite set $\Players \cup \{\opp\}$ of \emph{players}, where the players $\player \in \Players$ are called \emph{team players} and the player $\opp \not\in \Players$ is called \emph{Devil};

        \item for each player $\player \in \Players \cup \{\opp\}$, a set $\Actions_\player$ of \emph{actions};

        \item a mapping $\payoff: \prod_{\player \in \Players \cup \{\opp\}} \Actions_\player \to \Zb$ called \emph{payoff function}, that indicates the reward that the team gets when a given tuple of actions is selected.
    \end{itemize}
\end{definition}

A game where all payoffs are either $0$ or $1$ is a \emph{Boolean} game.
Then, the team \emph{wins} when it gets payoff $1$, and \emph{loses} otherwise.

\subsection{Dice}

Along with the game structure, we assume that the players are given access to some \emph{dice}, that is, to external sources of randomness, which may be shared by several players.

\begin{definition}[Dice structure, dicey game]
    A \emph{dice pack} is a pair $\Pack = (\Dice, \acc)$, where $\Dice$ is a set of \emph{dice}, i.e., of independent random variables following the uniform distribution over $[0,1]$, and where the mapping $\acc: \Dice \to \Nb$ maps each die to its \emph{accessibility}---intuitively, the maximal number of players that can access it.

    A \emph{dice structure}, for a given team $\Players$, is a pair $\Struct = (\Dice, \Acc)$, where $\Dice$ is a set of dice, and the mapping $\Acc: \Dice \to 2^{\Players}$ maps each die to the set of players who have \emph{access} to it.
    The dice structure $\Struct$ \emph{matches} the dice pack $\Pack$ if for each die $\Die \in \Dice$, we have $|\Acc(\Die)| \leq \acc(\Die)$.

    A \emph{dicey game} is a pair $\DGame = (\Game, \Struct)$, where $\Game$ is a game, and $\Struct$ is a dice structure for the team players in $\Game$.
\end{definition}

When referring to a game $\Game$, a dice pack $\Pack$, a dice structure $\Struct$, or a dicey game $\DGame$, we will often use the notations $\Pi, \mu, \Delta, \Acc$, etc., given above without necessarily recalling them.
We then write $\Dice_\player = \{\Die \in \Dice \mid \player \in \Acc(\Die)\}$ for the set of dice that player $\player$ can access.
Note that in our definition, the Devil doesn't have access to any die: this is because such a die would be useless for the team players, since the purpose of randomisation is to hide information from him\footnote{For a more rigorous argument: assume that the Devil has access to a given die $\Die$. Consider a collective strategy $\bstrat$ (see \Cref{def:strategy} for the formal definition). Then, consider a value $x$ such that the team's expected payoff, following $\bstrat$, assuming that the Devil responds optimally, and knowing that the die $\Die$ equals $x$, is maximised. Then, the collective strategy that ignores $\Die$ and play as if its outcome was always $x$ is at least as good as $\bstrat$.}.
For convenience, we define $\Dice_\opp$ as equal to the empty set.

We call \emph{roll of the die $\Die$} an outcome of the random variable $\Die$.
A \emph{roll} (with no mention of a specific die) is a tuple $\broll \in [0, 1]^\Dice$, and a \emph{roll for player $\player$} is a tuple $\broll \in [0, 1]^{\Dice_\player}$.

\subsection{Strategies, and strategy profiles}

Strategies are defined so that the players are not allowed to randomise their actions by themselves, but can do so by using the dice they have access to.

\begin{definition}[Strategy, collective strategy, strategy profile]\label{def:strategy}
    A \emph{strategy} for player $\everyplayer$ in the dicey game $\DGame$ is a mapping $\strat_\player: [0, 1]^{\Dice_\player} \to \Actions_\player$.
    A \emph{collective strategy} is a tuple $\bstrat = (\strat_\player)_{\player \in \Players}$, where each $\strat_\player$ is a strategy for player $\player$.
    A \emph{strategy profile} is a tuple $(\bstrat, \strat_\opp) = (\strat_\player)_{\everyplayer}$.
\end{definition}

Players that have access to no die can play only deterministic strategies.
In particular, this is always the case for the Devil.
Since the Devil is always assumed to know the team's strategy, and to define his strategy accordingly, this is not a restriction: he can choose the action that minimises the team's expected payoff.
For convenience, we sometimes also use an action $\action \in \Actions_\dev$ to represent the strategy that deterministically picks the action $\action$.
We also sometimes write $\bstrat(\broll)$, for $\roll \in [0, 1]^{\Delta}$, for the tuple of actions that is played when the roll is $\broll$, under the collective strategy~$\bstrat$.

A strategy profile $(\bstrat, \strat_\opp)$ defines a probability distribution over the set $\prod_\everyplayer \Actions_\player$, that we write as $\prob_{\bstrat, \strat_\opp}$.
We write $\Eb(\bstrat, \strat_\opp)$ for the expected payoff $\Eb^{\prob_{\bstrat, \strat_\opp}}[\payoff]$.
We define the \emph{value} of a collective strategy $\bstrat$ as the quantity
$\val(\bstrat) = \min_{\action \in \Actions_\opp} \Eb(\bstrat, \action).$
A collective strategy is \emph{optimal} if no strategy has a larger value.


\subsection{Grid strategies}

A strategy for player $\player$ can be seen as a partition of the roll space $[0,1]^{\Dice_\player}$ into regions, each of which corresponds to an action available to player $\player$.
In the next sections, we will show that strategies such that this partition is defined with straight lines are sufficient.

\begin{definition}[Piecewise constant strategy, grid collective strategy]
    Let $\bstrat$ be a collective strategy, and let $\Die$ be a die.
    We say that $\bstrat$ is \emph{$k$-piecewise constant for the die $\Die$} if, for every roll $\broll_{-\Die} \in [0, 1]^{\Dice \setminus \{\Die\}}$, the function $\roll_\Die \mapsto \bstrat(\broll_{-\Die}, \roll_\Die)$ is piecewise constant with at most $k$ pieces.
    The collective strategy $\bstrat$ is \emph{$k$-grid} if it is $k$-piecewise constant for each die $\Die$.
\end{definition}

For examples, the reader may refer to the last dicey game described in the introduction, and to the strategies that have been presented for this game.
The collective strategy depicted by \Cref{fig:tetrahedron}, for instance, is not $k$-grid for any $k$.
The one illustrated by \Cref{fig:optimalstrategy}, for comparison, is $2$-grid.
In \Cref{fig:sigma0}, finally, the reader will find a collective strategy that is $3$-grid, but not $2$-grid.

\begin{figure}[h]
    \centering
    \begin{subfigure}[t]{0.25\textwidth}
    \centering
	\begin{tikzpicture}[scale=0.18]  
        \draw[->] (0,0) -- (0,10);
        \draw[->] (0,0) -- (10,0);

        \draw (10,0) node[below right] {$x$};
    	\draw (0,10) node[above left] {$y$};

        \fill[red, opacity=0.4] (0,0) -- (3,0) -- (3,3) -- (0,3) -- cycle;
        \fill[red, opacity=0.4] (6,0) -- (9,0) -- (9,3) -- (6,3) -- cycle;
        \fill[red, opacity=0.4] (0,6) -- (3,6) -- (3,9) -- (0,9) -- cycle;
        \fill[red, opacity=0.4] (6,6) -- (9,6) -- (9,9) -- (6,9) -- cycle;
        \fill[blue, opacity=0.4] (3,3) -- (6,3) -- (6,6) -- (3,6) -- cycle;
        \fill[blue, opacity=0.4] (0,3) -- (3,3) -- (3,6) -- (0,6) -- cycle;
        \fill[blue, opacity=0.4] (6,3) -- (9,3) -- (9,6) -- (6,6) -- cycle;
        \fill[blue, opacity=0.4] (3,6) -- (6,6) -- (6,9) -- (3,9) -- cycle;
        \fill[blue, opacity=0.4] (3,0) -- (6,0) -- (6,3) -- (3,3) -- cycle;

        \draw[red] (1.5,1.5) node {$\H$};
        \draw[red] (7.5,7.5) node {$\H$};
        \draw[red] (7.5,1.5) node {$\H$};
        \draw[red] (1.5,7.5) node {$\H$};
        \draw[blue!70!black] (4.5,4.5) node {$\T$};
    \end{tikzpicture}
    \caption{Ada's strategy}
    \label{fig:sigma0Ada}
    \end{subfigure}
    \begin{subfigure}[t]{0.25\textwidth}
    \centering
	\begin{tikzpicture}[scale=0.18]  
        \draw[->] (0,0) -- (0,10);
        \draw[->] (0,0) -- (10,0);

        \draw (10,0) node[below right] {$z$};
    	\draw (0,10) node[above left] {$y$};

        \fill[red, opacity=0.4] (0,0) -- (3,0) -- (3,3) -- (0,3) -- cycle;
        \fill[red, opacity=0.4] (6,0) -- (9,0) -- (9,3) -- (6,3) -- cycle;
        \fill[red, opacity=0.4] (0,6) -- (3,6) -- (3,9) -- (0,9) -- cycle;
        \fill[red, opacity=0.4] (6,6) -- (9,6) -- (9,9) -- (6,9) -- cycle;
        \fill[blue, opacity=0.4] (3,3) -- (6,3) -- (6,6) -- (3,6) -- cycle;
        \fill[blue, opacity=0.4] (0,3) -- (3,3) -- (3,6) -- (0,6) -- cycle;
        \fill[blue, opacity=0.4] (6,3) -- (9,3) -- (9,6) -- (6,6) -- cycle;
        \fill[blue, opacity=0.4] (3,6) -- (6,6) -- (6,9) -- (3,9) -- cycle;
        \fill[blue, opacity=0.4] (3,0) -- (6,0) -- (6,3) -- (3,3) -- cycle;

        \draw[red] (1.5,1.5) node {$\H$};
        \draw[red] (7.5,7.5) node {$\H$};
        \draw[red] (7.5,1.5) node {$\H$};
        \draw[red] (1.5,7.5) node {$\H$};
        \draw[blue!70!black] (4.5,4.5) node {$\T$};
    \end{tikzpicture}
    \caption{Bertrand's strategy}
    \label{fig:sigma0Bertrand}
    \end{subfigure}
    \begin{subfigure}[t]{0.25\textwidth}
    \centering
	\begin{tikzpicture}[scale=0.18]  
        \draw[->] (0,0) -- (0,10);
        \draw[->] (0,0) -- (10,0);

        \draw (10,0) node[below right] {$x$};
    	\draw (0,10) node[above left] {$z$};

        \fill[red, opacity=0.4] (0,0) -- (3,0) -- (3,3) -- (0,3) -- cycle;
        \fill[red, opacity=0.4] (6,0) -- (9,0) -- (9,3) -- (6,3) -- cycle;
        \fill[red, opacity=0.4] (0,6) -- (3,6) -- (3,9) -- (0,9) -- cycle;
        \fill[red, opacity=0.4] (6,6) -- (9,6) -- (9,9) -- (6,9) -- cycle;
        \fill[blue, opacity=0.4] (3,3) -- (6,3) -- (6,6) -- (3,6) -- cycle;
        \fill[blue, opacity=0.4] (0,3) -- (3,3) -- (3,6) -- (0,6) -- cycle;
        \fill[blue, opacity=0.4] (6,3) -- (9,3) -- (9,6) -- (6,6) -- cycle;
        \fill[blue, opacity=0.4] (3,6) -- (6,6) -- (6,9) -- (3,9) -- cycle;
        \fill[blue, opacity=0.4] (3,0) -- (6,0) -- (6,3) -- (3,3) -- cycle;

        \draw[red] (1.5,1.5) node {$\H$};
        \draw[red] (7.5,7.5) node {$\H$};
        \draw[red] (7.5,1.5) node {$\H$};
        \draw[red] (1.5,7.5) node {$\H$};
        \draw[blue!70!black] (4.5,4.5) node {$\T$};
    \end{tikzpicture}
    \caption{Claude's strategy}
    \label{fig:sigma0Claude}
    \end{subfigure}
    \begin{subfigure}[t]{0.35\textwidth}
    \centering
	\begin{tikzpicture}[scale=0.25, line join=round]  
        
        \def\s{3} 
        
        \newcommand{\rcube}[3]{%
          \pgfmathsetmacro{\x}{#1}
          \pgfmathsetmacro{\y}{#2}
          \pgfmathsetmacro{\z}{#3}
          \pgfmathsetmacro{\xa}{\x+\s}
          \pgfmathsetmacro{\ya}{\y+\s}
          \pgfmathsetmacro{\za}{\z+\s}
        
          \fill[red, opacity=0.4] (\x,\y,\z) -- (\xa,\y,\z) -- (\xa,\ya,\z) -- (\x,\ya,\z) -- cycle; 
          \fill[red, opacity=0.4] (\x,\y,\z) -- (\xa,\y,\z) -- (\xa,\y,\za) -- (\x,\y,\za) -- cycle; 
          \fill[red, opacity=0.4] (\x,\y,\z) -- (\x,\ya,\z) -- (\x,\ya,\za) -- (\x,\y,\za) -- cycle; 
          \fill[red, opacity=0.4] (\xa,\ya,\za) -- (\x,\ya,\za) -- (\x,\y,\za) -- (\xa,\y,\za) -- cycle; 
          \fill[red, opacity=0.4] (\xa,\ya,\za) -- (\xa,\y,\za) -- (\xa,\y,\z) -- (\xa,\ya,\z) -- cycle; 
          \fill[red, opacity=0.4] (\xa,\ya,\za) -- (\xa,\ya,\z) -- (\x,\ya,\z) -- (\x,\ya,\za) -- cycle; 
        
          \draw[thick] (\x,\y,\z) -- (\xa,\y,\z) -- (\xa,\ya,\z) -- (\x,\ya,\z) -- cycle; 
          \draw[thick] (\x,\y,\za) -- (\xa,\y,\za) -- (\xa,\ya,\za) -- (\x,\ya,\za) -- cycle; 
          \draw[thick] (\x,\y,\z) -- (\x,\y,\za);
          \draw[thick] (\xa,\y,\z) -- (\xa,\y,\za);
          \draw[thick] (\xa,\ya,\z) -- (\xa,\ya,\za);
          \draw[thick] (\x,\ya,\z) -- (\x,\ya,\za);
        }
        
        \newcommand{\gcube}[3]{%
          \pgfmathsetmacro{\x}{#1}
          \pgfmathsetmacro{\y}{#2}
          \pgfmathsetmacro{\z}{#3}
          \pgfmathsetmacro{\xa}{\x+\s}
          \pgfmathsetmacro{\ya}{\y+\s}
          \pgfmathsetmacro{\za}{\z+\s}
        
          \fill[blue, opacity=0.4] (\x,\y,\z) -- (\xa,\y,\z) -- (\xa,\ya,\z) -- (\x,\ya,\z) -- cycle; 
          \fill[blue, opacity=0.4] (\x,\y,\z) -- (\xa,\y,\z) -- (\xa,\y,\za) -- (\x,\y,\za) -- cycle; 
          \fill[blue, opacity=0.4] (\x,\y,\z) -- (\x,\ya,\z) -- (\x,\ya,\za) -- (\x,\y,\za) -- cycle; 
          \fill[blue, opacity=0.4] (\xa,\ya,\za) -- (\x,\ya,\za) -- (\x,\y,\za) -- (\xa,\y,\za) -- cycle; 
          \fill[blue, opacity=0.4] (\xa,\ya,\za) -- (\xa,\y,\za) -- (\xa,\y,\z) -- (\xa,\ya,\z) -- cycle; 
          \fill[blue, opacity=0.4] (\xa,\ya,\za) -- (\xa,\ya,\z) -- (\x,\ya,\z) -- (\x,\ya,\za) -- cycle; 
        
          \draw[thick] (\x,\y,\z) -- (\xa,\y,\z) -- (\xa,\ya,\z) -- (\x,\ya,\z) -- cycle; 
          \draw[thick] (\x,\y,\za) -- (\xa,\y,\za) -- (\xa,\ya,\za) -- (\x,\ya,\za) -- cycle; 
          \draw[thick] (\x,\y,\z) -- (\x,\y,\za);
          \draw[thick] (\xa,\y,\z) -- (\xa,\y,\za);
          \draw[thick] (\xa,\ya,\z) -- (\xa,\ya,\za);
          \draw[thick] (\x,\ya,\z) -- (\x,\ya,\za);
        }

        \draw[->] (0,0,0) -- (10,0,0);
    	\draw[->] (0,0,0) -- (0,10,0);
    	\draw[->] (0,0,0) -- (0,0,10);
    	
    	\draw (10,0,0) node[below right] {$x$};
    	\draw (0,10,0) node[above left] {$y$};
    	\draw (0,0,10) node[left] {$z$};

        \fill[black, opacity=0.3] (1,0,0)--(1,9,0)--(1,9,9)--(1,0,9);

        \rcube{0}{0}{0}
        \rcube{0}{0}{6}
        \rcube{0}{6}{0}
        \rcube{0}{6}{6}

        \draw[very thick, black, dashed] (1,0,0)--(1,3,0)--(1,3,3)--(1,0,3)--cycle;
        \draw[very thick, black, dashed] (1,6,0)--(1,9,0)--(1,9,3)--(1,6,3)--cycle;
        \draw[very thick, black, dashed] (1,0,6)--(1,3,6)--(1,3,9)--(1,0,9)--cycle;
        \draw[very thick, black, dashed] (1,6,6)--(1,9,6)--(1,9,9)--(1,6,9)--cycle;

        \fill[black, opacity=0.3] (5,0,0)--(5,9,0)--(5,9,6)--(5,0,6);

        \gcube{3}{3}{3}

        \fill[black, opacity=0.3] (5,0,6)--(5,9,6)--(5,9,9)--(5,0,9);

        \draw[very thick, black, dashed] (5,3,3)--(5,3,6)--(5,6,6)--(5,6,3)--cycle;

        \rcube{6}{0}{0}
        \rcube{6}{0}{6}
        \rcube{6}{6}{0}
        \rcube{6}{6}{6}
    \end{tikzpicture}
    \caption{The sets $\H\H\H$ and $\T\T\T$}
    \label{fig:sigma0HHH}
    \end{subfigure}
    \begin{subfigure}[t]{0.2\linewidth}
    \centering
	\begin{tikzpicture}[scale=0.1]  
        \draw[->] (0,0) -- (0,10);
        \draw[->] (0,0) -- (10,0);

        \draw (10,0) node[below right] {$z$};
    	\draw (0,10) node[above left] {$y$};

        \draw[very thick, black] (0,0)--(0,9)--(9,9)--(9,0)--cycle;

        \fill[red, opacity=0.4] (0,0) -- (3,0) -- (3,3) -- (0,3) -- cycle;
        \fill[red, opacity=0.4] (6,0) -- (9,0) -- (9,3) -- (6,3) -- cycle;
        \fill[red, opacity=0.4] (0,6) -- (3,6) -- (3,9) -- (0,9) -- cycle;
        \fill[red, opacity=0.4] (6,6) -- (9,6) -- (9,9) -- (6,9) -- cycle;
    \end{tikzpicture}
    \caption{The slice $x=0.1$}
    \label{fig:sigma0x=1}
    \end{subfigure}
    \begin{subfigure}[t]{0.2\linewidth}
    \centering
	\begin{tikzpicture}[scale=0.1]  
        \draw[->] (0,0) -- (0,10);
        \draw[->] (0,0) -- (10,0);

        \draw (10,0) node[below right] {$z$};
    	\draw (0,10) node[above left] {$y$};

        \draw[very thick, black] (0,0)--(0,9)--(9,9)--(9,0)--cycle;

        \fill[blue, opacity=0.4] (3,3) -- (6,3) -- (6,6) -- (3,6) -- cycle;
    \end{tikzpicture}
    \caption{The slice $x=0.5$}
    \label{fig:sigma0x=5}
    \end{subfigure}
	\caption{The collective strategy $\bstrat^0$} \label{fig:sigma0}
\end{figure}

\subsection{Size}

We use the word \emph{size} to refer to the \emph{bit-size} of the object we manipulate, i.e., the number of bits that is required to describe them, in a canonical encoding.
In particular, the size of an integer $n$ is $\log_2\lceil n + 1 \rceil$, and the size of a rational number $p/q$ is the size of $p$, plus the size of $q$, plus $1$.
The size of a composite object, such as a tuple, a finite set, a polynomial, etc., is obtained by summing the sizes of the objects it is composed of, and adding $1$.
The size of an algebraic number is the size of its minimal polynomial.

As for games, in the general case, the space required to describe the mapping $\payoff$ is exponential in the number of players.
However, in practice, many games can be described succinctly.
For complexity considerations, we will assume that this mapping is described by a list $(\baction^1_{P_1}, x_1), \dots, (\baction^n_{P_n}, x_n), x_{n+1}$, where $P_k \subseteq \Players \cup \{\opp\}$ and $\baction^k_{P_k} \in \prod_{\player \in P_k} A_\player$ for each $k$, serving as an if-then-else list: for a given action tuple $\baction$, 
\begin{itemize}
    \item if we have $\action_\player = \baction^1_\player$ for each $\player \in P_1$, then we have $\payoff(\baction) = x_1$;

    \item else, if we have $\action_\player = \baction^2_\player$ for each $\player \in P_2$, then we have $\payoff(\baction) = x_2$;

    \item[] \dots

    \item else, we have $\payoff(\baction) = x_{n+1}$.
\end{itemize}
For example, in a matching pennies game in which all players choose between the actions Heads (written $\H$) and Tails (written $\T$), the fact that the team gets payoff $1$ if all actions match, and $0$ otherwise, can be described by the list $\left((\H)_\everyplayer, 1\right), \left((\T)_\everyplayer, 1\right), 0$, which takes only space $O(|\Players|)$.

\section{Dice and slice to make strategies nice} \label{sec:example}

The main result of this section, \cref{thm:itsOKtobestraight}, states that whenever there exists a collective  strategy achieving value $t$, there also exists a grid strategy achieving it.

Our approach is constructive. Given an arbitrary collective strategy, we show how to transform it into a grid strategy with the same value. Each step of the transformation preserves value while progressively simplifying the geometric structure.

To build intuition, we first consider the example that was described in the introduction: triangular matching pennies.
We prove our theorem in this specific example, and use it to show that the $\max\min$ value of this game is approximately $0.2781$.
In a second subsection, we generalise the reasoning to all dicey games.

The key technical step is a geometric reshaping of collective strategies via \emph{slicing}. Recall that each player’s strategy can be viewed as a partition of the unit square. Fixing the outcome of one of the two dice restricts the strategy to a lower-dimensional object, which we refer to as a \emph{slice}. Visually, this corresponds to taking a cross-section of the strategy along one coordinate. We show that any given collective strategy can be modified, without reducing its value, into one that is constructed from finitely many (two in the case of our example) slices. Repeating this argument across all dice yields a grid strategy ($2$-grid in our example). 
The justification that finitely many slices suffice relies on Carathéodory’s theorem.


\subsection{Triangular matching pennies} \label{ssec:triangular}

\begin{definition}[The dicey game $\DGame^\triangledown$]
    The three team players are Ada, Bertrand, and Claude.
    All players, including the Devil, choose simultaneously either the action \emph{Heads ($\H$)} or \emph{Tails ($\T$)}.
    The team wins if all choices coincide, and loses otherwise.
    The team players can use three dice, called $D_1$, $D_2$, and $D_3$: Ada has access to $D_1$ and $D_2$, Bertrand to $D_2$ and $D_3$, and Claude to $D_3$ and $D_1$.
\end{definition}

In what follows, we write $\alpha$ for the unique root of the polynomial $X^3 - 3X + 1=0$ that lies in the interval $[0, 1]$ (approximately $\alpha \approx 0.3473$), and we write $\beta = 3 \alpha^2 - 2\alpha^3 \approx 0.2781$.
This subsection is dedicated to proving the following theorem.

\begin{theorem} \label{thm:3v1example}
    The team has a collective strategy optimal strategy with value $\beta\approx 0.2781$.
\end{theorem}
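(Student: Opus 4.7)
The plan is to prove matching lower and upper bounds on $\val(\DGame^\triangledown)$. For the lower bound, I would exhibit the symmetric $2$-grid strategy $\bstrat^*$ already advertised in the introduction: each team player outputs $\H$ exactly when both of their two visible dice exceed the threshold $\alpha$, and $\T$ otherwise. A direct calculation gives $\prob_{\bstrat^*}(\H\H\H) = (1-\alpha)^3$, since all three dice must lie above $\alpha$. For the all-$\T$ event, let $X_i = \mathbf{1}[D_i < \alpha]$, so that the $X_i$ are independent Bernoulli with parameter $\alpha$; each team player outputs $\T$ iff at least one of their two visible dice is below $\alpha$, and hence the event ``all team players output $\T$'' is $(X_1 \vee X_2) \wedge (X_2 \vee X_3) \wedge (X_3 \vee X_1)$, which is equivalent to $X_1 + X_2 + X_3 \geq 2$. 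Thus $\prob_{\bstrat^*}(\T\T\T) = 3\alpha^2(1-\alpha) + \alpha^3 = 3\alpha^2 - 2\alpha^3$. The defining identity $\alpha^3 - 3\alpha + 1 = 0$ is precisely the condition that makes $(1-\alpha)^3 = 3\alpha^2 - 2\alpha^3 = \beta$, so the Devil is indifferent between $\H$ and $\T$ and $\val(\bstrat^*) = \beta$.

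For the upper bound, I would invoke \Cref{thm:itsOKtobestraight} with $k = |\Actions_\opp| = 2$ to restrict attention to $2$-grid collective strategies. Such a strategy is specified by one threshold $a_i \in [0,1]$ per die, together with a $\{\H,\T\}$-labelling of each team player's four visible quadrants. I would classify these labellings up to the symmetries of $\DGame^\triangledown$ (the $\H \leftrightarrow \T$ swap, the cyclic $\Zb/3\Zb$ permutation of players, and individual reflections $a_i \mapsto 1 - a_i$), reducing the $2^{12}$ raw combinations to a short list of combinatorial types. For each type, $\prob(\H\H\H)$ and $\prob(\T\T\T)$ are explicit multilinear polynomials in $a_1,a_2,a_3$, and the goal is to maximise $\min(\prob(\H\H\H), \prob(\T\T\T))$ on the unit cube. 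A standard perturbation argument shows that at the optimum of each type the two probabilities coincide, and I would then exploit the cyclic symmetry of $\DGame^\triangledown$ to argue that this intra-type optimum is attained when all three thresholds are equal, collapsing each case to a univariate polynomial equation.

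The main obstacle is the case analysis verifying that no type other than the ``one-$\H$-quadrant per player'' labelling can reach $\beta$. I expect this to reduce to a handful of univariate algebraic comparisons, in which the cubic $X^3 - 3X + 1$ arising from the symmetric type is shown to dominate the defining polynomials of every other type on the relevant root intervals. Once this enumeration is complete, the matching bounds give $\val(\DGame^\triangledown) = \beta$, with $\bstrat^*$ as an explicit optimal strategy.
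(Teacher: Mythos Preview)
Your lower-bound argument is correct and matches the paper's (with a cleaner justification of $\prob(\T\T\T) = 3\alpha^2 - 2\alpha^3$ via the ``at least two of $X_1, X_2, X_3$'' observation).

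Your upper-bound plan, however, diverges from the paper's and contains a genuine gap. The paper proves the $2$-grid reduction directly for the triangular game (this is \Cref{lm:itsOKtobestraightexample}, the slicing argument using Carath\'eodory's theorem), writes down the resulting system of polynomial inequalities in the three real thresholds $\lambda_1,\lambda_2,\lambda_3$ and the twelve Boolean labelling variables, and then hands that system to the SMT solver Z3 to certify that $\beta$ is the maximum. So the paper's proof of optimality is computer-assisted; it makes no attempt at a by-hand classification.

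The gap in your by-hand approach is the symmetry step. You write that you would ``exploit the cyclic symmetry of $\DGame^\triangledown$ to argue that this intra-type optimum is attained when all three thresholds are equal''. This does not follow. First, most of the labelling types in your reduced list are \emph{not} themselves invariant under the cyclic $\Zb/3\Zb$ action (the symmetry permutes types, it need not fix a chosen representative), so their payoff polynomials $P_\H, P_\T$ are not symmetric functions of $(a_1,a_2,a_3)$ and there is no reason to expect a diagonal optimum. Second, even for a labelling type that \emph{is} cyclically invariant, symmetry of the objective $\min(P_\H,P_\T)$ only tells you that the set of maximisers is closed under the cyclic action, not that it contains the diagonal: the objective is a minimum of multilinear functions, hence concave in each $a_i$ separately but not jointly concave, so the usual symmetrisation-by-convexity argument does not apply. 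Without that reduction you are left with a genuine three-variable constrained optimisation for each type, and the ``handful of univariate algebraic comparisons'' you anticipate does not materialise. You would need either a separate argument ruling out asymmetric optima, or to carry out the full multivariate analysis---which is essentially what the paper offloads to Z3.
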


Let us first show a collective strategy that realises that value.

\begin{lemma}
    The team has a collective strategy of value at least $\beta$.
\end{lemma}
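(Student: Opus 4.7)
The plan is to exhibit the threshold strategy from \cref{fig:optimalstrategy} and verify it achieves value $\beta$. Fix $\alpha$ as in the statement and have each team player declare $\H$ if and only if both of the two dice they observe are at least $\alpha$, and $\T$ otherwise. Since the value is defined as the minimum of the team's expected payoff over the Devil's \emph{pure} actions, it suffices to compute the team's winning probability against each of $\H$ and $\T$ and show both equal $\beta$.

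First I would handle the all-Heads case. Under the proposed strategy, all three team players simultaneously declare $\H$ if and only if each of $D_1, D_2, D_3$ takes value at least $\alpha$: every die is observed by exactly two of the three players, so the constraint ``both dice of every player are at least $\alpha$'' collapses to ``every die is at least $\alpha$''. Independence of the dice then yields $\prob(\H\H\H) = (1-\alpha)^3$.

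Next I would tackle the all-Tails case, which is the step requiring the most care. Setting $X_i = \mathbf{1}_{D_i < \alpha}$ for $i = 1, 2, 3$, Ada plays $\T$ iff $X_1 \vee X_2$, Bertrand iff $X_2 \vee X_3$, and Claude iff $X_3 \vee X_1$. The conjunction of these three disjunctions holds precisely when at least two of $X_1, X_2, X_3$ equal $1$: the ``if'' direction is immediate, and for ``only if'' note that if at most one $X_i$ equals $1$, then the disjunction over the pair not containing that index evaluates to $0$. Since $\prob(X_i = 1) = \alpha$ and the $X_i$ are independent, this yields $\prob(\T\T\T) = 3\alpha^2(1-\alpha) + \alpha^3 = 3\alpha^2 - 2\alpha^3 = \beta$.

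Finally I would equate the two probabilities: $(1-\alpha)^3 = 3\alpha^2 - 2\alpha^3$ rearranges to $\alpha^3 - 3\alpha + 1 = 0$, which is exactly the polynomial defining $\alpha$. Hence the team wins with probability $\beta$ against either of the Devil's pure responses, and the strategy therefore has value $\beta$. The only genuine subtlety here is the combinatorial characterisation of the all-Tails event in terms of the $X_i$; the rest is a clean algebraic verification in which the choice of $\alpha$ has been engineered precisely to make the Devil indifferent between $\H$ and $\T$.
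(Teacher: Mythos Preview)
Your proof is correct and follows essentially the same approach as the paper: exhibit the threshold strategy of \cref{fig:optimalstrategy}, compute the all-$\H$ and all-$\T$ probabilities, and use the defining polynomial of $\alpha$ to show both equal $\beta$. Your treatment is in fact more detailed than the paper's, which simply asserts the two probabilities without the combinatorial characterisation of the all-$\T$ event that you supply.
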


\begin{proof}
    Let us define the collective strategy $\bstrat$ as follows: each team player picks the action Heads if the two rolls they have access to are greater than $\alpha$, and Tails otherwise.
    This collective strategy is illustrated by \cref{fig:optimalstrategy}, where $x$, $y$, and $z$ are the rolls of the dice $\Die_1$, $\Die_2$, and $\Die_3$, respectively.
    If the Devil responds to this strategy by always playing    Tails, then the team wins if all three team players choose Tails, which happens with probability $3\alpha^2 - 2\alpha^3 = \beta$.
    If he plays Heads, the team wins with probability $(1-\alpha)^3 = \beta$.
    The value of this collective strategy is thus exactly $\beta$.
\end{proof}

To show that no better value can be achieved, we first prove that $2$-grid strategies are optimal in this game; later, we will prove that the collective strategy presented in the previous proof is optimal among $2$-grid strategies.

\begin{lemma} \label{lm:itsOKtobestraightexample}
    Let $\bstrat^0$ be a collective strategy.
    Then, there exists a $2$-grid collective strategy $\bstrat^\star$ with value at least as high as $\bstrat^0$.
\end{lemma}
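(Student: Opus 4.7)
The plan is a die-by-die \emph{slicing} argument: I would first establish a one-die slicing lemma, then iterate it over $\Die_1, \Die_2, \Die_3$.

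\textbf{One-die slicing.} Fix a die $\Die$, and write $\broll_{-\Die}$ for a roll of the other two dice. For each $r \in [0,1]$, define the conditional expected-payoff vector
\[
g(r) = \bigl(\Eb_{\broll_{-\Die}}[\payoff(\bstrat^0(r,\broll_{-\Die}), \H)],\ \Eb_{\broll_{-\Die}}[\payoff(\bstrat^0(r,\broll_{-\Die}), \T)]\bigr) \in \Rb^2.
\]
Then $V_c := \Eb(\bstrat^0, c) = \int_0^1 g_c(r)\,dr$ for each Devil action $c \in \{\H,\T\}$, and $\val(\bstrat^0) = \min(V_\H, V_\T)$. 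The point $(V_\H, V_\T)$ lies in the planar convex set $C := \mathrm{conv}(\{g(r) : r \in [0,1]\}) \subseteq [0,1]^2$.
Starting from $(V_\H, V_\T)$ and moving along the direction $(+1,+1)$, we remain in the quadrant $\{x \geq V_\H,\ y \geq V_\T\}$ and eventually hit $\partial C$ at some point $(V_\H^\star, V_\T^\star)$ with $\min(V_\H^\star, V_\T^\star) \geq \val(\bstrat^0)$. Since $\partial C$ is $1$-dimensional, $(V_\H^\star, V_\T^\star)$ lies on a supporting line of $C$ and is therefore a convex combination of at most two points of $\{g(r) : r \in [0,1]\}$, say $p\,g(r^a) + (1-p)\,g(r^b)$ with $r^a, r^b \in [0,1]$ and $p \in [0,1]$. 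This is Carathéodory applied on a supporting line; equivalently, the classical fact that in a zero-sum game where one player has only $2$ pure actions, the other has an optimal mixed strategy of support at most $2$.

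\textbf{Realising the combination as a $2$-piecewise strategy.} Define
\[
\bstrat'(\roll_\Die,\broll_{-\Die}) = \begin{cases} \bstrat^0(r^a,\broll_{-\Die}) & \text{if } \roll_\Die < p, \\ \bstrat^0(r^b,\broll_{-\Die}) & \text{if } \roll_\Die \geq p. \end{cases}
\]
This only modifies how each player accessing $\Die$ uses $\roll_\Die$, so $\bstrat'$ is a well-formed collective strategy and is $2$-piecewise constant for $\Die$. For each $c \in \{\H,\T\}$,
\[
\Eb(\bstrat', c) = p\,g_c(r^a) + (1-p)\,g_c(r^b) \geq \min(V_\H, V_\T) = \val(\bstrat^0),
\]
hence $\val(\bstrat') \geq \val(\bstrat^0)$. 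Moreover, any $k$-piecewise structure of $\bstrat^0$ for another die $\Die' \neq \Die$ is inherited by each cross-section $\bstrat^0(r^a,\cdot)$ and $\bstrat^0(r^b,\cdot)$, hence by $\bstrat'$.

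\textbf{Iteration.} Apply the one-die slicing lemma successively with $\Die = \Die_1$, then $\Die_2$, then $\Die_3$. Each application makes the strategy $2$-piecewise constant for one additional die without destroying the previously established piecewise structure, and the value is non-decreasing throughout. The final strategy $\bstrat^\star$ is $2$-grid and satisfies $\val(\bstrat^\star) \geq \val(\bstrat^0)$.

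\textbf{Main obstacle.} The delicate point is that the slice values $r^a, r^b$ must belong to $[0,1]$ itself, whereas Carathéodory a priori only provides extreme points of $C$ in the closure $\overline{\{g(r) : r \in [0,1]\}}$. I expect this to be the main technical hurdle, handled either by an approximation–compactness argument using the boundedness of $g$, or by invoking a minimax theorem for semi-infinite zero-sum games in which one side has finitely many pure actions.
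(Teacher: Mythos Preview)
Your proposal follows essentially the same architecture as the paper: define the conditional payoff vector $g(r)$ for one die, use planar convex geometry to replace the continuum of slices by two, show this preserves value and preserves piecewise-constancy for the other dice, then iterate over $\Die_1,\Die_2,\Die_3$.

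The only substantive divergence is in the geometric step, and it is exactly where your flagged ``main obstacle'' lives. You push $(V_\H,V_\T)$ along $(1,1)$ to $\partial C$ and then want to write the hit point as a convex combination of two elements of $\{g(r)\}$; as you note, this needs either closedness of $\{g(r)\}$ or an approximation argument. The paper avoids this entirely by applying Carath\'eodory \emph{first} to the point $(V_\H,V_\T)\in\Conv(V)$, obtaining three genuine slice values $(u_1,v_1),(u_2,v_2),(u_3,v_3)\in V$ whose triangle $T$ contains $(V_\H,V_\T)$. Since the quadrant $Q=[V_\H,\infty)\times[V_\T,\infty)$ meets the bounded triangle $T$ and is unbounded, it must meet $\partial T$, hence some edge $\Conv\{(u_i,v_i),(u_j,v_j)\}$. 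Any point on that edge is already a convex combination of two points of $V$, so the two slices $r^a,r^b$ are handed to you directly---no closure, no approximation, no minimax. If you reroute your direction-$(1,1)$ move to stay inside $T$ rather than $C$, your argument becomes complete and coincides with the paper's.
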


\begin{proof}
    In this proof, we denote a roll by $(x, y, z)$, where $x, y, z$ are the outcomes of the dice $D_1$, $D_2$, and $D_3$, respectively.

    Our result will be proven by showing that from any collective strategy and for any die, we can construct a new collective strategy, which is at least as performant and is $2$-piecewise constant for the specified die.
    That is done by choosing carefully two "slices", and constructing new strategies only from those.
    Applying that transformation for all dice, we obtain a $2$-grid collective strategy.

    This process is illustrated by \Cref{fig:sigma0,fig:sigma1,fig:sigma2,fig:sigmastar}.
    If $\bstrat^0$ is the original collective strategy, then the collective strategy $\bstrat^1$ is obtained by applying that transformation for the die $D_1$; then, the collective strategy $\bstrat^2$ is obtained by applying it for the die $D_2$; and finally, the collective strategy $\bstrat^\star$ is obtained by applying it for the dice $D_3$.
    For each of them, we have also depicted, in the cube $[0,1]^3$, the parts of the space in which the choices of Ada, Bertrand, and Claude align: red when all pick Heads, blue when all pick Tails.

    \paragraph*{Definition of the transformation}

    Let $\bstrat$ be a collective strategy. 
    For $x \in [0, 1]$, we define $f(x)$ and $g(x)$ as the probabilities of all team players choosing $\H$ or all choosing $\T$, respectively, knowing that the outcome of the die $D_1$ is $x$, and that the team players are following the collective strategy $\bstrat$:
    $$f: x \mapsto \prob_{\bstrat}\left(\left. \strat_\ada = \strat_\bert = \strat_\clau = \H ~\right|~ D_1 = x \right)$$
    $$g: x \mapsto \prob_{\bstrat}\left(\left. \strat_\ada = \strat_\bert = \strat_\clau = \T ~\right|~ D_1 = x \right).$$

    Geometrically, if $\H\H\H$ denotes the part of the cube of rolls $[0,1]^3$ in which the three players choose $\H$, then $f(x)$ is the area of the slice $\H\H\H \cap (\{x\} \times [0,1]^2)$, and similarly for $g$.

    Let us note that the probability of all players choosing $\H$ is then $\int f$, and that the probability of all choosing $\T$ is $\int g$ (without more precision, integrals are considered on the whole domain of the integrand, that is, the interval $[0, 1]$).

    Now, let $V = \{(f(x), g(x)) \mid x \in [0, 1]\}$ be the set of values taken by the pair $(f, g)$.
    The geometrical arguments of the following paragraph are illustrated by \cref{fig:VTQ}.
    The pair $\left(\int f, \int g\right)$ lies in the convex hull of $V$, which we denote by $\Conv (V)$.
    By Carathéodory's theorem, since the set $V$ is contained in a space of dimension $2$, that point can therefore be written as a convex combination of only $3$ points of $V$: there exist $(u_1, v_1), (u_2, v_2), (u_3, v_3) \in V$ such that:
    $$\left(\int f, \int g\right) \in \Conv\left\{ (u_1, v_1), (u_2, v_2), (u_3, v_3)\right\}.$$
    Now, let $T$ be the triangle $\Conv\left\{ (u_1, v_1), (u_2, v_2), (u_3, v_3)\right\}$ and $Q$ be the quarter of plane $Q = \left[ \int f, +\infty\right) \times \left[ \int g, +\infty\right)$.
    We know that the triangle $T$ intersects $Q$, since the intersection contains at least the pair $(\int f, \int g)$.
    Then, at least one edge of $T$, which we write $\Conv\{(u_i, v_i), (u_j, v_j)\}$, intersects $Q$.
    Let us pick one point $(u^\star, v^\star) \in \Conv\{(u_i, v_i), (u_j, v_j)\} \cap Q$, and let $\lambda \in [0, 1]$ be such that:
    $$(u^\star, v^\star) = \lambda (u_i, v_i) + (1-\lambda) (u_j, v_j).$$

\begin{figure} 
    \centering
	\begin{tikzpicture}[scale=0.3]
    	\draw[->] (0,0) -- (10,0);
    	\draw[->] (0,0) -- (0,10);
    	
    	\draw (10,0) node[right] {$u$};
    	\draw (0,10) node[above] {$v$};

    	\fill[blue!40, opacity=0.2]
      plot [smooth cycle, tension=0.8]
      coordinates {(1,8) (1.5,9.2) (4,9.5) (3.5,8.2) (2.8,7) (1.2,7)};
        \fill[blue!40, opacity=0.2]
      plot [smooth cycle, tension=0.8]
      coordinates {(0.5,1) (0.4,3) (1,4) (3, 3) (4, 2) (2,2)};
        \fill[blue!40, opacity=0.2]
      plot [smooth cycle, tension=0.8]
      coordinates {(7, 0.5) (9.5, 1) (9, 4) (6, 1.5)};
      
    \fill[violet!5] (5, 5)--(5, 10)--(10, 10)--(10,5);

        \fill[orange!60, opacity=0.1] (3, 9)--(1, 2)--(9, 3)--(3, 9);
        \draw[orange] (3, 9)--(1, 2)--(9, 3)--(3, 9);

        \draw[violet] (5, 10)--(5, 5)--(10, 5);
        \draw[violet] (5, 5) node {$\bullet$};
        \draw[violet] (5, 5) node[below] {$\left(\int f, \int g\right)$};

        \draw[red] (6, 6) node {$\bullet$};
        \draw[red] (6, 6) node[above right] {$(u^\star, v^\star)$};

        \draw[orange] (3, 9) node {$\bullet$};
        \draw[orange] (3, 9) node[above] {$(u_1, v_1)$};
        \draw[orange] (9, 3) node {$\bullet$};
        \draw[orange] (9, 3) node[right] {$(u_2, v_2)$};
        \draw[orange] (1, 2) node {$\bullet$};
        \draw[orange] (1, 2) node[below right] {$(u_3, v_3)$};

        \draw[blue] (2, 8) node {$V$};
        \draw[orange] (4, 6) node {$T$};
        \draw[violet] (9, 9) node {$Q$};
	\end{tikzpicture}
	\caption{Applying Carathéodory's theorem} \label{fig:VTQ}
\end{figure}
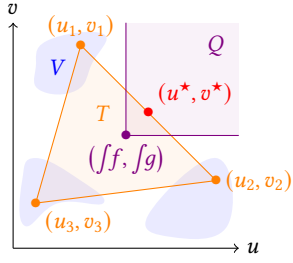

    Let now $x_1 \in [0, 1]$ be a roll of $D_1$ such that we have $(u_i, v_i) = (f(x_1), g(x_1))$.
    Similarly, let $x_2$ be such that $(u_j, v_j) = (f(x_2), g(x_2))$.
    We define the collective strategy $\bstrat'$ as follows: for $x \leq \lambda$ and for every $y, z$, we define $\strat'_\ada(x, y) = \strat_\ada(x_1, y)$ and $\strat'_\clau(x, z) = \strat_\clau(x_1, z)$. 
    Similarly, for $x > \lambda$, we define $\strat'_\ada(x, y) = \strat_\ada(x_2, y)$ and $\strat'_\clau(x, z) = \strat_\clau(x_2, z)$.
    Bertrand's strategy is left unchanged.

    Intuitively: the reasoning above enables us to choose two \emph{slices} of $\bstrat$, defined by the fixed values $D_1 = x_1$ and $D_1 = x_2$ (see transformation from \cref{fig:sigma0} to \cref{fig:sigma1}).
    The collective strategy $\bstrat'$ is then defined from those two slices by expanding them:the strategy that was played on the slice $D_1 = x_1$ is now the strategy for \emph{all} values of $D_1$ in the interval $[0, \lambda]$, and the strategy that was played on the slice $D_1 = x_2$ is now the strategy for \emph{all} values of $D_1$ in the interval $(\lambda, 1]$.
    As we will show, this guarantees that in this new collective strategy, all team players choose $\H$ with probability $u^\star \geq \int f$, and $\T$ with probability $v^\star \geq \int g$.

    \paragraph*{Some properties of the collective strategy $\bstrat'$}

    The following sublemma is an immediate consequence of the definition of $\bstrat'$.

    \begin{sublemma} \label{prop:2piecewiseconstant}
        The collective strategy $\bstrat'$ is $2$-piecewise constant for the die $D_1$.
    \end{sublemma}

    However, not only does this transformation construct a strategy that is $2$-piecewise constant for the die $D_1$: it also maintains that property for the other dice.
    
    \begin{sublemma}\label{prop:2piecewiseconstantstable}
        If the collective strategy $\bstrat$ is $2$-piecewise constant for some die $D_i$, then so is the collective strategy $\bstrat'$.
    \end{sublemma}

    \begin{proof}
        The case $i = 1$ is immediate by \Cref{prop:2piecewiseconstant}.
        Let us consider $i = 2$; the case $i = 3$ is analogous.
        
        If the collective strategy $\bstrat$ is $2$-piecewise constant for the die $D_2$, it means that the mapping $y \mapsto \bstrat(x, y, z)$ is constant, for every $x$ and $z$.
        Let then $x, z \in [0, 1]$.
        The mapping $y \mapsto \bstrat'(x, y, z)$ is equal to the mapping $y \mapsto \bstrat(x_1, y, z)$ if $x \leq \lambda$, and to the mapping $y \mapsto \bstrat(x_2, y, z)$ otherwise.
        In both case, it is $2$-piecewise constant.
    \end{proof}

    Finally, our transformation preserves the value of our strategy.

\begin{figure}[h]
    \centering
    \begin{subfigure}[t]{0.25\textwidth}
    \centering
	\begin{tikzpicture}[scale=0.18]  
        \draw[->] (0,0) -- (0,10);
        \draw[->] (0,0) -- (10,0);

        \draw (10,0) node[below right] {$x$};
    	\draw (0,10) node[above left] {$y$};

        \fill[red, opacity=0.4] (0,0) -- (4,0) -- (4,3) -- (0,3) -- cycle;
        \fill[red, opacity=0.4] (0,6) -- (4,6) -- (4,9) -- (0,9) -- cycle;
        \fill[blue, opacity=0.4] (4,3) -- (9,3) -- (9,6) -- (4,6) -- cycle;
        \fill[blue, opacity=0.4] (0,3) -- (4,3) -- (4,6) -- (0,6) -- cycle;
        \fill[blue, opacity=0.4] (4,6) -- (9,6) -- (9,9) -- (4,9) -- cycle;
        \fill[blue, opacity=0.4] (4,0) -- (9,0) -- (9,3) -- (4,3) -- cycle;

        \draw[red] (2,1.5) node {$\H$};
        \draw[red] (2,7.5) node {$\H$};
        \draw[blue!70!black] (6.5,4.5) node {$\T$};
    \end{tikzpicture}
    \caption{Ada's strategy}
    \label{fig:sigma1Ada}
    \end{subfigure}
    \begin{subfigure}[t]{0.25\textwidth}
    \centering
	\begin{tikzpicture}[scale=0.18]  
        \draw[->] (0,0) -- (0,10);
        \draw[->] (0,0) -- (10,0);

        \draw (10,0) node[below right] {$z$};
    	\draw (0,10) node[above left] {$y$};

        \fill[red, opacity=0.4] (0,0) -- (3,0) -- (3,3) -- (0,3) -- cycle;
        \fill[red, opacity=0.4] (6,0) -- (9,0) -- (9,3) -- (6,3) -- cycle;
        \fill[red, opacity=0.4] (0,6) -- (3,6) -- (3,9) -- (0,9) -- cycle;
        \fill[red, opacity=0.4] (6,6) -- (9,6) -- (9,9) -- (6,9) -- cycle;
        \fill[blue, opacity=0.4] (3,3) -- (6,3) -- (6,6) -- (3,6) -- cycle;
        \fill[blue, opacity=0.4] (0,3) -- (3,3) -- (3,6) -- (0,6) -- cycle;
        \fill[blue, opacity=0.4] (6,3) -- (9,3) -- (9,6) -- (6,6) -- cycle;
        \fill[blue, opacity=0.4] (3,6) -- (6,6) -- (6,9) -- (3,9) -- cycle;
        \fill[blue, opacity=0.4] (3,0) -- (6,0) -- (6,3) -- (3,3) -- cycle;

        \draw[red] (1.5,1.5) node {$\H$};
        \draw[red] (7.5,7.5) node {$\H$};
        \draw[red] (7.5,1.5) node {$\H$};
        \draw[red] (1.5,7.5) node {$\H$};
        \draw[blue!70!black] (4.5,4.5) node {$\T$};
    \end{tikzpicture}
    \caption{Bertrand's strategy}
    \label{fig:sigma1Bertrand}
    \end{subfigure}
    \begin{subfigure}[t]{0.25\textwidth}
    \centering
	\begin{tikzpicture}[scale=0.18]  
        \draw[->] (0,0) -- (0,10);
        \draw[->] (0,0) -- (10,0);

        \draw (10,0) node[below right] {$x$};
    	\draw (0,10) node[above left] {$z$};

        \fill[red, opacity=0.4] (0,0) -- (4,0) -- (4,3) -- (0,3) -- cycle;
        \fill[red, opacity=0.4] (0,6) -- (4,6) -- (4,9) -- (0,9) -- cycle;
        \fill[blue, opacity=0.4] (4,3) -- (9,3) -- (9,6) -- (4,6) -- cycle;
        \fill[blue, opacity=0.4] (0,3) -- (4,3) -- (4,6) -- (0,6) -- cycle;
        \fill[blue, opacity=0.4] (4,6) -- (9,6) -- (9,9) -- (4,9) -- cycle;
        \fill[blue, opacity=0.4] (4,0) -- (9,0) -- (9,3) -- (4,3) -- cycle;

        \draw[red] (2,1.5) node {$\H$};
        \draw[red] (2,7.5) node {$\H$};
        \draw[blue!70!black] (6.5,4.5) node {$\T$};
    \end{tikzpicture}
    \caption{Claude's strategy}
    \label{fig:sigma1Claude}
    \end{subfigure}
    \begin{subfigure}[t]{0.35\textwidth}
    \centering
	\begin{tikzpicture}[scale=0.25, line join=round]  
                
        \newcommand{\cuboidr}[6]{%
          \pgfmathsetmacro{\x}{#1}
          \pgfmathsetmacro{\y}{#2}
          \pgfmathsetmacro{\z}{#3}
          \pgfmathsetmacro{\xa}{\x+#4}
          \pgfmathsetmacro{\ya}{\y+#5}
          \pgfmathsetmacro{\za}{\z+#6}
        
          \fill[red, opacity=0.4] (\x,\y,\z) -- (\xa,\y,\z) -- (\xa,\ya,\z) -- (\x,\ya,\z) -- cycle; 
          \fill[red, opacity=0.4] (\x,\y,\z) -- (\xa,\y,\z) -- (\xa,\y,\za) -- (\x,\y,\za) -- cycle; 
          \fill[red, opacity=0.4] (\x,\y,\z) -- (\x,\ya,\z) -- (\x,\ya,\za) -- (\x,\y,\za) -- cycle; 
          \fill[red, opacity=0.4] (\xa,\ya,\za) -- (\x,\ya,\za) -- (\x,\y,\za) -- (\xa,\y,\za) -- cycle; 
          \fill[red, opacity=0.4] (\xa,\ya,\za) -- (\xa,\y,\za) -- (\xa,\y,\z) -- (\xa,\ya,\z) -- cycle; 
          \fill[red, opacity=0.4] (\xa,\ya,\za) -- (\xa,\ya,\z) -- (\x,\ya,\z) -- (\x,\ya,\za) -- cycle; 
        
          \draw[thick] (\x,\y,\z) -- (\xa,\y,\z) -- (\xa,\ya,\z) -- (\x,\ya,\z) -- cycle; 
          \draw[thick] (\x,\y,\za) -- (\xa,\y,\za) -- (\xa,\ya,\za) -- (\x,\ya,\za) -- cycle; 
          \draw[thick] (\x,\y,\z) -- (\x,\y,\za);
          \draw[thick] (\xa,\y,\z) -- (\xa,\y,\za);
          \draw[thick] (\xa,\ya,\z) -- (\xa,\ya,\za);
          \draw[thick] (\x,\ya,\z) -- (\x,\ya,\za);
        }

        \newcommand{\cuboidg}[6]{%
          \pgfmathsetmacro{\x}{#1}
          \pgfmathsetmacro{\y}{#2}
          \pgfmathsetmacro{\z}{#3}
          \pgfmathsetmacro{\xa}{\x+#4}
          \pgfmathsetmacro{\ya}{\y+#5}
          \pgfmathsetmacro{\za}{\z+#6}
        
          \fill[blue, opacity=0.4] (\x,\y,\z) -- (\xa,\y,\z) -- (\xa,\ya,\z) -- (\x,\ya,\z) -- cycle; 
          \fill[blue, opacity=0.4] (\x,\y,\z) -- (\xa,\y,\z) -- (\xa,\y,\za) -- (\x,\y,\za) -- cycle; 
          \fill[blue, opacity=0.4] (\x,\y,\z) -- (\x,\ya,\z) -- (\x,\ya,\za) -- (\x,\y,\za) -- cycle; 
          \fill[blue, opacity=0.4] (\xa,\ya,\za) -- (\x,\ya,\za) -- (\x,\y,\za) -- (\xa,\y,\za) -- cycle; 
          \fill[blue, opacity=0.4] (\xa,\ya,\za) -- (\xa,\y,\za) -- (\xa,\y,\z) -- (\xa,\ya,\z) -- cycle; 
          \fill[blue, opacity=0.4] (\xa,\ya,\za) -- (\xa,\ya,\z) -- (\x,\ya,\z) -- (\x,\ya,\za) -- cycle; 
        
          \draw[thick] (\x,\y,\z) -- (\xa,\y,\z) -- (\xa,\ya,\z) -- (\x,\ya,\z) -- cycle; 
          \draw[thick] (\x,\y,\za) -- (\xa,\y,\za) -- (\xa,\ya,\za) -- (\x,\ya,\za) -- cycle; 
          \draw[thick] (\x,\y,\z) -- (\x,\y,\za);
          \draw[thick] (\xa,\y,\z) -- (\xa,\y,\za);
          \draw[thick] (\xa,\ya,\z) -- (\xa,\ya,\za);
          \draw[thick] (\x,\ya,\z) -- (\x,\ya,\za);
        }

        \draw[->] (0,0,0) -- (10,0,0);
    	\draw[->] (0,0,0) -- (0,10,0);
    	\draw[->] (0,0,0) -- (0,0,10);
    	
    	\draw (10,0,0) node[below right] {$x$};
    	\draw (0,10,0) node[above left] {$y$};
    	\draw (0,0,10) node[left] {$z$};

        \fill[black, opacity=0.3] (0,1,0)--(0,1,9)--(9,1,9)--(9,1,0);

        \cuboidr{0}{0}{0}{4}{3}{3}
        \cuboidr{0}{0}{6}{4}{3}{3}

        \draw[very thick, black, dashed] (0,1,0)--(0,1,3)--(4,1,3)--(4,1,0)--cycle;
        \draw[very thick, black, dashed] (0,1,6)--(0,1,9)--(4,1,9)--(4,1,6)--cycle;

        \fill[black, opacity=0.3] (0,4,0)--(0,4,6)--(9,4,6)--(9,4,0);

        \cuboidg{4}{3}{3}{5}{3}{3}

        \fill[black, opacity=0.3] (0,4,6)--(0,4,9)--(9,4,9)--(9,4,6);
        
        \cuboidr{0}{6}{0}{4}{3}{3}
        \cuboidr{0}{6}{6}{4}{3}{3}

        \draw[very thick, black, dashed] (4,4,3)--(4,4,6)--(9,4,6)--(9,4,3)--cycle;
    \end{tikzpicture}
    \caption{The sets $\H\H\H$ and $\T\T\T$}
    \label{fig:sigma1HHH}
    \end{subfigure}
    \begin{subfigure}[t]{0.2\linewidth}
    \centering
	\begin{tikzpicture}[scale=0.1]  
        \draw[->] (0,0) -- (0,10);
        \draw[->] (0,0) -- (10,0);

        \draw (10,0) node[below right] {$x$};
    	\draw (0,10) node[above left] {$z$};

        \draw[very thick, black] (0,0)--(0,9)--(9,9)--(9,0)--cycle;

        \fill[red, opacity=0.4] (0,0) -- (4,0) -- (4,3) -- (0,3) -- cycle;
        \fill[red, opacity=0.4] (0,6) -- (4,6) -- (4,9) -- (0,9) -- cycle;
    \end{tikzpicture}
    \caption{The slice $y=0.1$}
    \label{fig:sigma1y=1}
    \end{subfigure}
    \begin{subfigure}[t]{0.2\linewidth}
    \centering
	\begin{tikzpicture}[scale=0.1]  
        \draw[->] (0,0) -- (0,10);
        \draw[->] (0,0) -- (10,0);

        \draw (10,0) node[below right] {$x$};
    	\draw (0,10) node[above left] {$z$};

        \draw[very thick, black] (0,0)--(0,9)--(9,9)--(9,0)--cycle;
        
        \fill[blue, opacity=0.4] (4,3) -- (9,3) -- (9,6) -- (4,6) -- cycle;
    \end{tikzpicture}
    \caption{The slice $y=0.4$}
    \label{fig:sigma1y=4}
    \end{subfigure}
	\caption{The collective strategy $\phi_1(\bstrat^0)$} \label{fig:sigma1}
\end{figure}

\begin{figure}[h]
    \centering
    \begin{subfigure}[t]{0.25\textwidth}
    \centering
	\begin{tikzpicture}[scale=0.18]  
        \draw[->] (0,0) -- (0,10);
        \draw[->] (0,0) -- (10,0);

        \draw (10,0) node[below right] {$x$};
    	\draw (0,10) node[above left] {$y$};

        \fill[red, opacity=0.4] (0,0) -- (4,0) -- (4,7) -- (0,7) -- cycle;
        \fill[blue, opacity=0.4] (4,7) -- (9,7) -- (9,9) -- (4,9) -- cycle;
        \fill[blue, opacity=0.4] (0,7) -- (4,7) -- (4,9) -- (0,9) -- cycle;
        \fill[blue, opacity=0.4] (4,0) -- (9,0) -- (9,7) -- (4,7) -- cycle;

        \draw[red] (2,3.5) node {$\H$};
        \draw[blue!70!black] (6.5,8) node {$\T$};
    \end{tikzpicture}
    \caption{Ada's strategy}
    \label{fig:sigma2Ada}
    \end{subfigure}
    \begin{subfigure}[t]{0.25\textwidth}
    \centering
	\begin{tikzpicture}[scale=0.18]  
        \draw[->] (0,0) -- (0,10);
        \draw[->] (0,0) -- (10,0);

        \draw (10,0) node[below right] {$z$};
    	\draw (0,10) node[above left] {$y$};

        \fill[red, opacity=0.4] (0,0) -- (3,0) -- (3,7) -- (0,7) -- cycle;
        \fill[red, opacity=0.4] (6,0) -- (9,0) -- (9,7) -- (6,7) -- cycle;
        \fill[blue, opacity=0.4] (3,7) -- (6,7) -- (6,9) -- (3,9) -- cycle;
        \fill[blue, opacity=0.4] (0,7) -- (3,7) -- (3,9) -- (0,9) -- cycle;
        \fill[blue, opacity=0.4] (6,7) -- (9,7) -- (9,9) -- (6,9) -- cycle;
        \fill[blue, opacity=0.4] (3,0) -- (6,0) -- (6,7) -- (3,7) -- cycle;

        \draw[red] (1.5,3.5) node {$\H$};
        \draw[blue!70!black] (4.5,8) node {$\T$};
        \draw[red] (7.5,3.5) node {$\H$};
    \end{tikzpicture}
    \caption{Bertrand's strategy}
    \label{fig:sigma2Bertrand}
    \end{subfigure}
    \begin{subfigure}[t]{0.25\textwidth}
    \centering
	\begin{tikzpicture}[scale=0.18]  
        \draw[->] (0,0) -- (0,10);
        \draw[->] (0,0) -- (10,0);

        \draw (10,0) node[below right] {$x$};
    	\draw (0,10) node[above left] {$z$};

        \fill[red, opacity=0.4] (0,0) -- (4,0) -- (4,3) -- (0,3) -- cycle;
        \fill[red, opacity=0.4] (0,6) -- (4,6) -- (4,9) -- (0,9) -- cycle;
        \fill[blue, opacity=0.4] (4,3) -- (9,3) -- (9,6) -- (4,6) -- cycle;
        \fill[blue, opacity=0.4] (0,3) -- (4,3) -- (4,6) -- (0,6) -- cycle;
        \fill[blue, opacity=0.4] (4,6) -- (9,6) -- (9,9) -- (4,9) -- cycle;
        \fill[blue, opacity=0.4] (4,0) -- (9,0) -- (9,3) -- (4,3) -- cycle;

        \draw[red] (2,1.5) node {$\H$};
        \draw[red] (2,7.5) node {$\H$};
        \draw[blue!70!black] (6.5,4.5) node {$\T$};
    \end{tikzpicture}
    \caption{Claude's strategy}
    \label{fig:sigma2Claude}
    \end{subfigure}
    \begin{subfigure}[t]{0.35\textwidth}
    \centering
	\begin{tikzpicture}[scale=0.25, line join=round]  
                
        \newcommand{\rcuboid}[6]{%
          \pgfmathsetmacro{\x}{#1}
          \pgfmathsetmacro{\y}{#2}
          \pgfmathsetmacro{\z}{#3}
          \pgfmathsetmacro{\xa}{\x+#4}
          \pgfmathsetmacro{\ya}{\y+#5}
          \pgfmathsetmacro{\za}{\z+#6}
        
          \fill[red, opacity=0.4] (\x,\y,\z) -- (\xa,\y,\z) -- (\xa,\ya,\z) -- (\x,\ya,\z) -- cycle; 
          \fill[red, opacity=0.4] (\x,\y,\z) -- (\xa,\y,\z) -- (\xa,\y,\za) -- (\x,\y,\za) -- cycle; 
          \fill[red, opacity=0.4] (\x,\y,\z) -- (\x,\ya,\z) -- (\x,\ya,\za) -- (\x,\y,\za) -- cycle; 
          \fill[red, opacity=0.4] (\xa,\ya,\za) -- (\x,\ya,\za) -- (\x,\y,\za) -- (\xa,\y,\za) -- cycle; 
          \fill[red, opacity=0.4] (\xa,\ya,\za) -- (\xa,\y,\za) -- (\xa,\y,\z) -- (\xa,\ya,\z) -- cycle; 
          \fill[red, opacity=0.4] (\xa,\ya,\za) -- (\xa,\ya,\z) -- (\x,\ya,\z) -- (\x,\ya,\za) -- cycle; 
        
          \draw[thick] (\x,\y,\z) -- (\xa,\y,\z) -- (\xa,\ya,\z) -- (\x,\ya,\z) -- cycle; 
          \draw[thick] (\x,\y,\za) -- (\xa,\y,\za) -- (\xa,\ya,\za) -- (\x,\ya,\za) -- cycle; 
          \draw[thick] (\x,\y,\z) -- (\x,\y,\za);
          \draw[thick] (\xa,\y,\z) -- (\xa,\y,\za);
          \draw[thick] (\xa,\ya,\z) -- (\xa,\ya,\za);
          \draw[thick] (\x,\ya,\z) -- (\x,\ya,\za);
        }

        \newcommand{\gcuboid}[6]{%
          \pgfmathsetmacro{\x}{#1}
          \pgfmathsetmacro{\y}{#2}
          \pgfmathsetmacro{\z}{#3}
          \pgfmathsetmacro{\xa}{\x+#4}
          \pgfmathsetmacro{\ya}{\y+#5}
          \pgfmathsetmacro{\za}{\z+#6}
        
          \fill[blue, opacity=0.4] (\x,\y,\z) -- (\xa,\y,\z) -- (\xa,\ya,\z) -- (\x,\ya,\z) -- cycle; 
          \fill[blue, opacity=0.4] (\x,\y,\z) -- (\xa,\y,\z) -- (\xa,\y,\za) -- (\x,\y,\za) -- cycle; 
          \fill[blue, opacity=0.4] (\x,\y,\z) -- (\x,\ya,\z) -- (\x,\ya,\za) -- (\x,\y,\za) -- cycle; 
          \fill[blue, opacity=0.4] (\xa,\ya,\za) -- (\x,\ya,\za) -- (\x,\y,\za) -- (\xa,\y,\za) -- cycle; 
          \fill[blue, opacity=0.4] (\xa,\ya,\za) -- (\xa,\y,\za) -- (\xa,\y,\z) -- (\xa,\ya,\z) -- cycle; 
          \fill[blue, opacity=0.4] (\xa,\ya,\za) -- (\xa,\ya,\z) -- (\x,\ya,\z) -- (\x,\ya,\za) -- cycle; 
        
          \draw[thick] (\x,\y,\z) -- (\xa,\y,\z) -- (\xa,\ya,\z) -- (\x,\ya,\z) -- cycle; 
          \draw[thick] (\x,\y,\za) -- (\xa,\y,\za) -- (\xa,\ya,\za) -- (\x,\ya,\za) -- cycle; 
          \draw[thick] (\x,\y,\z) -- (\x,\y,\za);
          \draw[thick] (\xa,\y,\z) -- (\xa,\y,\za);
          \draw[thick] (\xa,\ya,\z) -- (\xa,\ya,\za);
          \draw[thick] (\x,\ya,\z) -- (\x,\ya,\za);
        }

        \draw[->] (0,0,0) -- (10,0,0);
    	\draw[->] (0,0,0) -- (0,10,0);
    	\draw[->] (0,0,0) -- (0,0,10);
    	
    	\draw (10,0,0) node[below right] {$x$};
    	\draw (0,10,0) node[above left] {$y$};
    	\draw (0,0,10) node[left] {$z$};

        \rcuboid{0}{0}{0}{4}{7}{3}

        \fill[black, opacity=0.3] (0,0,4)--(0,9,4)--(9,9,4)--(9,0,4);

        \gcuboid{4}{7}{3}{5}{2}{3}

        \draw[very thick, black, dashed] (4,7,4)--(4,9,4)--(9,9,4)--(9,7,4)--cycle;

        \fill[black, opacity=0.3] (0,0,8)--(0,9,8)--(9,9,8)--(9,0,8);
        
        \rcuboid{0}{0}{6}{4}{7}{3}
        
        \draw[very thick, black, dashed] (0,0,8)--(0,7,8)--(4,7,8)--(4,0,8)--cycle;
    \end{tikzpicture}
    \caption{The sets $\H\H\H$ and $\T\T\T$}
    \label{fig:sigma2HHH}
    \end{subfigure}
    \begin{subfigure}[t]{0.2\linewidth}
    \centering
	\begin{tikzpicture}[scale=0.1]  
        \draw[->] (0,0) -- (0,10);
        \draw[->] (0,0) -- (10,0);

        \draw (10,0) node[below right] {$x$};
    	\draw (0,10) node[above left] {$z$};

        \draw[very thick, black] (0,0)--(0,9)--(9,9)--(9,0)--cycle;

        \fill[blue, opacity=0.4] (4,7)--(4,9)--(9,9)--(9,7)--cycle;
    \end{tikzpicture}
    \caption{The slice $z=0.4$}
    \label{fig:sigma2z=4}
    \end{subfigure}
    \begin{subfigure}[t]{0.2\linewidth}
    \centering
	\begin{tikzpicture}[scale=0.1]  
        \draw[->] (0,0) -- (0,10);
        \draw[->] (0,0) -- (10,0);

        \draw (10,0) node[below right] {$x$};
    	\draw (0,10) node[above left] {$z$};

        \draw[very thick, black] (0,0)--(0,9)--(9,9)--(9,0)--cycle;
        
        \fill[red, opacity=0.4] (0,0)--(0,7)--(4,7)--(4,0)--cycle;
    \end{tikzpicture}
    \caption{The slice $z=0.8$}
    \label{fig:sigma2z=8}
    \end{subfigure}
	\caption{The collective strategy $\phi_2 \circ \phi_1(\bstrat^0)$} \label{fig:sigma2}
\end{figure}

\begin{figure}[h]
    \centering
    \begin{subfigure}[t]{0.22\textwidth}
	\begin{tikzpicture}[scale=0.18]  
        \draw[->] (0,0) -- (0,10);
        \draw[->] (0,0) -- (10,0);

        \draw (10,0) node[below right] {$x$};
    	\draw (0,10) node[above left] {$y$};
˜
        \fill[red, opacity=0.4] (0,0) -- (4,0) -- (4,7) -- (0,7) -- cycle;
        \fill[blue, opacity=0.4] (4,7) -- (9,7) -- (9,9) -- (4,9) -- cycle;
        \fill[blue, opacity=0.4] (0,7) -- (4,7) -- (4,9) -- (0,9) -- cycle;
        \fill[blue, opacity=0.4] (4,0) -- (9,0) -- (9,7) -- (4,7) -- cycle;

        \draw[red] (2,3.5) node {$\H$};
        \draw[blue!70!black] (6.5,8) node {$\T$};
    \end{tikzpicture}
    \caption{Ada's strategy}
    \label{fig:sigmastarAda}
    \end{subfigure}
    \begin{subfigure}[t]{0.22\textwidth}
    \centering
	\begin{tikzpicture}[scale=0.18]  
        \draw[->] (0,0) -- (0,10);
        \draw[->] (0,0) -- (10,0);

        \draw (10,0) node[below right] {$z$};
    	\draw (0,10) node[above left] {$y$};

        \fill[red, opacity=0.4] (5,0) -- (9,0) -- (9,7) -- (5,7) -- cycle;
        \fill[blue, opacity=0.4] (0,7) -- (5,7) -- (5,9) -- (0,9) -- cycle;
        \fill[blue, opacity=0.4] (5,7) -- (9,7) -- (9,9) -- (5,9) -- cycle;
        \fill[blue, opacity=0.4] (0,0) -- (5,0) -- (5,7) -- (0,7) -- cycle;

        \draw[blue!70!black] (2.5,8) node {$\T$};
        \draw[red] (7,3.5) node {$\H$};
    \end{tikzpicture}
    \caption{Bertrand's strategy}
    \label{fig:sigmastarBertrand}
    \end{subfigure}
    \begin{subfigure}[t]{0.22\textwidth}
    \centering
	\begin{tikzpicture}[scale=0.18]  
        \draw[->] (0,0) -- (0,10);
        \draw[->] (0,0) -- (10,0);

        \draw (10,0) node[below right] {$x$};
    	\draw (0,10) node[above left] {$z$};

        \fill[red, opacity=0.4] (0,5) -- (4,5) -- (4,9) -- (0,9) -- cycle;
        \fill[blue, opacity=0.4] (4,0) -- (9,0) -- (9,5) -- (4,5) -- cycle;
        \fill[blue, opacity=0.4] (0,0) -- (4,0) -- (4,5) -- (0,5) -- cycle;
        \fill[blue, opacity=0.4] (4,5) -- (9,5) -- (9,9) -- (4,9) -- cycle;

        \draw[red] (2,7) node {$\H$};
        \draw[blue!70!black] (6.5,2.5) node {$\T$};
    \end{tikzpicture}
    \caption{Claude's strategy}
    \label{fig:sigmastarClaude}
    \end{subfigure}
    \begin{subfigure}[t]{0.25\textwidth}
    \centering
	\begin{tikzpicture}[scale=0.2, line join=round]  
                
        \newcommand{\rcuboid}[6]{
          \pgfmathsetmacro{\x}{#1}
          \pgfmathsetmacro{\y}{#2}
          \pgfmathsetmacro{\z}{#3}
          \pgfmathsetmacro{\xa}{\x+#4}
          \pgfmathsetmacro{\ya}{\y+#5}
          \pgfmathsetmacro{\za}{\z+#6}
        
          \fill[red, opacity=0.4] (\x,\y,\z) -- (\xa,\y,\z) -- (\xa,\ya,\z) -- (\x,\ya,\z) -- cycle; 
          \fill[red, opacity=0.4] (\x,\y,\z) -- (\xa,\y,\z) -- (\xa,\y,\za) -- (\x,\y,\za) -- cycle; 
          \fill[red, opacity=0.4] (\x,\y,\z) -- (\x,\ya,\z) -- (\x,\ya,\za) -- (\x,\y,\za) -- cycle; 
          \fill[red, opacity=0.4] (\xa,\ya,\za) -- (\x,\ya,\za) -- (\x,\y,\za) -- (\xa,\y,\za) -- cycle; 
          \fill[red, opacity=0.4] (\xa,\ya,\za) -- (\xa,\y,\za) -- (\xa,\y,\z) -- (\xa,\ya,\z) -- cycle; 
          \fill[red, opacity=0.4] (\xa,\ya,\za) -- (\xa,\ya,\z) -- (\x,\ya,\z) -- (\x,\ya,\za) -- cycle; 
        
          \draw[thick] (\x,\y,\z) -- (\xa,\y,\z) -- (\xa,\ya,\z) -- (\x,\ya,\z) -- cycle; 
          \draw[thick] (\x,\y,\za) -- (\xa,\y,\za) -- (\xa,\ya,\za) -- (\x,\ya,\za) -- cycle; 
          \draw[thick] (\x,\y,\z) -- (\x,\y,\za);
          \draw[thick] (\xa,\y,\z) -- (\xa,\y,\za);
          \draw[thick] (\xa,\ya,\z) -- (\xa,\ya,\za);
          \draw[thick] (\x,\ya,\z) -- (\x,\ya,\za);
        }

        \newcommand{\gcuboid}[6]{%
          \pgfmathsetmacro{\x}{#1}
          \pgfmathsetmacro{\y}{#2}
          \pgfmathsetmacro{\z}{#3}
          \pgfmathsetmacro{\xa}{\x+#4}
          \pgfmathsetmacro{\ya}{\y+#5}
          \pgfmathsetmacro{\za}{\z+#6}
        
          \fill[blue, opacity=0.4] (\x,\y,\z) -- (\xa,\y,\z) -- (\xa,\ya,\z) -- (\x,\ya,\z) -- cycle; 
          \fill[blue, opacity=0.4] (\x,\y,\z) -- (\xa,\y,\z) -- (\xa,\y,\za) -- (\x,\y,\za) -- cycle; 
          \fill[blue, opacity=0.4] (\x,\y,\z) -- (\x,\ya,\z) -- (\x,\ya,\za) -- (\x,\y,\za) -- cycle; 
          \fill[blue, opacity=0.4] (\xa,\ya,\za) -- (\x,\ya,\za) -- (\x,\y,\za) -- (\xa,\y,\za) -- cycle; 
          \fill[blue, opacity=0.4] (\xa,\ya,\za) -- (\xa,\y,\za) -- (\xa,\y,\z) -- (\xa,\ya,\z) -- cycle; 
          \fill[blue, opacity=0.4] (\xa,\ya,\za) -- (\xa,\ya,\z) -- (\x,\ya,\z) -- (\x,\ya,\za) -- cycle; 
        
          \draw[thick] (\x,\y,\z) -- (\xa,\y,\z) -- (\xa,\ya,\z) -- (\x,\ya,\z) -- cycle; 
          \draw[thick] (\x,\y,\za) -- (\xa,\y,\za) -- (\xa,\ya,\za) -- (\x,\ya,\za) -- cycle; 
          \draw[thick] (\x,\y,\z) -- (\x,\y,\za);
          \draw[thick] (\xa,\y,\z) -- (\xa,\y,\za);
          \draw[thick] (\xa,\ya,\z) -- (\xa,\ya,\za);
          \draw[thick] (\x,\ya,\z) -- (\x,\ya,\za);
        }

        \draw[->] (0,0,0) -- (10,0,0);
    	\draw[->] (0,0,0) -- (0,10,0);
    	\draw[->] (0,0,0) -- (0,0,10);
    	
    	\draw (10,0,0) node[below right] {$x$};
    	\draw (0,10,0) node[above left] {$y$};
    	\draw (0,0,10) node[left] {$z$};

        \gcuboid{4}{7}{0}{5}{2}{5} 
        \rcuboid{0}{0}{5}{4}{7}{4}
    \end{tikzpicture}
    \caption{The sets $\H\H\H$ and $\T\T\T$}
    \label{fig:sigmastarHHH}
    \end{subfigure}
	\caption{The collective strategy $\bstrat^\star$} \label{fig:sigmastar}
\end{figure}
    \begin{sublemma} \label{prop:maintainsvalue}
        The collective strategy $\bstrat'$ has a value at least as large as the collective strategy $\bstrat$.
    \end{sublemma}

    \begin{proof}
        We prove this sublemma by showing the inequalities:
        $$\prob\left( \strat'_\ada = \strat'_\bert = \strat'_\clau = \H \right) \geq \int f\quad\text{and}\quad\prob\left( \strat'_\ada = \strat'_\bert = \strat'_\clau = \T \right) \geq \int g.$$
        Let us prove the first one:
        \begin{align*}
            &\prob\left( \strat'_\ada = \strat'_\bert = \strat'_\clau = \H \right) \\
            =~ & \prob(D_1 \leq \lambda) \prob\left( \strat'_\ada = \strat'_\bert = \strat'_\clau = \H \mid D_1 \leq \lambda \right)\\
            &+ \prob(D_1 > \lambda) \prob\left( \strat'_\ada = \strat'_\bert = \strat'_\clau = \H \mid D_1 > \lambda \right) \\
            =~ & \lambda \prob\left( \strat_\ada = \strat_\bert = \strat_\clau = \H \mid D_1 = x_1 \right) \\
            & + (1-\lambda) \prob\left( \strat_\ada = \strat_\bert = \strat_\clau = \H \mid D_1 = x_2 \right) \\
        =~ & \lambda f(x_1) + (1-\lambda) f(x_2) \\
        =~ & u^\star \geq \int f.
        \end{align*}
        The second one is analogous.

        Then, the value of the collective strategy $\bstrat'$ is:
        $$\min\left\{ \prob\left( \strat'_\ada = \strat'_\bert = \strat'_\clau = \H \right), \prob\left( \strat'_\ada = \strat'_\bert = \strat'_\clau = \T\right) \right\}$$
        $$\geq \min\left\{\int f, \int g\right\},$$
        which was the value of the collective strategy $\bstrat$.
    \end{proof}

    \paragraph*{Conclusion}

    Let us write $\phi_1: \bstrat \mapsto \bstrat'$ for the transformation described above.
    We define analogously the transformations $\phi_2$ and $\phi_3$ that construct new collective strategies from two slices taken by fixing the second and the third roll, respectively, instead of the first one.
    The same proofs show that these transformations also satisfy \Cref{prop:2piecewiseconstant} (with $D_2$ and $D_3$, respectively, instead of $D_1$), \Cref{prop:2piecewiseconstantstable} and \Cref{prop:maintainsvalue}.

    From the collective strategy $\bstrat^0$, we then define the collective strategy $\bstrat^\star = \phi_3 \circ \phi_2 \circ \phi_1(\bstrat^0)$.
    By \Cref{prop:maintainsvalue}, the value of this new collective strategy is at least as high as that of $\bstrat^0$, and by \Cref{prop:2piecewiseconstant} and \Cref{prop:2piecewiseconstantstable}, it is $2$-piecewise constant for dice $D_1, D_2$, and $D_3$.
    In other words, it is $2$-grid.
\end{proof}

A consequence of \Cref{lm:itsOKtobestraightexample} is that the supremum of values of collective strategies in this game is the supremum of quantities $t$ satisfying the following inequations:
\begin{equation}
\begin{split}
    t \leq &~\lambda_1 \lambda_2 \lambda_3 a_{11} b_{ 11} c_{ 11}\\
    &+ \lambda_1 \lambda_2 (1-\lambda_3) a_{11} b_{ 12} c_{ 12} + \dots \\
    &+ (1-\lambda_1) (1-\lambda_2) (1-\lambda_3) a_{22} b_{ 22} c_{ 22}
\end{split}
\end{equation}

\begin{equation}
\begin{split}
    t \leq &~\lambda_1 \lambda_2 \lambda_3 (1-a_{11}) (1-b_{ 11}) (1-c_{ 11}) + \dots \\
    &+ (1-\lambda_1) (1-\lambda_2) (1-\lambda_3) (1-a_{22}) (1-b_{ 22}) (1-c_{ 22})
\end{split}
\end{equation}
\begin{equation}
    \forall i \in \{1, 2, 3\}, 0 \leq \lambda_i \leq 1
\end{equation}
(the full version of this system is available in App.~\ref{app:inequations}), where $t, \lambda_1, \lambda_2$, and $\lambda_3$ are real variables, representing the thresholds chosen for dice $D_1, D_2$, and $D_3$, respectively, and $a_{ij}, b_{ jk},$ and $c_{ ik}$ are Boolean variables, such that $a_{ij} = 0$ or $a_{ij} = 1$ means that Ada plays Heads or Tails, respectively, when $D_1$ takes a value that is smaller than or equal to (if $i = 1$) the threshold $\lambda_1$, or greater than (if $i = 2)$ that threshold, and $D_2$ is smaller than or equal to (if $j = 1$) the threshold $\lambda_2$, or greater than (if if $j = 2)$ that threshold.

Since all inequalities are non-strict, this system has a solution that maximises the variable $t$; using the SMT solver Z3~\cite{Zthree}, we find that that maximal value for $t$ is $\beta$, which proves \Cref{thm:3v1example}.

\subsection{General result: the optimality of grid strategies}

We now generalise \Cref{lm:itsOKtobestraightexample}   to all dicey games.
The proof follows the same structure.

\begin{restatable}[App.~\ref{app:itsOKtobestraight}]{theorem}{straightLineTheorem}\label{thm:itsOKtobestraight}
    Let $\DGame$ be a dicey game.
    Let $k$ be the number of actions available for the Devil.
    Then, for every collective strategy, there exists a $k$-grid strategy whose value is at least as large.
\end{restatable}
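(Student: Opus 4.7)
The plan is to generalise the proof of \Cref{lm:itsOKtobestraightexample} by replacing its two-dimensional geometric argument, which crucially exploited the fact that the Devil had only $k=2$ actions, by a $k$-dimensional one for arbitrary $k = |\Actions_\opp|$. For each die $\Die \in \Dice$ and each collective strategy $\bstrat$, I would define a vector-valued function $\bar{f} : [0,1] \to \Rb^k$ whose $c$-th coordinate, indexed by a Devil action $c \in \Actions_\opp$, records the conditional expectation $\Eb[\payoff(\bstrat(\broll), c) \mid \Die = x]$ when $\Die$ takes value $x$. The expected payoff against Devil action $c$ is then $\int f_c$, the value of $\bstrat$ equals $\min_c \int f_c$, and the goal becomes to produce a new collective strategy $\phi_\Die(\bstrat)$ which is $k$-piecewise constant for $\Die$ and whose associated vector $\int \bar{f}'$ dominates $\int \bar{f}$ coordinate-wise.

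The geometric heart of the argument is a $k$-dimensional application of Carath\'eodory's theorem. Writing $V = \{\bar{f}(x) \mid x \in [0,1]\} \subset \Rb^k$, the point $\int \bar{f}$ lies in $\Conv(V)$, so there exist rolls $x_1, \ldots, x_{k+1}$ whose images span a simplex $T = \Conv\{\bar{f}(x_1), \ldots, \bar{f}(x_{k+1})\}$ containing $\int \bar{f}$. Let $Q = \prod_c [\int f_c, +\infty) \subseteq \Rb^k$ be the dominating orthant. I would then observe that the ray starting at $\int \bar{f} \in T \cap Q$ in direction $(1, \ldots, 1)$ remains inside $Q$, but since $T$ is bounded must leave $T$ at some point $v^\star \in \partial T \cap Q$. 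Since $v^\star$ lies on some $(k-1)$-dimensional face of $T$, it is a convex combination of only $k$ of the vertices, say $v^\star = \sum_{j=1}^{k} \lambda_j \bar{f}(x_{i_j})$. I would then build $\phi_\Die(\bstrat)$ by partitioning $[0,1]$ into $k$ consecutive intervals $I_1, \ldots, I_k$ of respective lengths $\lambda_1, \ldots, \lambda_k$, and letting every team player with access to $\Die$ play, whenever $\Die$ rolls in $I_j$, the action dictated by the slice of $\bstrat$ at $\Die = x_{i_j}$, while players without access to $\Die$ retain their strategies.

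Three properties of $\phi_\Die$ then need to be verified, exactly mirroring the three propositions used in the proof of \Cref{lm:itsOKtobestraightexample}: first, $\phi_\Die(\bstrat)$ is $k$-piecewise constant for $\Die$, directly from the construction; second, if $\bstrat$ is already $k$-piecewise constant for some other die $\Die'$, then so is $\phi_\Die(\bstrat)$, because each slice inherits the piecewise property; and third, the value of $\phi_\Die(\bstrat)$ is at least that of $\bstrat$, by a routine conditioning identity that, for each Devil action $c$, yields $\int f'_c = \sum_{j=1}^k \lambda_j f_c(x_{i_j}) = v^\star_c \geq \int f_c$, whence the new minimum over $c$ dominates the old. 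Composing these transformations $\phi_\Die$ one die at a time then yields a $k$-grid collective strategy whose value dominates that of the original.

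The main obstacle is the geometric step: showing that some $k$-subset of the $k+1$ Carath\'eodory points already suffices, i.e.\ that $\partial T \cap Q$ is non-empty. In dimension two this was visually evident from \Cref{fig:VTQ}, but in general dimension it requires the ray-exit argument sketched above, together with a brief verification that the argument remains valid when $T$ is degenerate (fewer than $k+1$ affinely independent vertices), in which case Carath\'eodory's decomposition already uses fewer than $k+1$ points and there is nothing to reduce. A minor technical subtlety is that $V$ need not be closed, but this causes no problem since Carath\'eodory's theorem applies to convex hulls of arbitrary subsets of $\Rb^k$.
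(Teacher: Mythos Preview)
Your proposal is correct and follows essentially the same approach as the paper's proof: define the conditional-expectation map $\bar f:[0,1]\to\Rb^k$, apply Carath\'eodory's theorem to place $\int\bar f$ in a simplex on $k+1$ points of $V$, reduce to $k$ vertices by intersecting with the dominating orthant, and then rebuild the strategy from the corresponding $k$ slices; the three preservation propositions and the final composition over all dice are identical. The only difference is cosmetic: where the paper simply asserts that the orthant meets some facet of the Carath\'eodory simplex, you justify this via the ray in direction $(1,\dots,1)$, which is a cleaner way of spelling out the same fact.
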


\section{Optimal strategies and their size}\label{sec:optimal}
In this section, we build on \Cref{thm:itsOKtobestraight} to give concrete tools for computing collective strategies in dicey games.

Thanks to \Cref{thm:itsOKtobestraight}, the existence of a collective strategy achieving a given threshold payoff $t$ can be expressed as a sentence in the existential theory of the reals (ETR). Concretely, this provides a system of exponentially many polynomial inequalities over polynomially many existentially quantified real variables. This formulation is already useful for computational purposes, and we spell it out explicitly in \Cref{lm:inequations}.

The main result of the section 
goes a step further: we show that optimal collective strategies exist, and admit finite descriptions whose size can be bounded. The proof proceeds in several stages.

By \Cref{lm:inequations}, achievable payoffs are characterised by a system of inequations, and the $\max\min$ value is the largest number $t$ for which the solution set of the system is nonempty.
To capture optimality, we introduce the Fritz John optimality conditions~\cite{John48,mangasarian67FritzJohn}, which provide necessary conditions for $t$ to attain its maximum.
These conditions define the semi-algebraic set of \emph{Fritz John's points}.



We then appeal to a result from real algebraic geometry due to Collins, from his work on cylindrical algebraic decomposition~\cite{Col74}. This result implies that the set of Fritz John's points, like any semi-algebraic set, can be partitioned into finitely many connected components, called \emph{strata}.
A key step is then to show that within each stratum, the value of $t$ is constant (\cref{prop:FritzJohnfinite}). The set of Fritz John's points has therefore the shape of a French \emph{millefeuille}, made of finitely many flat strata: see \Cref{fig:millefeuille} for an illustration.
Finally, we invoke results from computational real algebraic geometry due to Basu, Pollack, and Roy~\cite{BPR96,BPR97}, which guarantee that each connected component (and thus each stratum) contains a point with at most exponential bit complexity.
This is in particular the case of the top stratum, the one meant to hold the frosting: this implies that both the optimal value and the probabilities defining an optimal collective strategy admit exponential size representations.

\begin{figure}[h]
    \centering
\begin{tikzpicture}[
    x={(1cm,0cm)}, 
    y={(0cm,1cm)}, 
    z={(0.5cm,0.3cm)}, 
    scale=0.8,
    font=\sffamily\small
]

\def\width{3.5}
\def\depth{3.0}
\def\hPastry{0.15} 
\def\hCream{0.5}  
\def\hFrost{0.3}  

\definecolor{pastryColor}{HTML}{D2691E} 
\definecolor{pastryFace}{HTML}{DEB887}  
\definecolor{creamColor}{HTML}{FFFACD}   
\definecolor{creamFace}{HTML}{F0E68C}    
\definecolor{frostingColor}{HTML}{FFFAF0} 
\definecolor{chocolateLine}{HTML}{4B3621} 

\pgfmathsetmacro{\yPos}{0}

\newcommand{\drawLayer}[5]{
    \pgfmathsetmacro{\tH}{#1 + #2} 
    \pgfmathsetmacro{\mH}{#1 + #2/2} 

    \fill[#3] (0, \tH, 0) -- (\width, \tH, 0) -- (\width, \tH, \depth) -- (0, \tH, \depth) -- cycle;
    
    \draw[thin, black!20] (0, \tH, \depth) -- (\width, \tH, \depth) -- (\width, \tH, 0); 
    \draw[thin, black!20] (0, #1, \depth) -- (0, \tH, \depth); 
    
    \node[anchor=east] at (0, \mH, \depth) {#5};
}


\pgfmathsetmacro{\totalHeight}{2*\hPastry + 2*\hCream + \hFrost}
\draw[->, thick, >=stealth] (-1, -0.5, \depth) -- (-1, \totalHeight, \depth) node[above] {$t$};
\draw (-1.1, 0, \depth) -- (-0.9, 0, \depth); 

\drawLayer{\yPos}{\hPastry}{pastryFace}{pastryColor}{$S_1$}
\pgfmathsetmacro{\yPos}{\yPos + \hPastry}

\drawLayer{\yPos}{\hCream}{creamColor}{creamFace}{$S_2$}
\pgfmathsetmacro{\yPos}{\yPos + \hCream}

\drawLayer{\yPos}{\hPastry}{pastryFace}{pastryColor}{$S_3$}
\pgfmathsetmacro{\yPos}{\yPos + \hPastry}

\drawLayer{\yPos}{\hCream}{creamColor}{creamFace}{$S_4$}
\pgfmathsetmacro{\yPos}{\yPos + \hCream}

\pgfmathsetmacro{\fStart}{\yPos}
\pgfmathsetmacro{\fTop}{\yPos + \hFrost}

\drawLayer{\fStart}{\hFrost}{frostingColor}{frostingColor!90!gray}{}

\begin{scope}[canvas is xz plane at y=\fTop]
    \clip (0,0) rectangle (\width, \depth);
    
    \def\amplitude{0.2}
    \def\cycles{4} 
    
    \foreach \xStart in {0.2, 0.7, ..., \width} {
        \draw[chocolateLine, very thick] 
            plot[domain=0:\depth, samples=100] 
            ({\xStart + \amplitude * sin(\x * 360 * \cycles / \depth)}, \x);
    }
    
    \draw[thin, black!20] (0,0) rectangle (\width, \depth);
\end{scope}
\end{tikzpicture}
    \caption{The set of Fritz John points}
    \label{fig:millefeuille}
\end{figure}

\subsection{A system of polynomial inequalities}\label{subsec:systemofPolynomial}

To formally define a system of inequalities analogous to the one presented at the end of \Cref{ssec:triangular}, we first define some notation. 
Throughout this section, we assume, up to renaming them, that our dice form the set $\Dice = \{1, \dots, n\}$.
We let $k = |\Actions_\dev|$.

We first define the notion of \emph{strategy scheme}.
A $k$-grid strategy is such that the roll space can be cut into $k^n$ rectangles, such that on each rectangle, the tuple of actions that are prescribed to the team players is constant.
Thus, a $k$-grid strategy is defined by the thresholds at which we cut the roll space, and by the actions prescribed in each of those rectangles.
This second part is what a strategy scheme $\bbaction$ captures: for each indices $j_1, \dots, j_n$, and each player $p$, the action $a_{j_1 \dots j_n p}$ is the action played by player $p$ when the die $1$ falls into the $j_1$th interval, \dots, and the die $n$ falls into the $j_n$th interval.
A $\bbaction$-strategy is a $k$-grid strategy that is compatible with the strategy scheme $\bbaction$.

\begin{definition}[Strategy scheme, $\bbaction$-strategy]
    Let $\DGame$ be a dicey game.
    A \emph{strategy scheme} in $\DGame$ is a tuple of tuples:
    $$\bbaction = (a_{j_1 \dots j_n p})_{\substack{j_1, \dots, j_n \in [k]\\p \in \Pi}} \in \prod_{j_1, \dots, j_n \in [k]} \prod_{p \in \Pi} A_p.$$
    Given a strategy scheme $\bbaction$, we call \emph{$\bbaction$-strategy} a strategy such that there exists a tuple of tuples:
    $$\bblambda = (\lambda_{Dj})_{\substack{D \in \Dice \\ j \in [k]}} \in [0, 1]^{\Dice \times [k]}$$
    such that for each $D$, we have $\sum_j \lambda_{Dj} = 1$, and that for every roll $\broll$, if $j_1, \dots, j_n$ are the indices such that $\roll_D \in (\lambda_{Dj_1} + \dots + \lambda_{Dj_{D-1}}, \lambda_{Dj_1} + \dots + \lambda_{Dj_D}]$ for each die $D$, then the action profile $\bstrat(\broll)$ is equal to the action profile $\baction_{j_1 \dots j_n}$.
\end{definition}

Note that every $k$-grid strategy is an $\bbaction$-strategy for some strategy scheme $\bbaction$, and that conversely, every $\bbaction$-strategy is $k$-grid.
The following lemma follows from the definition of $\bbaction$-strategies.

\begin{lemma}\label{lm:inequations}
    Let $\bbaction$ be a strategy scheme.
    Then, for every real number $t$, there exists an $\bbaction$-strategy that guarantees at least the expected payoff $t$ if and only if there exists a solution of the form $(\bblambda, t)$ to the system of inequalities $\Sigma_{\bbaction}$, defined as follows:
    \begin{equation}\label{eq:positive}
        \forall D \in \Dice, \forall j \in [k], \lambda_{Dj} \geq 0
    \end{equation}
    \begin{equation}\label{eq:sum1}
        \forall D \in \Dice, \sum_{j=1}^k \lambda_{Dj} - 1 = 0
    \end{equation}
    \begin{equation}\label{eq:polyb}
        \forall b \in \Actions_\dev, \sum_{j_1=1}^k \dots \sum_{j_n=1}^k \mu(\baction_{j_1 \dots j_n}, b) \lambda_{1 j_1} \dots \lambda_{n j_n} - t \geq 0
    \end{equation}
\end{lemma}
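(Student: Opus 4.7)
The plan is to unpack the definition of an $\bbaction$-strategy directly and check that the conditions in $\Sigma_{\bbaction}$ are exactly a rewriting of the conditions on $\bblambda$ and the expected-payoff guarantee. The proof naturally splits into two implications, tied together by a single probability calculation.

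For the forward direction, suppose $\bstrat$ is an $\bbaction$-strategy of value at least $t$, and let $\bblambda$ be the tuple that witnesses this according to the definition. The conditions $\lambda_{Dj}\geq 0$ (equation~\ref{eq:positive}) and $\sum_j \lambda_{Dj}=1$ (equation~\ref{eq:sum1}) hold because for each die $D$ the values $\lambda_{D1},\dots,\lambda_{Dk}$ are, by definition, the widths of a partition of $[0,1]$ into $k$ consecutive subintervals. The key observation is now the following probability computation: since the dice in $\Dice$ are independent and each $D$ is uniform on $[0,1]$, the probability that $\broll$ lies in the cell where $\bstrat$ plays $\baction_{j_1 \dots j_n}$ equals $\prod_{D=1}^n \lambda_{Dj_D}$. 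Hence, for every pure Devil action $b \in \Actions_\dev$,
\begin{equation*}
    \Eb(\bstrat,b) = \sum_{j_1=1}^k \dots \sum_{j_n=1}^k \mu(\baction_{j_1 \dots j_n},b)\,\lambda_{1j_1}\dots\lambda_{nj_n}.
\end{equation*}
Since $\val(\bstrat)\geq t$, this quantity is at least $t$ for every $b$, which is exactly equation~\ref{eq:polyb}. Thus $(\bblambda,t)$ solves $\Sigma_{\bbaction}$.

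For the converse, suppose $(\bblambda,t)$ solves $\Sigma_{\bbaction}$. Equations~\ref{eq:positive} and~\ref{eq:sum1} ensure that for each die $D$ the widths $\lambda_{D1},\dots,\lambda_{Dk}$ define a valid subdivision of $[0,1]$, so they determine an $\bbaction$-strategy $\bstrat$ through the definition. Reusing the same probability computation, for every $b\in\Actions_\dev$ the expected payoff $\Eb(\bstrat,b)$ equals the polynomial appearing in equation~\ref{eq:polyb}, which is at least $t$ by assumption. The Devil's optimal response to a fixed team strategy may always be chosen pure, as he has no dice available and therefore plays deterministically; consequently $\val(\bstrat) = \min_{b\in\Actions_\dev}\Eb(\bstrat,b)\geq t$.

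Overall, the argument is a direct translation between two descriptions of the same object, and no real obstacle is expected. The only subtlety worth highlighting is the reduction to pure Devil responses, which is what allows the single family of polynomial inequalities~\ref{eq:polyb}, indexed by $b\in\Actions_\dev$, to fully capture the condition $\val(\bstrat)\geq t$; this is legitimate because the Devil has no dice in $\Struct$ and hence, against a fixed $\bstrat$, an optimal response is attained at some pure action.
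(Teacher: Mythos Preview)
Your proof is correct and is exactly the approach the paper takes: the paper simply states that the lemma ``follows from the definition of $\bbaction$-strategies,'' and you have written out that unpacking in full. The only superfluous remark is the final paragraph's ``subtlety'' about pure Devil responses, since $\val(\bstrat)$ is already \emph{defined} in the paper as $\min_{\action\in\Actions_\opp}\Eb(\bstrat,\action)$.
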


Using \Cref{thm:itsOKtobestraight}, computing the optimal value a strategy can ensure can be done by, first, finding the optimal strategy scheme $\bbaction$, and second, finding a solution to the system $\Sigma_{\bbaction}$ that maximises the coordinate $t$.
With the example presented in \Cref{sec:example}, we have already shown that this optimal value may be an irrational number.
However, we will see in the next subsection that it is algebraic, with a finite representation whose size can be bounded.

\subsection{Optimal collective strategies}\label{subsec:optimalcollective}

\begin{theorem}\label{thm:exponentialbitsize}
    Let $\DGame$ be a dicey game.
    Then, there exists an optimal collective strategy in $\DGame$, which has a finite representation whose size is bounded by an exponential function of the size of $\DGame$.
    Moreover, the same bound applies to the value of that strategy.
\end{theorem}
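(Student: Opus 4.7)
The plan is to leverage the polynomial system $\Sigma_{\bbaction}$ from \Cref{lm:inequations} together with the tools from computational real algebraic geometry and nonlinear optimisation that the introduction to the section advertises. First, by \Cref{thm:itsOKtobestraight} it suffices to search over $k$-grid collective strategies, i.e. pairs $(\bbaction, \bblambda)$. The number of strategy schemes $\bbaction$ is $(\prod_{p \in \Pi} |A_p|)^{k^n}$, which is singly exponential in the size of $\DGame$. Hence it is enough to prove, for each fixed $\bbaction$, an exponential size bound on an optimal pair $(\bblambda^\star, t^\star)$ of $\Sigma_{\bbaction}$: the overall optimum is then the maximum of finitely many such values, with no further blow-up.

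Fix then a strategy scheme $\bbaction$. The system $\Sigma_{\bbaction}$ consists of polynomial (in)equalities whose number, degrees and coefficient bit-sizes are all polynomial in $|\DGame|$ (the degrees are at most $n+1$, and the coefficients come directly from $\mu$). I would apply the \emph{Fritz John optimality conditions} to the maximisation of the linear functional $t$ over $\Sigma_{\bbaction}$: every local maximiser $(\bblambda^\star, t^\star)$ must satisfy an augmented semi-algebraic system $\Phi_{\bbaction}$ obtained by adjoining to $\Sigma_{\bbaction}$ a non-trivial conical combination of the gradient of $t$ and the gradients of the active constraints. The set $F_{\bbaction}$ of such Fritz John points is again semi-algebraic and still described by polynomially many polynomials of polynomial degree and bit-size.

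Next I invoke Collins' cylindrical algebraic decomposition~\cite{Col74} to partition $F_{\bbaction}$ into finitely many connected components (strata). The pivotal structural fact, which is \Cref{prop:FritzJohnfinite} in the paper, is that the coordinate $t$ is constant on every stratum: the Fritz John equations precisely express that $t$ is stationary along any feasible direction, so if $t$ varied along a continuous path of Fritz John points then that path would produce a feasible direction of strict increase, contradicting stationarity. This gives the ``millefeuille'' of \Cref{fig:millefeuille}, and reduces the optimum $t^\star$ to the largest $t$-value taken on some stratum.

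Finally, I apply the effective bounds of Basu, Pollack and Roy~\cite{BPR94,BPR96,BPR97} collected in Appendix~\ref{app:hammers}: every connected component of a semi-algebraic set defined by $s$ polynomials in $N$ variables of degree $\leq d$ and coefficient bit-size $\leq \tau$ contains a point whose algebraic coordinates admit a representation of bit-size $s^N d^{O(N)} \tau$. Plugging in our polynomial parameters for $F_{\bbaction}$ and for the number $N = nk + 1$ of variables yields a single-exponential bound in $|\DGame|$, applied first to the stratum realising $t^\star$ to bound $t^\star$ itself, and then to the same stratum to produce an $(\bblambda^\star, t^\star)$ with exponential bit-size. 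Taking the maximum over the (singly exponentially many) strategy schemes $\bbaction$ completes the proof. The main technical hurdle is the clean verification of \Cref{prop:FritzJohnfinite}: Fritz John's formulation (rather than Karush--Kuhn--Tucker) is needed because no constraint qualification is assumed, and one must check that the degenerate case of a vanishing objective-multiplier can be handled separately; once this is in place, the rest of the argument is a careful, but standard, accounting of degrees and bit-sizes through the real-algebraic toolbox.
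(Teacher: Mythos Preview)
Your outline matches the paper's proof almost step for step: reduce to $\bbaction$-strategies via \Cref{thm:itsOKtobestraight}, apply Fritz John to the system $\Sigma_{\bbaction}$, stratify the Fritz John set, show $t$ is constant on each stratum, and invoke Basu--Pollack--Roy on the top stratum.

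Two points to tighten. First, the number of strategy schemes is \emph{doubly} exponential (their representation is singly exponential, which is all the existence statement needs, so the miscount is harmless here). Second, and more importantly, your heuristic for \Cref{prop:FritzJohnfinite} is not the right mechanism: a Fritz John point need not be a stationary point of $t$, so a path of Fritz John points along which $t$ increases contradicts nothing a priori. The paper does not ``handle separately'' the case $\alpha_0=0$; it proves that $\alpha_0\neq 0$ \emph{always} holds (\Cref{prop:alpha0neq0}), and this is exactly what powers the constancy argument: on a stratum the active constraints are identically zero along any curve $\gamma$, hence $(f\circ\gamma)'=\langle\nabla f,\gamma'\rangle=-\sum_i(\alpha_i/\alpha_0)\langle\nabla P_i,\gamma'\rangle=0$. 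So the nonvanishing of the objective multiplier is the crux of the argument, not a degenerate side case.
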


\begin{proof}
    The full statement of the algebraic geometry theorems that are used in this proof can be found in App.~\ref{app:hammers}.

    First, let us observe that by \Cref{thm:itsOKtobestraight}, the quantity:
    $$\sup \{\val(\bstrat) \mid \bstrat \text{ is a collective strategy}\}$$
    equals the quantity
    $\max_{\bbaction} \sup\{\val(\bstrat) \mid \bstrat \text{ is a $\bbaction$-strategy}\}.$
    We can therefore choose a strategy scheme $\bbaction$ that realises the maximum above, and focus on $\bbaction$-strategies.
    Over the variables $\varLambda_{Dj}$, for $D \in \Dice$ and $j \in [k]$, and $T$, let us define the polynomials $P_i$, for each index $i \in I = \Actions_\dev \cup \Dice \times \{1, \dots, k, \geq, \leq\}$, as follows:
    \begin{itemize}
        \item if $i = (D, j) \in \Dice \times [k]$, then $P_i = \varLambda_{Dj}$.

        \item If $i = (D, \geq)$ for some die $D$, then $P_i = \sum_j \varLambda_{Dj} - 1$.

        \item If $i = (D, \leq)$ for some die $D$, then $P_i = 1 - \sum_j \varLambda_{Dj}$.
    
        \item If $i$ is an action $b \in \Actions_\dev$, then:
        $$P_b = \sum_{j_1} \dots \sum_{j_n} \mu(\bbaction, b) \varLambda_{1j_1} \dots \varLambda_{nj_n} - T.$$
    \end{itemize}

    Let $S \subseteq \Rb^{\Dice \times [k]} \times \Rb$ be the set of all pairs $(\bblambda, t)$ such that $P_i(\bblambda, t) \geq 0$ for each $i \in I$.
    A rephrasing of \Cref{lm:inequations} is that for every $t \in \Rb$, there exists an $\bbaction$-strategy of value at least $t$ if and only if there exists a pair $(\bblambda, t) \in S$.

    Over the set $S$, let us consider the projection $f: (\bblambda, t) \mapsto t$.
    By Weierstrass' extreme value theorem, as a continuous function over a compact set, it has a maximum---hence an optimal collective strategy exists.
    Let us now focus on the points where it finds that maximum.
    By Fritz John's necessary conditions~\cite[Theorem 4.2.8]{bazaraa2006nonlinear}, every such point is necessarily a \emph{Fritz John's point} of the function $f$, i.e., a point $(\bblambda, t)$ such that there exist coefficients $\alpha_0$ and $\alpha_i$ for each $i \in I$, all non-negative and not all zero, such that we have:
    $$\alpha_0 \nabla f(\bblambda, t) + \sum_{i \in I} \alpha_i \nabla P_i(\bblambda, t) = \bzero,$$
    with $\alpha_i = 0$ for every $i$ that is not an \emph{active constraint} at point $(\bblambda, t)$---the set $\Active(\bblambda, t) \subseteq I$ of active constraints is defined as the set of indices $i \in I$ such that $P_i(\bblambda, t) = 0$.    
    Let $F$ denote the set of all Fritz John points of $f$.
    We first state a technical proposition.
    
    \begin{restatable}[App.~\ref{app:proofOflinearalgebraProposition}]{proposition}{alphaneq}\label{prop:alpha0neq0}
        Let $(\bblambda, t) \in F$, and let $\alpha_0, (\alpha_i)_i$ be the corresponding coefficients.
        Then, we have $\alpha_0 \neq 0$.
    \end{restatable}
    \begin{proof}[Proof sketch]
        We assume toward contradiction that we have $\alpha_0=0$, and therefore:
        $$\sum_{i \in I} \alpha_i \nabla P_i(\bblambda, t) = \bzero.$$
        First, we show that for each $b\in \Actions_\dev$, we have $\alpha_b = \bzero$.
        Then, for all other active constraints, we show that the vectors $\nabla P_i(\bblambda, t) $ are linearly independent: then, all coefficients $\alpha_i$ are $0$, which is excluded by their definition.
    \end{proof}

We now turn our interest to the shape of the set $F$.

    \begin{proposition} \label{prop:FritzJohnfinite}
        The set $f(F)$ is finite.
    \end{proposition}

    \begin{proof}
        As a projection of a semialgebraic set, the set $F$ is semialgebraic.
        Then, by \cite[Theorem 5.38]{BPR06book}, there exists a finite partition $(S_\l)_{\l \in [m]}$ of $F$ such that each set $S_\l$ (called \emph{stratum}):
        \begin{itemize}
            \item is connected;
            \item and is such that the set of active constraints $\Active(\bblambda, t)$ is the same for all points $(\bblambda, t) \in S_\l$---we denote it by $I_\l$.
        \end{itemize}

        Our result will be proven if we show that the function $f$ is constant on each stratum.
        Let us therefore fix $\l \in [m]$.
        Let $\gamma: [0, 1] \to S_\l$ be a smooth curve on $S_\l$.
        We prove that the mapping $f \circ \gamma: [0, 1] \to \Rb$ is constant; since the set $S_\l$ is connected, that will be enough to end our proof.
        To do so, we prove that the derivative of this function is everywhere zero.
        Let $s \in [0, 1]$.
        We have:
        $$(f \circ \gamma)'(s) = \left\langle \left. \nabla f(\gamma(s)) ~\right|~ \gamma'(s) \right\rangle,$$
        where $\langle \cdot | \cdot \rangle$ denotes the canonical scalar product over the space $\Rb^{\Dice \times [k]} \times \Rb$.
        Now, for each such $s$, we can define the corresponding coefficients $\alpha_0$ and $\alpha_i$ for $i \in I$, and since we have $\alpha_0 \neq 0$ by \Cref{prop:alpha0neq0}, we can write:
        $$f(\gamma(s)) = -\sum_{i \in I} \frac{\alpha_i}{\alpha_0} P_i,$$
        which, inserted into the previous equality, yields:
        $$(f \circ \gamma)'(s) = \left\langle \left. \nabla \left( -\sum_{i \in I} \frac{\alpha_i}{\alpha_0} P_i\right)(\gamma(s)) ~\right|~ \gamma'(s) \right\rangle$$
        $$= -\sum_{i \in I} \frac{\alpha_i}{\alpha_0} \left\langle \left. \nabla P_i(\gamma(s)) ~\right|~ \gamma'(s) \right\rangle.$$
        To prove that this sum equals $0$, we now show that each term actually equals $0$.
        To do so, we separate the two cases:
        \begin{itemize}
            \item for $i \not\in I_\l$, the constraint $i$ is inactive for all points of the stratum $S_\l$, hence $\alpha_i = 0$.

            \item For $i \in I_\l$, by definition of active constraints, we have $P_i(\gamma(s')) = 0$ for all $s'$.
            By taking the derivatives, we get:
            $$\left\langle \left. \nabla P_i(\gamma(s)) ~\right|~ \gamma'(s) \right\rangle = 0.$$
        \end{itemize}

        The whole sum above is thus zero, i.e., we have $(f \circ \gamma)'(s) = 0$ for all $s \in [0, 1]$, hence the function $f \circ \gamma$ is constant over $[0, 1]$, and therefore the function $f$ is constant over the stratum $S_\l$.
        Since there are finitely many strata, the result follows.
    \end{proof}

    We can now use this result to conclude.
    As already observed, the set $F$ is a semialgebraic set.
    We can therefore apply a result by Basu, Pollack and Roy~\cite[Theorem~13.11]{BPR06book}, which says that in such a set, each connected component contains a point whose coordinates are all algebraic, and whose size is $LD^{O(N)}$, where:
    \begin{itemize}
        \item $L$ is the maximal size of the coefficients of all polynomials $P_i$, which in our case is bounded by the size of $\DGame$.
        \item $D$ is the maximal degree of those polynomials, in our case~$n$.
        \item $N$ is the dimension of the space, in our case $nk+1$.
    \end{itemize}

    By \Cref{prop:FritzJohnfinite}, the projection $f(F)$ is finite.
    There is therefore a connected component $K$ of $F$ such that $f(K) = \{\max f\}$; and thus, from that connected component, we can pick a point $(\bblambda, t) \in K \subseteq S$ such that $t = \max f$, and whose size is $LD^{O(N)}$, i.e., is bounded by an exponential function of the size of the dicey game $\DGame$.
    This point defines an optimal strategy in $\DGame$ and its value.
\end{proof}

\section{Algorithms and complexities}\label{sec:algos}
As tradition all but requires, we round up our structural results with a few computational complexity bounds on problems related to our newly introduced model.
\emph{Threshold} problems, i.e., deciding whether team players can ensure a certain value, can be done by solving the system of inequations defined \Cref{lm:inequations}, a practical consequence of \Cref{thm:itsOKtobestraight}.
The \emph{value computation} problems---computing the $\max\min$ value of a dicey game---are more requiring, and our upper bounds also rely on the second structural result provided by \Cref{thm:exponentialbitsize}.
The lower bounds are proven with reductions from classical problems.

\subsection{The threshold problem}

The first problem we study is a decision problem: in a given dicey game, is it possible to achieve an expected payoff at least as high as a specified threshold?

\begin{problem}[Threshold problem]
    Given a dicey game $\DGame$ and a threshold $\threshold \in \Qb$, is there a collective strategy $\bstrat$ in $\DGame$ with $\val(\bstrat) \geq \threshold$?
\end{problem}

Using \Cref{lm:inequations}, this problem can be reduced to solving a system of inequations of exponential size, hence the following.

\begin{lemma} \label{lm:threshold_expspace_easy}
    The threshold problem is in $\EXPSPACE$\footnote{Our proof actually locates this problem in the class $\textsf{succ}\exists\Rb$, as defined in the work of Blaser et al.~\cite{BDJMvdZ24}.
    This class is known to be included in $\EXPSPACE$.
    It is however unlikely that our problem is complete for this class, since we reduce it to solving a system of inequations that has exponential size, but polynomially many variables---which has been used in the proof of \Cref{thm:exponentialbitsize} to prove that it has an optimal solution of exponential size, which would not be true for every instance of the \emph{succinct existential theory of the reals}, the canonical complete problem for the class $\textsf{succ}\exists\Rb$.}.
\end{lemma}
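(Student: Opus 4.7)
The plan is to reduce the threshold problem to exponentially many instances of the existential theory of the reals (ETR), each of exponential bit-size but with only polynomially many variables, and then to invoke a \PSPACE{} decision procedure for ETR on these instances to obtain an \EXPSPACE{} upper bound overall.

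First, I would apply \Cref{thm:itsOKtobestraight} to restrict attention to $k$-grid strategies, where $k = |\Actions_\dev|$. By definition, every $k$-grid strategy is an $\bbaction$-strategy for some strategy scheme $\bbaction$, so the answer to the threshold problem is ``yes'' if and only if there exists a scheme $\bbaction$ for which the system $\Sigma_\bbaction$ from \Cref{lm:inequations} admits a solution of the form $(\bblambda, t)$ with the threshold $t \in \Qb$ from the input. A scheme $\bbaction$ has $k^n \cdot |\Pi|$ coordinates, where $n = |\Dice|$, so there are at most exponentially many candidate schemes in the size of $\DGame$, each of exponential bit-size. An \EXPSPACE{} machine can enumerate them one by one on its exponential-size work tape.

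Second, for each fixed scheme $\bbaction$, the system $\Sigma_\bbaction$ has $nk$ real variables (the $\lambda_{Dj}$'s), $O(nk)$ linear constraints coming from \eqref{eq:positive}--\eqref{eq:sum1}, and $|\Actions_\dev|$ polynomial constraints of degree $n$ coming from \eqref{eq:polyb}, each a sum of $k^n$ monomials whose coefficients $\mu(\baction_{j_1 \ldots j_n}, b)$ can be read off from the succinct encoding of $\payoff$ in time polynomial in the size of $\DGame$. Its total bit-size is thus singly exponential in the size of the input. Deciding whether $\Sigma_\bbaction$ is satisfiable is therefore an instance of ETR of exponential formula-size; by Canny's theorem, ETR is decidable in polynomial space in the size of the input formula, so this subroutine runs in exponential space in the original input. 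Returning ``yes'' the first time an enumerated scheme yields a satisfiable system, and ``no'' once all schemes have been exhausted, gives the claimed \EXPSPACE{} procedure.

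The main obstacle is the careful bookkeeping of sizes: both the enumeration of strategy schemes and the instantiated system $\Sigma_\bbaction$ are of exponential bit-size, and we must ensure that a single \EXPSPACE{} machine can store them and feed them to Canny's algorithm without blowing past its space bound. The crucial point is that a \PSPACE{} algorithm for ETR is polynomial in the \emph{formula size}, so running it on an exponential-size formula uses only exponential space, which fits within our budget. The footnote's sharper placement in $\textsf{succ}\exists\Rb$ would follow by observing that the enumeration plus system can be presented as a succinct ETR instance generated by a polynomial-size circuit, but this refinement is not needed for the \EXPSPACE{} bound.
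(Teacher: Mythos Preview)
Your proof is correct and follows essentially the same approach as the paper: restrict to $k$-grid strategies via \Cref{thm:itsOKtobestraight}, range over strategy schemes $\bbaction$, and decide satisfiability of each exponential-size system $\Sigma_\bbaction$ using a \PSPACE{} procedure for ETR. The only cosmetic difference is that the paper phrases this as nondeterministically guessing a scheme (landing in $\NEXPSPACE$) and then invoking Savitch's theorem, whereas you enumerate schemes deterministically; the two are equivalent here.
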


\begin{proof}
    Given a dicey game $\DGame$ and a threshold $t$, deciding whether there exists a collective strategy in $\DGame$ with value at least $t$ can be done by, first, guessing a strategy scheme $\bbaction$ (which can be done in exponential time), and then checking whether the system defined by \Cref{lm:inequations} has a solution.
    That system can be written as a formula in the existential theory of the reals, and such formulas can be decided in $\PSPACE$.
    Since the system's size is exponential in the size of the dicey game $\DGame$, we obtain membership to the class $\NEXPSPACE$, which is equal to $\EXPSPACE$ by Savitch's theorem.
\end{proof}
   
We now give a lower bound.

\begin{lemma} \label{lm:threshold_nexptime_hard}
    The threshold problem is $\NEXP$-hard.
\end{lemma}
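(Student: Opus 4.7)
The plan is to prove $\NEXP$-hardness by a polynomial-time reduction from the satisfiability problem for \emph{dependency quantified Boolean formulas} ($\DQBF$), which is known to be $\NEXP$-complete and whose quantifier structure is perfectly mirrored by dicey games: the universal block becomes the Devil's adversarial choices, while existential variables with prescribed dependency sets become team players with prescribed dice access.

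Concretely, starting from a $\DQBF$ instance
$\forall \bar{x}\,\exists y_1(S_1)\cdots\exists y_m(S_m)\,\phi(\bar{x}, \bar{y})$
with $\phi$ in 3-CNF, I would construct a polynomial-size dicey game with: one die $D_i$ per universal variable $x_i$; one team player $P_j$ per existential $y_j$, with $\Dice_{P_j} = \{D_i : x_i \in S_j\}$; and ``announcer'' team players $Q_i$, one per universal, each with $\Dice_{Q_i} = \{D_i\}$. All team actions are Boolean. The Devil is equipped with polynomially many actions of two kinds: \emph{bit tests} $(i, b)$ paying $1$ iff the announcer $Q_i$ plays $b$, and \emph{clause checks} $c$ paying $1$ iff the team's announcement $(\bar{q}, \bar{y})$ satisfies the clause $C_c$. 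Both payoffs admit a short if-then-else description, keeping the game polynomial in the $\DQBF$ input.

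Setting the threshold at $1/2$, the positive direction is immediate: given Skolem functions $f_j$ witnessing $\DQBF$ satisfiability, the team achieves value exactly $1/2$ by playing $q_i = \mathbf{1}[D_i \geq 1/2]$ and $y_j = f_j(\bar{q}_{S_j})$, so that every clause check succeeds with probability $1$ and every bit test with probability $1/2$. For the negative direction, the bit tests force each $q_i$ to be a uniform Bernoulli, and since the $Q_i$'s read pairwise disjoint dice the $q_i$'s are independent, making $\bar{q}$ uniform on $\{0,1\}^n$. A derandomisation argument then extracts deterministic Skolem functions from any team strategy meeting the threshold, contradicting $\DQBF$-unsatisfiability.

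The main technical obstacle is that the team may ``spread'' the $\DQBF$ failure probability across different clauses by exploiting the continuous shared randomness of the dice, so that no individual clause check falls below $1/2$ even when $\DQBF$ is unsatisfiable. I plan to overcome this either by amplifying the clause-check payoffs with a matching-pennies-flavoured gadget that turns any positive clause-failure probability into a value strictly below the threshold, or by reducing instead from the acceptance problem for two-prover one-round interactive proof systems with exponentially small soundness gap (available by parallel repetition and known to equal $\NEXP$), a route in which the dice simulate the verifier's random bits and the separation at threshold $1/2$ comes for free.
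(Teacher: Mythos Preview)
Your reduction skeleton matches the paper's: $\DQBF$, one die per universal, announcer players $Q_i$ reading $D_i$ alone, existential players $P_j$ reading $\{D_i : x_i \in S_j\}$, Devil actions split into bit tests and clause checks. You also correctly identify the one real obstacle: with $\{0,1\}$ payoffs and threshold $1/2$, clause checks need only succeed with probability $1/2$, so an unsatisfiable $\DQBF$ instance need not push the value below threshold.

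The gap is that you reach for heavy machinery (an amplification gadget, or the MIP/parallel-repetition route) without carrying either out, when the fix is a one-line payoff adjustment. The paper makes the payoffs \emph{asymmetric}: bit test $(i,b)$ pays $2$ if $Q_i$ plays $b$ and $0$ otherwise; clause checks still pay $1$ on satisfaction and $0$ otherwise; the threshold is $t = 1$. Bit tests still force uniformity ($2p_b \geq 1$ for both $b$ gives $p_0 = p_1 = 1/2$), but now a clause check pays at most $1$, so expected payoff $\geq 1$ against action $C_k$ forces that clause to be satisfied \emph{almost surely}. Any positive-measure region of roll space on which some clause fails drops the value strictly below $1$, and the spreading problem disappears. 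For the final extraction of Skolem functions, the paper invokes the $k$-grid normal form (\Cref{thm:itsOKtobestraight}) to pass from ``almost surely'' to concrete sub-intervals on which each $Q_i$ is deterministic, then reads off $f_j$ from what $P_j$ does on the corresponding product cells; your ``derandomisation argument'' would need similar care with the continuous dice.
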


\begin{proof}
    We proceed by reduction from the $\NEXP$-hard problem $\DQBF$: Given a formula of the form:
    $$\forall x_1, \dots, \forall x_n, \exists y_1(X_1), \dots, \exists y_m(X_m), \bigwedge_{k=1}^q C_k,$$
    where each $X_j$ is a subset of $\{x_1, \dots, x_n\}$ and each $C_k$ is a clause of three literals over the variables $x_1, \dots, x_n, y_1, \dots, y_m$, does there exist a collection of mappings $f_j: \{0, 1\}^{X_j} \to \{0, 1\}$ such that for every valuation $\nu: \{x_1, \dots, x_n\} \to \{0, 1\}$, the valuation:
    $$\begin{cases}
        x_i & \mapsto \nu(x_i) \\
        y_j & \mapsto f_j\left(\nu_{|X_j}\right)
    \end{cases}$$
    satisfies all the clauses $C_k$?

    \paragraph*{Definition of the dicey game $\DGame$}

    Let us consider such a formula $\phi$.
    From $\phi$, we define the dicey game $\DGame$ as follows.
    The team is composed of $n+m$ players who are  named after the variables $x_1, \dots, x_n, y_1, \dots, y_m$.
    For each universal variable $x_i$, there is a die $\Die_i$, accessible by player $x_i$, and by every player $y_j$ such that $x_i \in X_j$.
    Each player other than the Devil chooses between two actions: $0$ and $1$, and the Devil chooses between the actions $\rand^0_1, \dots, \rand^0_n, \rand^1_1,$ $\dots,$ $\rand^1_n, C_1, \dots$, and~$C_q$.

    When the Devil chooses the action $\rand_i^b$, with $i \in [n]$ and $b \in \{0, 1\}$, the payoff is $2$ if player $x_i$ picks the action $b$, and $0$ otherwise.
    When the Devil chooses the action $C_k$, the payoff is $1$ if there is a positive literal $z$ in $C_k$ such that player $z$ plays the action $1$, or if there is a negative literal $\neg z$ in $C_k$ such that player $z$ plays the action $0$; otherwise, the payoff is $0$.
    The threshold is $\threshold = 1$.

    \paragraph*{If the formula is true, then the team has a collective strategy to get at least the expected payoff $1$}

    Let us assume that the formula is true, and that the mappings $f_1, \dots, f_m$ exist.
    
    Let us define a collective strategy $\bstrat$.
    For each $i$, player $x_i$ plays $0$ if the outcome of the die $D_i$ is below $0.5$, and $1$ otherwise.
    Thus, if the Devil chooses an action of the form $\rand_i^b$, the payoff $2$ is obtained with probability $0.5$, and the expected payoff is $1$.
    For each $j$, player $y_j$ has access to all dice $D_i$ such that $x_i \in X_j$, and therefore knows what every player $x_i \in X_j$ does; he can then play the action $f_j((\action_i)_{i \in X_j})$, where each $\action_i$ is the action played by player $x_i$.
    Thus, when the Devil picks an action of the form $C_k$, whatever actions the players $x_i$ choose, the players $y_j$ choose their action so that the corresponding valuation satisfies all clauses, and in particular the clause $C_k$: the payoff obtained is therefore always $1$.

    Whatever the Devil does, the expected payoff is therefore always $1$, which is then the value of the collective strategy $\bstrat$.

    \paragraph*{If the team has a collective strategy with value at least $1$, then the formula is true} Let $\bstrat$ be a collective strategy with value at least $1$.
    
    Using \Cref{thm:itsOKtobestraight}, we can assume that $\bstrat$ is $(2n+q)$-grid.
    Let us consider a player $x_i$.
    If there was an action $b \in \{0, 1\}$ that player $x_i$ would choose with probability greater than $0.5$, then the Devil could deterministically choose the action $\rand_i^{1-b}$, so that the expected payoff drops below $1$.
    This would be in contradiction with the hypothesis $\val(\bstrat) \geq 1$: therefore, player $x_i$'s strategy necessarily uses the only die he has access to, the die $\Die_i$, to play both actions with probability exactly $0.5$.
    Since the collective strategy $\bstrat$ is $(2n+q)$-grid, we can pick a non-empty interval $[u_{0i}, v_{0i}]$ on which all rolls are equivalent from all players' perspectives, and such that for every roll $r \in [u_{0i}, v_{0i}]$, we have $\strat_{x_i}(r) = 0$; and analogously, we can choose a similar interval $[u_{1i}, v_{1i}]$ such that for every roll $r \in [u_{1i}, v_{1i}]$, we have $\strat_{x_i}(r) = 1$.

    Let now $j \in [m]$.
    We define the mapping $f_j$ as follows: for every valuation $\nu: X_j \to \{0, 1\}$, we define $f_j(\nu)$ as the action chosen by player $y_j$ when for each variable $x_i \in X_i$, the outcome of the die $\Die_i$ falls in the interval $\left[u_{\nu(x_i)i}, v_{\nu(x_i)i}\right]$.

    Let then $\nu: \{x_1, \dots, x_n\} \to \{0, 1\}$ be a valuation.
    Let $C_k$ be a clause.
    Since we have $\val(\bstrat) \geq 1$, if the Devil plays the action $C_k$, the expected payoff is necessarily at least $1$; which means that the payoff is almost surely $1$.
    In particular, there is a positive probability that the outcome of each die $\Die$ falls in the interval $\left[u_{\nu(x_i)i}, v_{\nu(x_i)i}\right]$: in such a case, the payoff obtained by the team is necessarily $1$, which means that the actions played by the team players form a valuation that satisfies the clause $C_k$.
    In this event, the valuation formed by the team players' actions is exactly the valuation $\nu$ extended by the mappings $f_j$; which proves the validity of the formula $\phi$.
\end{proof}

We can thus conclude to the following theorem.

\begin{theorem} \label{thm:threshold_complexity}
    The threshold problem is in the complexity class $\EXPSPACE$ and is also $\NEXP$-hard.
\end{theorem}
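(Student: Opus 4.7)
The plan is essentially to wire together the two preceding lemmas, so the proof of this theorem is mostly a matter of bookkeeping. For the $\EXPSPACE$ upper bound I would invoke \Cref{lm:threshold_expspace_easy}: the key chain is that \Cref{thm:itsOKtobestraight} lets us restrict attention to $k$-grid collective strategies, each of which is described by a strategy scheme $\bbaction$ (of exponential but finite cardinality) together with threshold vectors $\bblambda$; \Cref{lm:inequations} then converts the question ``is there an $\bbaction$-strategy of value $\geq t$?'' into an existential theory of the reals query whose polynomials have exponential total size but only polynomially many variables. Guessing $\bbaction$ nondeterministically and solving the resulting ETR query in $\PSPACE$ puts the whole procedure in $\NEXPSPACE$, which collapses to $\EXPSPACE$ by Savitch's theorem.

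For the $\NEXP$-hardness lower bound I would just cite \Cref{lm:threshold_nexptime_hard}, whose reduction from $\DQBF$ already does all the delicate work: universal variables $x_i$ become dice $\Die_i$; existential second-order variables $y_j(X_j)$ become team players accessing exactly the dice $\Die_i$ for $x_i \in X_j$; the Devil's clause-checking actions $C_k$ force almost-sure satisfaction of each clause; and the Devil's randomness-enforcing actions $\rand^b_i$ force each $x_i$-player to play $0$ and $1$ with probability exactly $1/2$, so that a witnessing collective strategy must faithfully interpret each die as a Boolean variable. The threshold $\threshold=1$ couples both constraints, and this is exactly where \Cref{thm:itsOKtobestraight} is used in the converse direction to extract non-empty ``constant-action'' sub-intervals $[u_{bi},v_{bi}]$ from which the Skolem functions $f_j$ are read off.

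The combined theorem then follows in a single line:

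\begin{proof}
Immediate from \Cref{lm:threshold_expspace_easy} and \Cref{lm:threshold_nexptime_hard}.
\end{proof}

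If there is any ``hard part'' at all, it is not in the theorem itself but in the two lemmas: the $\EXPSPACE$ bound genuinely requires the normal-form \Cref{thm:itsOKtobestraight} (without it, one does not even know that a finitely-describable witness exists, let alone a polynomial-variable ETR encoding), and the $\NEXP$-hardness requires careful design of the Devil's action set so that meeting the threshold simultaneously forces (i) the correct marginal $1/2$-$1/2$ distribution on every die and (ii) the correct Boolean valuation on every clause. Once those are in hand, the theorem is a pure synthesis step.
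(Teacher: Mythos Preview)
Your proposal is correct and matches the paper's approach exactly: the paper presents \Cref{thm:threshold_complexity} immediately after \Cref{lm:threshold_expspace_easy} and \Cref{lm:threshold_nexptime_hard} with only the sentence ``We can thus conclude to the following theorem,'' so the theorem is indeed a pure synthesis of the two lemmas. Your surrounding commentary accurately identifies where the real work lies.
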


\subsection{Value computation}

We now consider the following functional problem.

\begin{problem}[Value computation]
    Given a dicey game $\DGame$, what is the maximal value a collective strategy in $\DGame$ can guarantee?
\end{problem}

This problem lies in the functional version of $\EXPSPACE$.
We do not provide a lower bound, as it is already implied by \Cref{thm:threshold_complexity}: once the value of a dicey game is known, deciding the threshold problem is immediate, which implies that the value computation problem is hard for the class $\NEXP$.

\begin{theorem} \label{thm:valuecomputation}
    There is an algorithm that solves the value computation problem in exponential space.
\end{theorem}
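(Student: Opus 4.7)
The plan is to reduce value computation to the structural result already established in \Cref{thm:itsOKtobestraight}, together with the system of inequalities $\Sigma_\bbaction$ from \Cref{lm:inequations} and the algebraic bounds from \Cref{thm:exponentialbitsize}. Concretely, the max-min value equals
\[
\max_{\bbaction}\; \sup\bigl\{\, t \in \Rb \;\bigm|\; (\bblambda, t) \text{ satisfies } \Sigma_{\bbaction} \text{ for some } \bblambda \,\bigr\},
\]
where $\bbaction$ ranges over all strategy schemes. The outer maximum is over at most $|A_\dev|^{|A|^{|\Pi|}}$ many schemes, i.e., exponentially many objects of exponential size, which can be enumerated one after the other using an exponential-size counter on the work tape.

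For each fixed $\bbaction$, the inner supremum is the solution of an optimization problem over a semi-algebraic set defined by $\Sigma_\bbaction$. By \Cref{lm:inequations}, $\Sigma_\bbaction$ consists of exponentially many polynomial inequalities in polynomially many real variables, whose coefficients are derived from the payoffs of $\DGame$. By \Cref{thm:exponentialbitsize}, the optimum is attained and has an algebraic representation of size bounded by an exponential in the input. I would use standard quantifier-elimination or optimization algorithms over the reals (for instance the Basu--Pollack--Roy machinery already invoked in \Cref{subsec:optimalcollective}): given a semi-algebraic optimization instance of bit-size $s$ with $N$ variables and degree $D$, the optimum can be computed in space polynomial in $s$, $N$ and $\log D$. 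Instantiated with our parameters ($s, D$ exponential, $N$ polynomial in the input), this yields an $\EXPSPACE$ procedure for each scheme.

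The overall algorithm then proceeds as follows: enumerate each strategy scheme $\bbaction$ in turn, compute the maximum threshold $t_\bbaction$ achievable under $\Sigma_\bbaction$ using the above procedure, compare it (in its algebraic representation, using standard real root isolation and comparison, which is polynomial in the representation size) against the current best, and output the overall maximum. Every intermediate quantity---the counter over schemes, the current candidate $\bbaction$, the current best value---fits in exponential space, and each stage reuses the same tape, so the total space consumption stays exponential.

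The main obstacle is purely a bookkeeping one: we must be careful that the algebraic numbers produced by the inner optimization step are compared reliably and stored in a canonical form of exponential bit-size, so that the outer maximum is well-defined and computable without blowing the space budget. This is however a routine application of the computational algebraic geometry tools already collected in Appendix~\ref{app:hammers}, and no new structural insight beyond \Cref{thm:itsOKtobestraight} and \Cref{thm:exponentialbitsize} is needed.
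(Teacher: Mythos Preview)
Your approach is correct in outline and reaches the right conclusion, but it differs from the paper's argument and contains a couple of minor inaccuracies worth flagging.

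The paper's proof is shorter and more modular: it does \emph{not} enumerate strategy schemes. Instead, it observes that by \Cref{thm:exponentialbitsize} the optimal value $t^\star$ is an algebraic number of size at most $e(s)$ for some exponential function $e$; it then enumerates all algebraic numbers of that size (an exponential-size counter suffices) and, for each candidate $t$, calls the $\EXPSPACE$ decision procedure for the threshold problem already established in \Cref{thm:threshold_complexity}. The largest $t$ for which the threshold problem answers ``yes'' is the value. So the paper swaps the two loops relative to your plan: outer loop over candidate values, inner loop (hidden inside the threshold oracle) over schemes. This buys a cleaner reduction that reuses \Cref{lm:threshold_expspace_easy} as a black box and avoids having to argue directly about real optimisation.

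Two small points in your write-up: first, the number of strategy schemes is not ``exponentially many'' but doubly exponentially many (each scheme has exponential bit-length); your subsequent remark that they can be enumerated with an exponential-size counter is what actually matters, and that part is fine. Second, your claim that Basu--Pollack--Roy optimisation runs ``in space polynomial in $s$, $N$ and $\log D$'' is not a standard statement of their results, which are time bounds; to get the space bound you implicitly need, you would in practice fall back on the bit-size bound from \Cref{thm:exponentialbitsize} together with the $\PSPACE$ decision procedure for the existential theory of the reals---at which point your argument converges to the paper's.
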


\begin{proof}
    Let $\DGame$ be a dicey game, and let $s$ be its size.
    By \Cref{thm:exponentialbitsize}, there exists a exponential function $e: \Nb \to \Nb$ such that the best value $t^\star$ that a collective strategy can achieve in this game has size smaller than $e(s)$.
    Our algorithm enumerates all possible values $t$ of size less that $e(s)$, and checking, for each of them, whether there exists a collective strategy that achieves it; and remembering the largest such $t$.
    This can be done in exponential space, using \Cref{thm:threshold_complexity}.
\end{proof}

\subsection{Bounded number of accessible dice}

The high complexities found in the previous subsections are the consequences of an exponential blow-up: by dividing the set of possible rolls for each die into $k$ intervals, the total set of possible rolls for a player that has access to $d$ dice is divided into $k^d$ rectangular sets, for which an action must be chosen.
In practice, however, it is reasonable to believe that each player has access to a small number of sources of randomness; all the more so as, as we will see in \Cref{sec:moreexamples}, that does not seem to be a strong disadvantage.
We therefore investigate the case where the number of dice each player has access to is bounded by a fixed integer: the \emph{die bound}.

The upper bound is, again, a consequence of \Cref{lm:inequations}.

\begin{lemma} \label{lm:threshold_existsR_easy}
    The threshold problem with fixed die bound is in $\NP^{\exists\Rb}$.
\end{lemma}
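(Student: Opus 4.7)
My plan is to combine \cref{thm:itsOKtobestraight} with \cref{lm:inequations}. By \cref{thm:itsOKtobestraight}, the team can achieve expected payoff at least $\threshold$ iff some $k$-grid strategy does, where $k=|\Actions_\opp|$; such a strategy decomposes into a scheme $\bbaction$ and thresholds $\bblambda$. The $\NP^{\exists\Rb}$ algorithm I have in mind nondeterministically guesses $\bbaction$ and then issues a single $\exists\Rb$ query asking whether thresholds $\bblambda$ satisfying $\Sigma_{\bbaction}$ exist.

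The die bound makes $\bbaction$ of polynomial size: since each team player $\player$ has access to at most $d$ dice, the component $\tilde a^\player$ of $\bbaction$ is a map $[k]^{\Dice_\player}\to\Actions_\player$ specified by at most $k^d$ action choices, so the entire scheme has size $O(|\Players|\,k^d\log\max_\player|\Actions_\player|)$, polynomial whenever $d$ is fixed. This fits the nondeterministic guess into $\NP$.

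The crux, and what I expect to be the most delicate step, is encoding feasibility of $\Sigma_{\bbaction}$ as a polynomial-size $\exists\Rb$ instance. The system has only $|\Dice|\cdot k$ real variables $\lambda_{Dj}$, but the polynomial appearing in constraint~\eqref{eq:polyb} has $k^{|\Dice|}$ monomials in expanded form. My plan is to compress it using polynomially many auxiliary variables with polynomial equality constraints: one variable $P_{\player,a}$ for each marginal probability $\prob[\strat_\player=a]$, constrained by $P_{\player,a}=\sum_{\bj:\tilde a^\player_\bj=a}\prod_{D\in\Dice_\player}\lambda_{Dj_D}$ (a polynomial with at most $k^d$ monomials); and, for each case of the succinct if-then-else description of $\payoff$, a variable representing the probability that the case is the first one to fire. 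The expected payoff against each Devil action is then a polynomial-size linear combination of these case variables. The hard part is that shared dice correlate the team players, so case probabilities do not simply factor through the $P_{\player,a}$: I expect to handle this by exploiting the locality imposed by the die bound together with the succinctness of the payoff description, since each case constrains only players in some $P_k$ and thus only dice in $\Dice_{P_k}=\bigcup_{\player\in P_k}\Dice_\player$. An arithmetic-circuit gadget over the $\lambda_{Dj}$ encoding each case-match probability, internalized by $\exists\Rb$ through one fresh variable and one polynomial equality per gate, should yield a polynomial-size formula, completing the oracle query.
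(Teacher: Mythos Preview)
Your high-level plan is exactly the paper's: guess a strategy scheme $\bbaction$ (of polynomial size thanks to the die bound) and ask an $\exists\Rb$ oracle whether $\Sigma_{\bbaction}$ is feasible. The paper's entire argument for the oracle step is the single clause ``and so does the system $\Sigma_{\bbaction}$''; you go further and correctly point out that constraint~\eqref{eq:polyb} has $k^{|\Dice|}$ monomials in expanded form, so some compression is required before a polynomial-size query can be issued.

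That said, your proposed compression does not close the gap. The die bound only limits the \emph{arity} of each player's dependence on the dice; it does not limit how many players share a given die, nor the treewidth of the resulting hypergraph. Consider one die per Boolean variable, one team player per clause of a $2$-CNF (so $d=2$), a single payoff case ``all team players play $\H$'', and a scheme where each clause-player outputs $\H$ exactly when its clause is satisfied by the die interval indices. Then $E_b(\bblambda)$ evaluated at $\lambda_{Dj}=\tfrac12$ equals $2^{-|\Dice|}$ times the number of satisfying assignments, a $\#\P$-complete quantity. Hence a polynomial-size arithmetic circuit for $E_b$ uniformly constructible from $(\DGame,\bbaction)$ would be very surprising, and your marginals $P_{\player,a}$ cannot suffice, since they discard precisely the cross-player correlations that carry the count. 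The ``locality'' heuristic fails for the same reason: a single if-then-else case may involve all players, and $\bigcup_{p\in P_\ell}\Dice_p$ may be all of $\Dice$.

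In short, you and the paper arrive at the same unfinished step; you are simply more candid about it. The ``should yield'' in your last sentence is hiding a genuinely hard point that neither your sketch nor the paper's one-liner addresses.
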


\begin{proof}
    By \Cref{thm:itsOKtobestraight,lm:inequations}, an algorithm that solves the threshold problem is the following: guess a strategy scheme $\bbaction$, then check that the system $\Sigma_\bbaction$ (as defined by \Cref{lm:inequations}) has a solution.
    With fixed die bound, strategy schemes have polynomial size, and so does the system $\Sigma_\bbaction$, hence the desired complexity.
\end{proof}

To show a lower bound, we need the following lemma.

\begin{restatable}[App.~\ref{app:intermediary_hardness_result}]{lemma}{ETRhardInitialProblem}\label{lemma:ETRhardInitialProblem}
    The following problem is $\exists\Rb$-hard: given a system $S$ of inequations $I$ of the form $I: \sum_{i=1}^n \sum_{j=1}^n \lambda_{ij} x_i x_j \geq 0$, with $\lambda_{ij} \in \Zb$ for each $i, j$ and variables $x_1, \dots, x_n$, does the system $S$ have a \emph{normal} solution, i.e. a solution satisfying all equations in $S$ as well as the inequality $x_i \geq 0$ for each $i$, and the equality $\sum_i x_i = 1$?
\end{restatable}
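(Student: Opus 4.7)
My approach is to establish $\exists\Rb$-hardness by reduction from a bounded-domain variant of quadratic feasibility. The key observation is that on the standard simplex $\Delta_n = \{\bar{x} : x_i \geq 0,\ \sum_i x_i = 1\}$, the identity $1 = \sum_i x_i$ can be used to turn every quadratic polynomial into a \emph{homogeneous} quadratic form that agrees with the original on the simplex; so the whole problem reduces to encoding a known hard instance on the simplex.

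As the source problem I would use the following variant of quadratic feasibility, which follows from the classical $\exists\Rb$-completeness of polynomial feasibility by a composition of standard reductions (replace each real variable $y$ by $y^+ - y^-$ with $y^+, y^- \geq 0$; introduce slack variables to turn equalities into pairs of inequalities; introduce auxiliary variables of the form $z_k = y_i y_j$ to reduce every degree down to $2$; finally rescale uniformly by $y_i \mapsto y_i / N$ for a suitable integer $N$ to confine any feasible solution to $\sum_i y_i \leq 1$): decide whether a system of quadratic polynomial inequalities $q_1(\bar{y}) \geq 0, \dots, q_k(\bar{y}) \geq 0$ with integer coefficients admits a solution satisfying $y_i \geq 0$ and $\sum_i y_i \leq 1$. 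I would then adjoin a single slack variable $y_0 := 1 - \sum_{i=1}^m y_i \geq 0$, so that the augmented vector $(y_0, y_1, \dots, y_m)$ lies exactly on the standard simplex.

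Given this, I would homogenize each inequality using $1 = \sum_{i=0}^m y_i$, which is valid on the simplex: replace every constant $c$ by $c \cdot \bigl(\sum_i y_i\bigr)^2$, every linear term $a y_j$ by $a y_j \cdot \bigl(\sum_i y_i\bigr)$, and leave every quadratic term unchanged. After clearing denominators, the resulting polynomial $\tilde q_r$ is a homogeneous quadratic form $\sum_{i,j} \lambda^{(r)}_{ij}\, y_i y_j$ with integer coefficients and takes exactly the same value as $q_r$ on the simplex. The transformed system thus admits a normal solution if and only if the source instance is feasible, and the reduction is clearly polynomial-time since each inequality expands by a factor $O((m+1)^2)$.

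The main obstacle is not the homogenization, which is essentially mechanical, but justifying the source: showing that quadratic feasibility restricted to the ``corner'' $\{\bar{y} \geq 0,\ \sum_i y_i \leq 1\}$ is $\exists\Rb$-hard in exactly that form. If no off-the-shelf reduction from the literature yields this restricted variant directly, I would construct one by hand, exploiting that $\exists\Rb$-hardness survives affine changes of coordinates (so hardness on $[0,1]^m$ transfers to any bounded region via a scalar substitution) and the addition of box constraints $0 \leq y_i \leq c$ when expressed as quadratic inequalities.
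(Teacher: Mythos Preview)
Your proposal is correct and follows essentially the same route as the paper: reduce to quadratic inequalities over a bounded nonnegative region (the paper anchors this step by citing Schaefer's ball-constrained formulation, which supplies exactly the a~priori bound your rescaling needs), pass to the simplex via a slack variable, and then homogenize using $1 = \sum_i x_i$. The only cosmetic difference is that the paper eliminates linear terms earlier by introducing a dummy variable $u$ forced to equal $1$, whereas you homogenize linear and constant terms together in a single final step.
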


Using that result, we show the following.

\begin{lemma} \label{lm:threshold_existsR_hard}
    The threshold problem with fixed die bound is $\exists\Rb$-hard, even with only two team players, each having access to one individual die (and, therefore, even for $d = 1$).
\end{lemma}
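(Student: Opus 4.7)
The plan is to reduce from the $\exists\Rb$-hard problem provided by Lemma~\ref{lemma:ETRhardInitialProblem}: given a system of quadratic inequations $I_k : \sum_{i,j=1}^n \lambda_{ij}^{(k)} x_i x_j \geq 0$ with integer coefficients, decide whether it admits a \emph{normal} solution, that is, a probability distribution $x$ on $[n]$ satisfying every $I_k$. From such a system I will construct in polynomial time a dicey game with two team players---Ada and Bertrand---each owning a single individual die, together with a threshold $t = 0$, in such a way that the threshold is achievable if and only if the input system has a normal solution.

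The construction is as follows. Let Ada and Bertrand share the action set $\{a_1, \ldots, a_n\}$; each of them owns one individual die, so by partitioning the unit interval into subintervals of chosen lengths, each player can realise any probability distribution on $\{a_1, \ldots, a_n\}$. Since the two dice are independent, the joint distribution of the two team actions is the product $p \otimes q$ of the two marginals. The Devil is equipped with three families of actions: one action $d_k$ per inequation, with payoffs $\mu(a_i, a_j, d_k) = \lambda_{ij}^{(k)}$; and, for every $i \in [n]$, a pair of opposite actions $d_i^+$ and $d_i^-$ with payoffs $\mu(a, b, d_i^+) = [a = a_i] - [b = a_i]$ and $\mu(a, b, d_i^-) = -\mu(a, b, d_i^+)$. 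All payoffs are integers bounded by the input, so the entire construction has polynomial size.

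For correctness, first suppose $x$ is a normal solution; letting both team players play the distribution $x$ yields expected payoff $0$ on every $d_i^{\pm}$ and $\sum_{i,j} \lambda_{ij}^{(k)} x_i x_j \geq 0$ on every $d_k$, so the strategy has value at least $0$. Conversely, assume the team has a collective strategy of value $\geq 0$, inducing marginals $p$ for Ada and $q$ for Bertrand. The expected payoff against $d_i^+$ is $p_i - q_i$ and against $d_i^-$ is $q_i - p_i$; both being simultaneously $\geq 0$ forces $p_i = q_i$ for every $i$, so $p = q =: x$ is a probability distribution. The value condition on $d_k$ then reads $\sum_{i,j} \lambda_{ij}^{(k)} x_i x_j \geq 0$, exhibiting a normal solution.

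The essential trick, and the only step that is not routine bookkeeping, is the pair $\{d_i^+, d_i^-\}$: because the value of a collective strategy is defined as the minimum over the Devil's responses, these two opposite actions simultaneously pin $p_i = q_i$, which is exactly what is needed to turn the product bilinear form $\sum_{i,j} \lambda_{ij}^{(k)} p_i q_j$ available from two independent strategies into the quadratic form $\sum_{i,j} \lambda_{ij}^{(k)} x_i x_j$ of the input system.
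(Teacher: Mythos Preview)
Your proof is correct and essentially identical to the paper's: the paper uses the same reduction from Lemma~\ref{lemma:ETRhardInitialProblem}, the same two-player game with action set $\{x_1,\dots,x_n\}$ and one individual die each, the same Devil actions (your $d_k$ is the paper's action~$I$, and your pair $d_i^+,d_i^-$ is the paper's pair $x_i^\geq,x_i^\leq$ with the same payoffs, just written via Iverson brackets rather than a case distinction), and the same threshold $t=0$. The correctness argument is also the same in both directions.
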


\begin{proof}
    We reduce from the problem defined by \Cref{lemma:ETRhardInitialProblem}.

    \paragraph*{Definition of the dicey game $\DGame$}
    Let $S$ be a set of inequations of the form given in \Cref{lemma:ETRhardInitialProblem}.
    Let us construct a dicey game $\DGame$ with three players, where the team players are called \emph{Ada} and \emph{Bertrand}.
    We will make sure that Ada and Bertrand have strategies to get an expected payoff greater than or equal to $0$ if and only if the system $S$ has a normal solution.
    
    Ada's actions are variables $x_1$, \dots, $x_n$, and similarly for
    Bertrand.
    Intuitively, the probability which which both Ada and Bertrand play each action $x_i$ will correspond to the value given to the variable $x_i$.
    
    The Devil's actions are $x_1^\leq$, \dots, $x_n^\leq$, $x_1^\geq$, \dots, $x_n^\geq$, and all inequations $I \in S$.
    Each of those actions will be used to impose a constraint on Ada's and Bertrand's strategies: each action $x_i^\leq$ will impose that Ada's probability of playing the action $x_i$ is smaller than or equal to Bertrand's one; each action $x_i^\geq$ will impose that it is greater than or equal to it; and the action $I$ will impose that those probabilities satisfy the inequation $I$.

    Let us now define payoffs.
    \begin{itemize}    
        \item If the Devil plays the action $x_i^\leq$:
        \begin{itemize}
            \item if Ada and Bertrand both play $x_i$, they get payoff $0$.
            \item If Ada plays $x_i$ and Bertrand plays anything else, they get payoff $-1$.
            \item If Ada plays anything else and Bertrand play $x_i$, they get payoff $1$.
            \item If neither of them plays $x_i$, they get payoff $0$.
        \end{itemize}

        \item If the Devil plays the action $x_i^\geq$, the payoffs are the opposite of the payoffs given above.

        \item If the Devil plays the action $I: \sum_{i=1}^n \sum_{j=1}^n \lambda^I_{ij} x_i x_j \geq 0$, Ada plays the action $x_i$, and Bertrand plays the action $x_j$, then Ada and Bertrand get payoff $\lambda^I_{ij}$.
    \end{itemize}

    Finally, Ada has access to one die, and Bertrand to another die.

    \paragraph*{If the system $S$ has a normal solution, then Ada and Bertrand have a collective strategy to get a non-negative expected payoff.}

        Let $\hx_1, \dots, \hx_n \in [0, 1]$ constitute a normal solution of the system $S$, and let $\strat_\ada$ and $\strat_\bert$ be strategies for Ada and Bertrand that both consist of playing each action $x_i$ with probability $\hx_i$ (which is possible since their sum equals $1$).
        Then, we need to check that for every action the Devil can choose, the expected payoff is non-negative.

        If the Devil plays the action $x_i^\leq$, then the expected payoff is:
        $$\hx_i \hx_i \times 0 + \hx_i (1 - \hx_i) (-1) + (1 - \hx_i) \hx_i \times 1 + (1 - \hx_i)(1 - \hx_i)  \times 0,$$
        i.e. $0$.
        And symmetrically, it is $0$ if the Devil plays the action $x_i^\geq$.
        If the Devil plays the action $I$, then the expected payoff is the sum
        $\sum_i \sum_j \hx_i \hx_j \lambda_{ij}^I,$
        which is also non-negative by hypothesis.
        In all cases, the team secures a non-negative expected payoff.

\paragraph*{If Ada and Bertrand have a collective strategy to get a non-negative expected payoff, then the system $S$ has a normal solution.}

    Let $\strat_\ada$ and $\strat_\bert$ be such strategies.
    Let us write $\hx_1, \dots, \hx_n$ for the probabilities with which Ada chooses the actions $x_1, \dots, x_n$, respectively, and let $\hy_1, \dots, \hy_n$ be the same probabilities for Bertrand.

    Let us first prove that we have $\hx_i = \hy_i$ for each $i$.
    If the Devil plays the action $x_i^\leq$, then the expected payoff obtained is:
    $$\hx_i \hy_i \times 0 + \hx_i (1 - \hy_i) (-1) + (1 - \hx_i) \hy_i \times 1 + (1 - \hx_i)(1 - \hy_i) \times 0,$$
    i.e. $-\hx_i + \hy_i$.
    Symmetrically, if the Devil plays the action $x_i^\geq$, then the expected payoff obtained is $\hx_i - \hy_i$.
    By hypothesis, those two quantities are non-negative, which implies $\hx_i = \hy_i$.

    Now, let us show that the reals $\hx_1, \dots, \hx_n$ constitute a simple solution of the system $S$.
    Let $I \in S$.
    If the Devil plays the action $I$, then the expected payoff obtained is  $\sum_i \sum_j \hx_i \hy_j \lambda^I_{ij}$.
    Since this sum is non-negative, the inequation $I$ is satisfied.
    The values $\hx_1, \dots, \hx_n$ constitute then a solution of the system $S$, which is simple by construction.
    Since the dicey game $\DGame$ can be constructed in polynomial time from $S$, this reduction proves the desired hardness result.
\end{proof}

Let us gather those results in one theorem.

\begin{theorem}
    If there is a fixed number $d \in \Nb$ such that no player has access to more than $d$ dice, then the threshold problem is in $\NP^{\exists\Rb}$ and $\exists\Rb$-hard.
\end{theorem}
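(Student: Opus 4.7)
The plan is to obtain the theorem as an immediate combination of \Cref{lm:threshold_existsR_easy} (the upper bound) and \Cref{lm:threshold_existsR_hard} (the lower bound), both of which have just been established. Formally the proof is therefore a one-line invocation; what is worth articulating is the conceptual reason the bounded-$d$ assumption drops the complexity from the $\EXPSPACE$/$\NEXP$ range of \Cref{thm:threshold_complexity} down to this polynomial-hierarchy level.

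For the upper direction, I would first apply \Cref{thm:itsOKtobestraight} to restrict attention to $k$-grid collective strategies, with $k = |\Actions_\dev|$. Such a strategy is described by a strategy scheme $\bbaction$ together with a threshold tuple $\bblambda$. The key consequence of bounding each player's die access by $d$ is that the player's local strategy is a function on a grid of at most $k^d$ cells, so the entire scheme has polynomial size. One then nondeterministically guesses the scheme in polynomial time and submits the polynomial-size system $\Sigma_{\bbaction}$ of \Cref{lm:inequations} to an $\exists\Rb$ oracle, placing the problem in $\NP^{\exists\Rb}$.

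For the lower direction, I would reduce from a simplex-restricted variant of $\exists\Rb$: feasibility of homogeneous quadratic inequalities $\sum \lambda_{ij} x_i x_j \geq 0$ over the standard simplex, provided by \Cref{lemma:ETRhardInitialProblem}. The game construction uses two team players, each holding a single private die, whose mixed strategies over actions $x_1, \dots, x_n$ encode a candidate probability vector. The Devil is given three families of actions: two ``synchronisation'' families $x_i^\leq, x_i^\geq$ whose asymmetric $\pm 1$ payoffs force both team players to use the same marginal distribution, and one action per inequation $I$ that pays the coefficient $\lambda^I_{ij}$ precisely when Ada plays $x_i$ and Bertrand plays $x_j$, so that the expected payoff under that Devil action equals the quadratic form of $I$. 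Threshold $0$ then simultaneously captures the simplex structure and all quadratic constraints.

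The main obstacle in this plan is the choice of the correct intermediate $\exists\Rb$-hard problem: standard $\exists\Rb$ allows arbitrary real solutions, whereas the game setting only exposes probability vectors, so one needs a tailor-made hardness result for homogeneous quadratic feasibility over the simplex. Establishing \Cref{lemma:ETRhardInitialProblem} is thus the load-bearing step; once it is in hand, the payoff gadget above transfers hardness to the threshold problem almost mechanically, even with only two team players and a single die each.
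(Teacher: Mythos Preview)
Your proposal is correct and matches the paper's approach exactly: the theorem is obtained by combining \Cref{lm:threshold_existsR_easy} and \Cref{lm:threshold_existsR_hard}, and your sketches of both lemmas (guessing a polynomial-size scheme under the die bound for the upper bound, and the two-player one-die reduction from \Cref{lemma:ETRhardInitialProblem} for the lower bound) faithfully reproduce the paper's arguments.
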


    \subsection{Allocating dice to players}

We now turn our attention to the case where the dice structure is not a part of the input to an instance of the problem, and define the corresponding variants of our problems.

\begin{problem}[Allocate threshold problem]
    Given a game $\Game$, a dice pack $\Pack$, and a threshold $\threshold \in \Qb$, does there exist a dice structure $\Struct$ matching the dice pack $\Pack$ and a collective strategy in the dicey game $(\Game, \Struct)$ whose value is greater than or equal to $\threshold$?
\end{problem}

\begin{problem}[Allocate value computation problem]
    Given a game $\Game$ and a dice pack $\Pack$, what is the maximal value a collective strategy can guarantee, in a dicey game $(\Game, \Struct)$, for some dice structure $\Struct$ that matches the dice pack $\Pack$?
\end{problem}

Perhaps surprisingly, these problems fall in the same complexity classes as their previous versions.

\begin{lemma} \label{lm:allocate_value_computation}
    There is an algorithm that solves the allocate value computation problem using exponential space.
\end{lemma}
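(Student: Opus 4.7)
The plan is to mimic the strategy used for the fixed-structure value computation problem (\Cref{thm:valuecomputation}), and to add an outer enumeration over dice structures. The key observation is that, although the number of dice structures matching $\Pack$ may be exponential in the input, each \emph{individual} dice structure has size polynomial in the input: it is a table with one entry per die $\Die \in \Dice$, each entry being a subset of $\Players$ of size at most $\acc(\Die) \leq |\Players|$. Hence dice structures matching $\Pack$ can be enumerated one by one, each stored in polynomial space.

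Concretely, I would first establish an intermediate lemma: the allocate threshold problem is in $\EXPSPACE$. To see this, cycle through all dice structures $\Struct$ matching $\Pack$ (using polynomial workspace per structure) and, for each, invoke the $\EXPSPACE$ procedure from \Cref{lm:threshold_expspace_easy} on the dicey game $(\Game, \Struct)$; accept if any call accepts. Since the outer loop adds only polynomial overhead and we reuse the same exponential workspace between iterations, the overall procedure runs in exponential space.

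Next, I would bound the bit-size of the optimum. By \Cref{thm:exponentialbitsize}, for each fixed dice structure $\Struct$ the maximum achievable value in $(\Game, \Struct)$ has bit-size exponential in the size of $(\Game, \Struct)$, and hence exponential in the size of $(\Game, \Pack)$. Taking the maximum over the (at most exponentially many) dice structures matching $\Pack$ preserves this bound: the overall optimum is one of these per-structure optima and so admits a representation of exponential bit-size.

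Finally, to compute the value I would reuse the scheme of \Cref{thm:valuecomputation}: enumerate all candidate rationals, or more carefully all algebraic numbers admitting the exponential-bit representation supplied by the previous step, and for each candidate $t$ call the allocate threshold oracle built in the first step to decide whether $t$ is achievable; output the largest achievable $t$. Each enumeration step uses polynomial space to write the candidate, the oracle call uses exponential space, and only the current best value (of exponential bit-size) needs to be kept, so the total space remains exponential. The main technical subtlety is just the bookkeeping of the outer enumeration over $\Struct$, which is routine once one observes that matching dice structures have polynomial description size.
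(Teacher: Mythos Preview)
Your proposal is correct and rests on the same two observations as the paper: dice structures matching $\Pack$ have polynomial description size (so they can be enumerated in exponential space), and the per-structure value has exponential bit-size by \Cref{thm:exponentialbitsize}. The paper's proof is organised slightly more directly: rather than first building an allocate-threshold oracle and then enumerating candidate values, it simply enumerates dice structures $\Struct$ and, for each, calls the value-computation procedure of \Cref{thm:valuecomputation} as a black box on $(\Game,\Struct)$, remembering the largest output. Your version swaps the loop order (candidate values outside, structures inside) and unfolds \Cref{thm:valuecomputation} along the way; this yields the same bound with a little more bookkeeping, but has the side benefit of explicitly establishing the $\EXPSPACE$ upper bound for the allocate threshold problem en route.
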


\begin{proof}
    By \Cref{thm:valuecomputation}, when the dice structure is given, the optimal value a collective strategy can achieve can be computed in exponential space.
    Moreover, all dice structures matching a given dice pack can be represented with a polynomial number of bits.
    Thus, the allocate value computation problem can be solved by enumerating all possible dice structures, calling that algorithm for each of them, and remembering the highest value found.
\end{proof}

This yields an algorithm for the allocate threshold problem.

\begin{lemma}
    The allocate threshold problem is in $\EXPSPACE$.
\end{lemma}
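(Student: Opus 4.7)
The plan is to derive the result as a direct corollary of the two preceding complexity results, namely \Cref{thm:threshold_complexity} and \Cref{lm:allocate_value_computation}, using only a routine enumeration argument. The key observation is that a dice structure $\Struct$ matching a given dice pack $\Pack$ is described by specifying, for each die $\Die \in \Dice$, the subset $\Acc(\Die) \subseteq \Players$ of players with access to $\Die$, subject to $|\Acc(\Die)| \leq \acc(\Die)$. Since each such subset can be encoded using $|\Players|$ bits, the entire dice structure admits a representation of size polynomial in the size of the input, and in particular can be stored in polynomial space.

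Given this, the algorithm I would present proceeds as follows. Enumerate, one at a time, all dice structures $\Struct$ matching $\Pack$. For each such $\Struct$, construct the dicey game $(\Game, \Struct)$ and invoke the $\EXPSPACE$ algorithm for the threshold problem provided by \Cref{thm:threshold_complexity} on input $((\Game, \Struct), \threshold)$. Output YES as soon as one of these calls returns YES, and output NO once all dice structures have been processed without success. Since the description of the current dice structure requires only polynomial space, and the subroutine runs in exponential space, the space used by the enumeration can be reused between iterations, giving an overall $\EXPSPACE$ upper bound.

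An essentially equivalent, more compact argument would be to invoke \Cref{lm:allocate_value_computation} directly: it produces (a finite representation of) the maximal value $v^\star$ achievable over all matching dice structures in exponential space, and by \Cref{thm:exponentialbitsize} this value has bit-size bounded by an exponential function of the input. Comparing $v^\star$ with the rational $\threshold$, using standard comparison of algebraic numbers against rationals given by their defining polynomial and isolating interval, can be carried out within the same exponential space bound, and the answer to the allocate threshold problem is YES if and only if $v^\star \geq \threshold$.

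The argument presents no real obstacle: all work has already been done in \Cref{thm:threshold_complexity} and \Cref{lm:allocate_value_computation}. The only mild point to justify is that iterating over dice structures does not inflate the space used, which is immediate since each structure has a polynomial-size description and successive iterations reuse the same workspace. Thus I would keep the proof brief, presenting essentially one of the two reductions above and citing \Cref{thm:threshold_complexity} (or \Cref{lm:allocate_value_computation}) as a black box.
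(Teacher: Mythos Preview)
Your proposal is correct, and your second, ``more compact'' argument is exactly the paper's proof: invoke \Cref{lm:allocate_value_computation} to compute the optimal value in exponential space and compare it to the threshold. Your first argument (enumerate dice structures and call the \EXPSPACE{} threshold subroutine from \Cref{thm:threshold_complexity}) is also fine and is essentially how \Cref{lm:allocate_value_computation} itself is proved, so either route works.
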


\begin{proof}
    By \Cref{lm:allocate_value_computation}, given a game and a dice pack, there is an algorithm to compute the best possible value in exponential space.
    By comparing that value to the threshold $\threshold$, that algorithm can also be used to decide the allocate threshold problem.
\end{proof}

For that problem, our previous hardness result is also still valid.

\begin{restatable}[App.~\ref{app:allocate_threshold_hard}]{lemma}{allocateThresholdHard} \label{lm:allocate_threshold_hard}
    The allocate threshold problem is $\NEXP$-hard.
\end{restatable}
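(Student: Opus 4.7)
The plan is to reduce from DQBF, refining the reduction of \Cref{lm:threshold_nexptime_hard} so that the team's additional power---choosing the allocation---cannot be used to cheat. From a DQBF formula $\phi$ I will build a game $\Game_\phi$ together with a dice pack $\Pack_\phi$, such that some allocation plus collective strategy achieves value at least a threshold $t$ iff $\phi$ is valid.

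I take as a skeleton the game of \Cref{lm:threshold_nexptime_hard}: team players named after the variables $x_1,\dots,x_n,y_1,\dots,y_m$, Devil actions $\rand_i^b$ and $C_k$, and $n$ dice $D_1,\dots,D_n$. For the dice pack I set $\acc(D_i) = a_i := 1 + |\{j : x_i \in X_j\}|$, matching the size of the intended access set $S_i := \{x_i\} \cup \{y_j : x_i \in X_j\}$. Under the fixed structure $\Acc(D_i) = S_i$, the original proof already shows that the threshold is achievable iff $\phi$ is valid; so it suffices to argue that any matching allocation reaching the threshold must satisfy $\Acc(D_i) = S_i$ for every~$i$.

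To force the allocation I add \emph{certifying} Devil actions. For each intended pair $(p, D_i)$ with $p \in S_i$ I attach a matching-pennies-style subgame between $p$ and a reference team player already forced to see $D_i$ (the case $p = x_i$ being handled as a degenerate base case via the $\rand_i^b$ actions already present). The payoffs of this subgame are calibrated so that its expected value equals the maximum only if $p$ perfectly correlates its action with $D_i$, which by the definition of a strategy requires $p \in \Acc(D_i)$. Because $\acc(D_i)$ is tight at $a_i = |S_i|$, demanding that every $p \in S_i$ has access to $D_i$ forces $\Acc(D_i) = S_i$; any other allocation makes at least one certifying subgame deficient, and the Devil's corresponding pure response pushes the overall value strictly below the chosen threshold.

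The main obstacle is calibrating the certifying gadgets: the threshold and payoffs must be chosen large enough that no misallocation can simultaneously fail a certifying check and compensate by satisfying the DQBF side. This is achieved by scaling the certifying payoffs uniformly and normalizing the threshold so that even the cheapest failure on the certifying side is irrecoverable, while the original DQBF inequalities remain exactly as in \Cref{lm:threshold_nexptime_hard}. Once calibration is done, achievability of the threshold in the augmented game coincides with the existence of DQBF Skolem functions, and the whole reduction is polynomial in $|\phi|$, yielding $\NEXP$-hardness.
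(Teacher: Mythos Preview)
Your high-level plan is the same as the paper's: start from the DQBF reduction and add Devil actions that punish any allocation other than the intended one $\Acc(D_i)=S_i$, with the dice pack $\acc(D_i)=|S_i|$ making the intended allocation tight. The gap is in the certifying gadget. A ``matching-pennies-style subgame between $p$ and a reference team player'' is, in this model, a Devil action whose payoff depends on whether the two players' (single) actions coincide. If the reference for $D_i$ is $x_i$ and the check must be passed perfectly (as you require, ``perfectly correlates''), then for every $y_j$ with $x_i\in X_j$ the action of $y_j$ must equal the action of $x_i$ almost surely. But $y_j$'s action is also what the clause actions $C_k$ read; forcing $y_j$ to copy $x_i$ destroys the freedom to play the Skolem value $f_j(\nu|_{X_j})$, so the DQBF side and the certification side cannot both be satisfied even when $\phi$ is valid and the allocation is the intended one. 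This is a combinatorial conflict, not a calibration issue, so scaling payoffs does not help. (A secondary issue: the $\rand_i^b$ actions force $x_i$ to randomise, but not to use any particular die, so your ``reference player already forced to see $D_i$'' is not in fact anchored to $D_i$.)

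The paper decouples certification from the DQBF role by enlarging every player's action set with a fresh symbol~$2$. It raises the $\rand_i^b$ payoff from $2$ to $3$ so that $x_i$ must play $0,1,2$ each with probability $1/3$, and adds a single Devil action $\rand^2_i$ per die paying $3$ only when \emph{all} of $S_i$ simultaneously play~$2$; the threshold stays $1$. Since $x_i$ plays $2$ with probability exactly $1/3$, meeting the threshold on $\rand^2_i$ forces every $y_j$ with $x_i\in X_j$ to play $2$ whenever $x_i$ does, hence to share a die with the whole group $S_i$; tightness of $\acc$ then pins down the structure up to permutation. Crucially, this constrains $y_j$ only on the $2$-portion of the roll space and leaves the $0/1$ portion free for the Skolem function, so the clause checks go through as before. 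If you want to keep a pairwise architecture you would need to enlarge action sets so that certification and DQBF use different coordinates of each player's action---precisely the move your proposal leaves out.
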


Let us now turn our attention to the fixed dice bound case.

\begin{lemma}
    The allocate threshold problem with fixed die bound is in $\NP^{\exists\Rb}$.
\end{lemma}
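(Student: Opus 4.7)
The plan is to lift the fixed-die-bound (non-allocate) algorithm of the previous lemma by inserting a nondeterministic guess for the dice structure itself. Given an instance consisting of a game $\Game$, a dice pack $\Pack = (\Dice, \acc)$, and a threshold $\threshold \in \Qb$, observe that both the set $\Dice$ and the set $\Players$ are given explicitly as part of the input and hence have polynomial size. A dice structure $\Struct = (\Dice, \Acc)$ is fully determined by choosing, for each die $D \in \Dice$, a subset $\Acc(D) \subseteq \Players$, so encoding such a structure costs only $O(|\Dice| \cdot |\Players|)$ bits. The algorithm I propose begins by nondeterministically guessing such an $\Acc$.

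Next, the algorithm checks deterministically, in polynomial time, the two conditions imposed by the problem: that $|\Acc(D)| \leq \acc(D)$ for every $D \in \Dice$ (so $\Struct$ matches $\Pack$), and that $|\{D \in \Dice : p \in \Acc(D)\}| \leq d$ for every team player $p \in \Players$ (so the fixed die bound is respected). Any branch in which either condition fails is rejected outright.

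Once $\Struct$ has been fixed and validated, the remaining task is exactly the fixed-die-bound threshold problem on the dicey game $(\Game, \Struct)$, which by the previous lemma already lies in $\NP^{\exists\Rb}$. The algorithm then invokes that procedure: it nondeterministically guesses a strategy scheme $\bbaction$, of polynomial size thanks to the die bound, and issues a single query to the $\exists\Rb$ oracle to decide whether the system $\Sigma_\bbaction$ augmented with the inequality $t \geq \threshold$ admits a solution.

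Chaining these phases yields a single $\NP^{\exists\Rb}$ computation that uses polynomially many nondeterministic bits, polynomial-time deterministic checks, and one final oracle call. The only new ingredient beyond the previous lemma is the outer guess of $\Acc$, and no genuine obstacle arises: once $\Struct$ is fixed we are back in the previously solved setting, and the two sequential nondeterministic phases merge routinely into one. The hardest point to justify, should it need more care, is simply that the size of the dice structure remains polynomial under the die bound, which is immediate from the encoding above.
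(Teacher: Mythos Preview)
Your proposal is correct and follows the same approach as the paper: guess a dice structure of polynomial size together with a strategy scheme, then ask the $\exists\Rb$ oracle whether $\Sigma_{\bbaction}$ has a solution meeting the threshold. The paper's proof is terser and omits the explicit check that the guessed structure respects the die bound, but your added verification step is harmless and arguably clarifies why the strategy scheme remains polynomial.
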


\begin{proof}
    Our algorithm for the threshold problem with fixed die bound consisted in guessing a strategy scheme $\bbaction$, and then call the $\exists\Rb$ oracle to check whether there exists an $\bbaction$-strategy guaranteeing the threshold.
    The same algorithm works here, if we also guess a dice structure together with the strategy scheme.
\end{proof}

Again, we also have the same hardness result as before.

\begin{lemma}
    The allocate threshold problem with fixed die bound is $\exists\Rb$-hard.
\end{lemma}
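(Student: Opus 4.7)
The plan is to reduce from the (non-allocate) threshold problem with fixed die bound, shown $\exists\Rb$-hard in \cref{lm:threshold_existsR_hard}, to the allocate threshold problem with fixed die bound. The crucial observation is that the hard instance built in the proof of \cref{lm:threshold_existsR_hard} already uses exactly two team players (Ada and Bertrand), each with one private die, so that $d = 1$ suffices throughout.

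Given an instance of that reduction (a system $S$ of homogeneous quadratic inequalities), I would keep the very same underlying game $\Game$ and threshold $t = 0$, but replace the fixed dice structure by the dice pack $\Pack = (\{\Die_1, \Die_2\}, \acc)$ with $\acc(\Die_1) = \acc(\Die_2) = 1$. The construction is clearly polynomial and preserves the die bound $d = 1$. The forward direction is then automatic: if $S$ has a normal solution, then the witness collective strategy from the proof of \cref{lm:threshold_existsR_hard}, together with the allocation assigning one die to each team player, matches $\Pack$ and achieves value at least $0$.

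The heart of the proof is the backward direction: whenever \emph{some} allocation matching $\Pack$ supports a collective strategy of value at least $0$, the system $S$ already admits a normal solution. I would carry out a short case analysis over allocations. The intended case (one die per team player) is handled verbatim by \cref{lm:threshold_existsR_hard}. In every other case, at least one team player has access to no die and therefore plays some action $x_j$ deterministically; I would use the Devil's witnessing actions $x_i^\leq$ and $x_i^\geq$ (for $i \neq j$) to force the other team player also to play $x_j$ deterministically, as in the backward analysis of \cref{lm:threshold_existsR_hard}. The Devil's actions $I$ then impose $\lambda_{jj}^I \geq 0$ for every inequation $I \in S$, so the point mass $\hx_j = 1$, $\hx_i = 0$ for $i \neq j$ is a (degenerate) normal solution of $S$, closing this case.

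I expect no significant obstacle: the construction is a direct adaptation of the previous reduction, and the only care required is in the backward case analysis, where one must verify that none of the ``non-intended'' allocations (both dice to one player, one or both dice wasted) can allow the team to reach threshold $0$ without $S$ being solvable. No algebraic or geometric tools beyond those already developed for the non-allocate variant are needed.
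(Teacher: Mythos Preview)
Your proposal is correct and follows the same reduction as the paper: keep the game $\Game$ from \cref{lm:threshold_existsR_hard}, replace the fixed dice structure by the pack $\Pack$ with two accessibility-$1$ dice, and keep the threshold. The paper dispatches the backward direction in one line (any $\Struct'$ matching $\Pack$ with die bound $1$ is, up to relabeling dice, dominated by the original structure $\Struct$, so if some $\Struct'$ achieves the threshold then so does $\Struct$); your explicit case analysis via the Devil's actions $x_i^{\leq}, x_i^{\geq}$ reaches the same conclusion but is more work than necessary.
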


\begin{proof}
    \Cref{lm:threshold_existsR_hard} shows the the threshold problem with fixed die bound is $\exists\Rb$-hard even on dicey games where each player has access to exactly one individual die.
    That problem reduces to the allocate threshold problem with fixed die bound, even with $d=1$, as follows.
    Given a dicey game $\DGame = (\Game, \Struct)$ and a threshold $t$, we return the game $\Game$, the dice pack $\Pack = (\Dice, \acc)$ with $\acc: \Die \mapsto 1$, and the threshold $t$.
    Then, there exists a collective strategy in $\DGame$ with value at least $t$ if and only if there exists a dice structure $\Struct'$ matching $\Pack$ and a collective strategy in the dicey game $(\Game, \Struct')$ with value at least $t$.
    Which shows $\exists\Rb$-hardness of our problem.
\end{proof}

We can then conclude this section with the following theorem.

\begin{theorem} \label{thm:allocate_complexity}
    There is an algorithm that solves the allocate value computation problem using exponential space.
    As for the allocate threshold problem, it is in the class $\EXPSPACE$ and $\NEXP$-hard.
    With a fixed die bound $d$, it is in the class $\NP^{\exists\Rb}$ and is $\exists\Rb$-hard.
\end{theorem}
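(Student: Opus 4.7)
This theorem bundles the four lemmas proved immediately above it, so my plan is simply to assemble them; each of the four sub-claims is discharged by a direct appeal to the corresponding lemma, and the only work is bookkeeping about the sizes of the objects being enumerated or reduced.

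For the first two claims---the exponential-space algorithm for allocate value computation, and the $\EXPSPACE$ upper bound for the allocate threshold problem---I would invoke \Cref{lm:allocate_value_computation} directly. Its proof sketches the algorithm: enumerate every dice structure matching the given dice pack (each has polynomial description, hence at most exponentially many of them), call the value-computation procedure of \Cref{thm:valuecomputation} on each, and return the maximum; for the threshold version, simply compare that maximum against $\threshold$. The $\NEXP$-hardness of the allocate threshold problem is then \Cref{lm:allocate_threshold_hard}.

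For the fixed-die-bound claims, the $\NP^{\exists\Rb}$ upper bound comes from extending the algorithm of \Cref{lm:threshold_existsR_easy}: nondeterministically guess both a dice structure $\Struct$ matching $\Pack$ \emph{and} a strategy scheme $\bbaction$, and then invoke the $\exists\Rb$ oracle on the system $\Sigma_\bbaction$ of \Cref{lm:inequations}. Under a fixed die bound $d$, each player has at most $d$ accessible dice, so both the guessed dice structure and the strategy scheme have polynomial size, and $\Sigma_\bbaction$ has polynomially many inequalities. The matching $\exists\Rb$-hardness is the trivial reduction given in the lemma preceding the theorem: a fixed-die-bound threshold instance $(\DGame, t)$ with $\DGame = (\Game, \Struct)$ is sent to the allocate instance $(\Game, \Pack, t)$ where $\Pack$ assigns accessibility $1$ to every die, so that (up to relabelling) $\Struct$ is the only structure matching $\Pack$ and the two instances share the same optimal value.

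The only obstacle is purely notational: one must check that the number of dice structures matching a given pack is at most exponential in its description size, so that the enumeration respects the $\EXPSPACE$ budget, and that the $\exists\Rb$-hardness reduction preserves the die bound $d$. Both checks are immediate from the definitions, so no new mathematical content is required beyond what has already been established in the four supporting lemmas.
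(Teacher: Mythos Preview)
Your proposal is correct and follows essentially the same approach as the paper: the theorem is indeed just a bundling of the four preceding lemmas, and you have identified each one correctly, including the guess-structure-and-scheme extension for the $\NP^{\exists\Rb}$ bound and the accessibility-$1$ reduction for $\exists\Rb$-hardness. The only minor imprecision is the phrase ``$\Struct$ is the only structure matching $\Pack$'': other matching structures exist (e.g., leaving a die unassigned), but since adding access can never decrease the optimal value and all maximal assignments are equivalent up to relabelling, the iff still holds---which is exactly the level of detail the paper itself gives.
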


\section{Interesting examples} \label{sec:moreexamples}
    We 
    present some results on small problem instances which the reader might find surprising, leading up to a conjecture.

    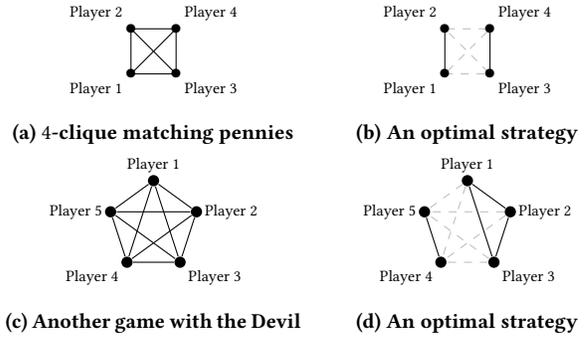
\begin{figure}[h]
    \newcommand{\s}{0.6}
    \centering
    \begin{subfigure}[t]{0.5\textwidth}
    \centering
	\begin{tikzpicture}[scale=\s]  
        \draw (0,0) node[below left] {\scriptsize Player 1};
    	\draw (0,1) node[above left] {\scriptsize Player 2};
    	\draw (1,0) node[below right] {\scriptsize Player 3};
    	\draw (1,1) node[above right] {\scriptsize Player 4};
        \draw (0,0) node {$\bullet$};
    	\draw (0,1) node {$\bullet$};
    	\draw (1,0) node {$\bullet$};
    	\draw (1,1) node {$\bullet$};

        \draw (1,1) -- (0,0) -- (0,1) -- (1,1) -- (1,0) -- (0,0);
        \draw (0, 1) -- (1, 0);
    \end{tikzpicture}
    \caption{$4$-clique matching pennies}
    \label{fig:4clique}
    \end{subfigure}
    \hspace{1mm}
    \begin{subfigure}[t]{0.4\textwidth}
    \centering
	\begin{tikzpicture}[scale=\s]  
        \draw (0,0) -- (0,1);
        \draw (1,0) -- (1,1);
        \draw[gray!50 ,dashed] (0,0) -- (1,0) -- (0,1) -- (1,1) -- (0,0);
        \draw (0,0) node[below left] {\scriptsize Player 1};
    	\draw (0,1) node[above left] {\scriptsize Player 2};
    	\draw (1,0) node[below right] {\scriptsize Player 3};
    	\draw (1,1) node[above right] {\scriptsize Player 4};
        \draw (0,0) node {$\bullet$};
    	\draw (0,1) node {$\bullet$};
    	\draw (1,0) node {$\bullet$};
    	\draw (1,1) node {$\bullet$};
    \end{tikzpicture}
    \caption{An optimal strategy}
    \label{fig:4cliqueoptimal}
    \end{subfigure}
    \begin{subfigure}[t]{0.5\textwidth}
    \centering
    \begin{tikzpicture}[scale=\s]
        \node[circle, fill, inner sep=1.5pt] (P1) at (90:1) {};
        \node[circle, fill, inner sep=1.5pt] (P2) at (18:1) {};
        \node[circle, fill, inner sep=1.5pt] (P3) at (-54:1) {};
        \node[circle, fill, inner sep=1.5pt] (P4) at (-126:1) {};
        \node[circle, fill, inner sep=1.5pt] (P5) at (162:1) {};
    
        \node[above]       at (P1) {\scriptsize Player 1};
        \node[right]       at (P2) {\scriptsize Player 2};
        \node[below right] at (P3) {\scriptsize Player 3};
        \node[below left]  at (P4) {\scriptsize Player 4};
        \node[left]        at (P5) {\scriptsize Player 5};
    
        \foreach \u/\v in {
            P1/P2, P1/P3, P1/P4, P1/P5,
            P2/P3, P2/P4, P2/P5,
            P3/P4, P3/P5,
            P4/P5}
        {
            \draw (\u) -- (\v);
        }
    \end{tikzpicture}
    \caption{Another game with the Devil}
    \label{fig:5clique}
    \end{subfigure}
    \hspace{1mm}
    \begin{subfigure}[t]{0.4\textwidth}
    \centering
    \begin{tikzpicture}[scale=\s]
        \node[circle, fill, inner sep=1.5pt] (P1) at (90:1) {};
        \node[circle, fill, inner sep=1.5pt] (P2) at (18:1) {};
        \node[circle, fill, inner sep=1.5pt] (P3) at (-54:1) {};
        \node[circle, fill, inner sep=1.5pt] (P4) at (-126:1) {};
        \node[circle, fill, inner sep=1.5pt] (P5) at (162:1) {};
    
        \node[above]       at (P1) {\scriptsize Player 1};
        \node[right]       at (P2) {\scriptsize Player 2};
        \node[below right] at (P3) {\scriptsize Player 3};
        \node[below left]  at (P4) {\scriptsize Player 4};
        \node[left]        at (P5) {\scriptsize Player 5};
    
        \foreach \u/\v in {
            P1/P2, P1/P3, P2/P3, P4/P5}
        {
            \draw (\u) -- (\v);
        }
        \foreach \u/\v in {
            P1/P4, P1/P5,
             P2/P4, P2/P5,
            P3/P4, P3/P5}
        {\draw[gray!50,dashed] (\u) -- (\v);}
    \end{tikzpicture}
    \caption{An optimal strategy}
    \label{fig:5cliqueoptimal}
    \end{subfigure}
	\caption{A few more examples} \label{fig:moreexamples}
\end{figure}

    \subparagraph*{$4$-clique matching pennies.}
    
    This game is a new variant of the matching pennies game.
    Now, four players are playing against the Devil.
    Again, they all pick an action between Heads and Tails, and the team wins if all actions match.
    Each pair of team players share a die---and therefore, each player has access to three dice.
    That situation is illustrated by \Cref{fig:4clique}, where an edge between two players means that they share a die.

    Here, one might expect that the optimal strategy is similar to the one that was found for triangular matching pennies: one threshold $\lambda$ obtained by some intricate polynomial equation, and each player choosing Heads if all the rolls they see are above that threshold, and Tails otherwise.
    The reality is quite different.
    The $\max\min$ value of this game is $1/4$, which is obtained by the following strategy, illustrated by \Cref{fig:4cliqueoptimal}: players $1$ and $2$ use their common die to make sure they always match, playing both Heads or both Tails with probability $1/2$ each; and players $3$ and $4$ act similarly with their common die.
    The four other dice are not used at all.

    \subparagraph*{$5$-clique matching pennies.}
    We continue with five team players---again, each pair shares a common die, as in \Cref{fig:5clique}.
    Here, the $\max\min$ value is approximately 0.139.
    An optimal collective strategy is the following (see \Cref{fig:5cliqueoptimal}): players $4$ and $5$ use their common die to play both Heads or both Tails with probability 1/2 each.
    Meanwhile, players 1, 2, and 3 use their three common dice to play the strategy studied in \Cref{sec:example}.
    Again, all other dice are ignored.

    \subparagraph*{A conjecture.}
    These examples suggest that while shared sources of randomness can be powerful, only a small number of them may suffice to provide optimal solutions. Guided by this intuition, we propose the following conjecture, which we have verified 
    for all values of $n\leq 7$.

    \begin{conjecture}
    Consider a team of $n$ players playing matching pennies against the Devil, where every pair of players shares a die.
    The following strategy is optimal.
    
    If the number $n$ is even:
    players are grouped by pairs, and each pair uses its common die to play both Heads or both Tails with probability $1/2$ each.
        All other dice are ignored, and the value of this collective strategy is:
        $$\left(\frac{1}{2}\right)^\frac{n}{2}.$$

        If the number $n$ is odd: three players are grouped together, and use the three dice they share to follow the strategy described in \Cref{sec:example}.
        All other players are grouped by pairs as in the even case, and all other dice are ignored.
        The value of this collective strategy is:
        $$\left(\frac{1}{2}\right)^{\frac{n-3}{2}}\beta,$$ where $\beta \approx 0.2781$ is the number defined in \Cref{sec:example}.
    \end{conjecture}
    

\bibliography{biblio.bib}

 \appendix
\section{Useful theorems}
We state here some geometry theorems that we use in our proofs.
Note that, in the references cited, the theorem that is proven is often more general: we state it here in a form that fits our purposes.

\begin{theorem}[Carathéodory's theorem~\cite{leonard2016geometry}]
    Let $X \subseteq \Rb^d$ be a set.
    Let $\bx$ be a point of the convex hull of $X$, which we write $\Conv (X)$.
    Then, there exist $d+1$ points $\by_1, \dots, \by_{d+1} \in X$, such that $\bx \in \Conv\{y_1, \dots, y_{d+1}\}$.
\end{theorem}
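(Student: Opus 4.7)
The plan is to start from the definition that any $\bx \in \Conv(X)$ admits some finite convex-combination representation $\bx = \sum_{i=1}^k \lambda_i \by_i$ with $\by_i \in X$, $\lambda_i \geq 0$, and $\sum_i \lambda_i = 1$. I then prove by induction on $k$ that whenever $k > d+1$, one can rewrite $\bx$ as a convex combination of at most $k-1$ points of $X$. Iterating this reduction until no further step is possible yields a representation of $\bx$ using at most $d+1$ points, which is exactly the statement of the theorem.

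For the inductive step, assume we have such a representation with $k \geq d+2$. The $k-1$ displacement vectors $\by_2 - \by_1, \by_3 - \by_1, \dots, \by_k - \by_1$ all live in $\Rb^d$, and since they outnumber the dimension, they are linearly dependent. This provides scalars $\mu_2, \dots, \mu_k$, not all zero, with $\sum_{i=2}^k \mu_i(\by_i - \by_1) = \bzero$. Setting $\mu_1 = -\sum_{i=2}^k \mu_i$, I obtain a nontrivial affine relation
\[
  \sum_{i=1}^k \mu_i \by_i = \bzero, \qquad \sum_{i=1}^k \mu_i = 0.
\]

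Next, I would exploit this relation to shift the convex combination. For any real $t$, the identity $\bx = \sum_{i=1}^k (\lambda_i - t\mu_i)\,\by_i$ holds, and the new coefficients still sum to $1$ since $\sum_i \mu_i = 0$. Because the $\mu_i$ are not all zero and they sum to zero, at least one $\mu_i$ is strictly positive. I would then choose
\[
  t^\star = \min\bigl\{\lambda_i/\mu_i \;:\; \mu_i > 0\bigr\},
\]
and observe that every coefficient $\lambda_i - t^\star \mu_i$ is non-negative, with at least one equal to zero (namely, the index achieving the minimum). Dropping the corresponding point from the combination yields a convex combination of $\bx$ using at most $k-1$ points of $X$, completing the induction.

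The main subtlety—rather than an obstacle—lies in choosing $t^\star$ so that it simultaneously preserves non-negativity of all coefficients and forces at least one of them to vanish. Restricting the minimum to indices with $\mu_i > 0$ is exactly what achieves both properties at once: any smaller $t$ fails to zero out a coefficient, while any larger $t$ would push some coefficient with $\mu_i > 0$ strictly negative. Once this key choice is in place, the rest of the argument is bookkeeping, and the inductive descent terminates in at most $k - (d+1)$ steps.
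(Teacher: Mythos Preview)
Your argument is the standard, correct proof of Carath\'eodory's theorem: take a finite convex combination, use affine dependence among $k \geq d+2$ points to produce a nontrivial relation $\sum_i \mu_i \by_i = \bzero$ with $\sum_i \mu_i = 0$, and then shift the coefficients by $t^\star = \min\{\lambda_i/\mu_i : \mu_i > 0\}$ to zero one out while keeping the rest non-negative. The verification that all coefficients remain non-negative is handled correctly (for $\mu_i \leq 0$ this follows from $t^\star \geq 0$, and for $\mu_i > 0$ from the definition of the minimum).

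There is nothing to compare against here: the paper does not prove this theorem. It is listed in the appendix of ``useful theorems'' with a citation to \cite{leonard2016geometry}, and is invoked as a black box in the slicing argument of \Cref{sec:example}. Your write-up is a complete and self-contained proof of a result the paper simply imports.
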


\begin{theorem}[Fritz John optimality conditions~\cite{John48,mangasarian67FritzJohn}]
    Let $m, n \in \Nb$, and let $f, g_1, \dots, g_n: \Rb^d \to \Rb$ be differentiable functions.
    Let $X = \{\bx \in \Rb \mid g_i(\bx) \geq 0\}$.
    Let $\bx \in X$ be a point such that $\bx = \sup f(X)$.
    
    Let $A = \{i \in [n] \mid g_i(\bx) = 0\}$.
    Then, there exist coefficients $\alpha_0$ and $\alpha_i$ for each $i \in A$, all non-negative and not all zero, such that we have:
    $$\alpha_0 \nabla f(\bx) + \sum_{i \in A} \alpha_i \nabla g_i(\bx) = 0.$$
\end{theorem}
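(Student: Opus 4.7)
The plan is to establish the Fritz John conditions by the classical penalty method: approximate the constrained maximizer $\bx$ by a sequence of unconstrained maximizers of a penalized objective, extract their first-order conditions, and pass to the limit, normalizing the multipliers so that the limit is non-trivial.

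Since the conclusion is local at $\bx$, I would first restrict attention to a small closed ball $B = B(\bx,\delta)$ on which $\bx$ maximizes $f$ over $X \cap B$. To guarantee strict maximality (which is useful in the limit argument), I would replace $f$ by the auxiliary objective $\tilde f(\by) = f(\by) - \tfrac{1}{2}\|\by-\bx\|^2$: this has the same gradient at $\bx$ as $f$, but is strictly maximized on $X \cap B$ at the single point $\bx$.

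For each integer $k \geq 1$, define the penalty
$$F_k(\by) = \tilde f(\by) - \frac{k}{2}\sum_{i=1}^n \bigl[\min(0, g_i(\by))\bigr]^2,$$
which is continuous on the compact set $B$ and hence attains its maximum at some $\by_k \in B$. The inequality $F_k(\by_k) \geq F_k(\bx) = f(\bx)$ forces the total squared infeasibility $\sum_i [\min(0, g_i(\by_k))]^2$ to be $O(1/k)$, so every accumulation point of $(\by_k)_k$ lies in $X$; combined with $\tilde f(\by_k) \geq f(\bx)$ and the strict maximality of $\tilde f$ at $\bx$, this forces $\by_k \to \bx$. For $k$ large enough $\by_k$ is interior to $B$, so $\nabla F_k(\by_k) = \nabla \tilde f(\by_k) + \sum_i \alpha_i^k \nabla g_i(\by_k) = 0$, where $\alpha_i^k := -k\min(0, g_i(\by_k)) \geq 0$. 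Substituting $\nabla \tilde f(\by) = \nabla f(\by) - (\by - \bx)$ gives
$$\nabla f(\by_k) + \sum_{i=1}^n \alpha_i^k \nabla g_i(\by_k) = \by_k - \bx.$$

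Now set $\alpha_0^k := 1$ and normalize the non-zero, non-negative tuple $(\alpha_0^k,\alpha_1^k,\dots,\alpha_n^k)$ by its $\ell_1$-norm (which is at least $1$). The normalized tuples live in the standard simplex, a compact set, so along a subsequence they converge to a non-trivial limit $(\alpha_0,\alpha_1,\dots,\alpha_n)$. Passing to the limit in the normalized stationarity condition, using continuity of $\nabla f$ and $\nabla g_i$ together with $\by_k \to \bx$ and the fact that the right-hand side $(\by_k - \bx)/\|\cdot\|_1$ tends to $\bzero$, yields $\alpha_0 \nabla f(\bx) + \sum_i \alpha_i \nabla g_i(\bx) = \bzero$. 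Finally, for any $i \notin A$ we have $g_i(\bx) > 0$, hence by continuity $g_i(\by_k) > 0$ for all large $k$; this forces $\alpha_i^k = 0$ and hence $\alpha_i = 0$, giving the complementarity slackness built into the statement. The main technical obstacle is controlling the two limits simultaneously, namely $\by_k \to \bx$ and the non-triviality of the limiting multipliers; both reduce to standard compactness and continuity arguments, so no qualification assumption on the $g_i$ is needed.
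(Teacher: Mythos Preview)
The paper does not actually prove this statement: it is listed in Appendix~\ref{app:hammers} among the ``useful theorems'' quoted from the literature, with citations to~\cite{John48,mangasarian67FritzJohn} and to~\cite[Theorem~4.2.8]{bazaraa2006nonlinear}. So there is no in-paper proof to compare against.

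Your penalty-method argument is the standard textbook proof and is essentially correct. One small caveat: when you ``pass to the limit \dots\ using continuity of $\nabla f$ and $\nabla g_i$'', you are tacitly assuming the functions are $C^1$, not merely differentiable as the statement says. This is harmless here---the classical Fritz John theorem is normally stated for $C^1$ data, and the paper's application (polynomials) certainly satisfies it---but it is worth making the assumption explicit. Everything else (the quadratic regularisation to force a unique maximizer, the $C^1$ smoothness of $t\mapsto\min(0,t)^2$, the $O(1/k)$ infeasibility bound forcing $\by_k\to\bx$, normalisation on the simplex to extract a nontrivial limit, and the complementary-slackness argument for inactive constraints) is handled cleanly.
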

\begin{theorem}[Cell stratification of semialgebraic sets~\cite{BPR06book}]
    Let $S \in \Rb^d$ be a semialgebraic set.
    Let $P_1, \dots, P_n: \Rb^d \to \Rb$ be polynomial functions.
    Then, there exists a partition $S = \bigcup_{\l=1}^n S_\l$ of $S$, where each $S_\l$ is a manifold such that for all $\bx, \by \in S_\l$, the sets $\{i \in [n] \mid P_i(\bx) > 0\}$ and $\{i \in [n] \mid P_i(\by) > 0\}$ are equal, and so are the sets $\{i \in [n] \mid P_i(\bx) < 0\}$ and $\{i \in [n] \mid P_i(\by) < 0\}$.
\end{theorem}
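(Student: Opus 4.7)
The plan is to prove this by induction on the dimension $d$, using the classical construction of a cylindrical algebraic decomposition (CAD) that refines the sign pattern of the family $P_1,\dots,P_n$, and then restricting to $S$. The key point is that, since $S$ is itself semialgebraic, it is described by some finite Boolean combination of polynomial sign conditions, say involving polynomials $R_1,\dots,R_k$. I will therefore work from the outset with the enlarged family $\mathcal F = \{P_1,\dots,P_n,R_1,\dots,R_k\}$, build a partition of $\mathbb R^d$ into semialgebraic manifolds that is sign-invariant with respect to $\mathcal F$, and keep only the cells that lie in $S$. Sign-invariance with respect to the $R_j$ guarantees that each cell is either entirely inside $S$ or entirely outside $S$, so this restriction is well-defined.

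For the base case $d=1$, each polynomial in $\mathcal F$ has finitely many real zeros. Let $Z\subseteq\mathbb R$ be the union of these zero sets, which is finite. The complement $\mathbb R\setminus Z$ is a finite union of open intervals, and on each such interval every polynomial in $\mathcal F$ is continuous and non-vanishing, hence has constant sign. Together with the singletons $\{z\}$ for $z\in Z$, this yields a finite partition of $\mathbb R$ into connected semialgebraic sets, each of which is a $0$- or $1$-dimensional manifold and is sign-invariant for $\mathcal F$. Keeping the cells contained in $S$ finishes this case.

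For the inductive step, assume the result for $\mathbb R^{d-1}$, and write coordinates as $(\bx',x_d)$. The core ingredient is Collins' projection operator: applied to $\mathcal F$, it produces a finite family $\mathcal F'$ of polynomials in $\mathbb R[x_1,\dots,x_{d-1}]$ with the delineability property — above any connected set on which every polynomial in $\mathcal F'$ has constant sign, the real roots of the polynomials in $\mathcal F$ (viewed as univariate polynomials in $x_d$ parametrised by $\bx'$) are given by a fixed number of continuous semialgebraic functions $\xi_1(\bx')<\dots<\xi_r(\bx')$ that never meet. Apply the inductive hypothesis to $\mathcal F'$ to obtain a sign-invariant partition $\{C'_\ell\}$ of $\mathbb R^{d-1}$ into semialgebraic manifolds. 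Over each $C'_\ell$, form the stack consisting of the graphs (sections) of the $\xi_i$ restricted to $C'_\ell$, together with the open sectors strictly between consecutive sections and the two unbounded sectors. Each section is diffeomorphic to $C'_\ell$ and each sector to $C'_\ell\times(0,1)$, so both are semialgebraic manifolds; delineability guarantees that every polynomial in $\mathcal F$ has constant sign on each section and on each sector. Collecting all such cells across all $C'_\ell$ yields the required sign-invariant manifold partition of $\mathbb R^d$, and restricting to cells inside $S$ yields the partition of $S$.

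The main obstacle is not the inductive bookkeeping but the construction and correctness of the projection operator $\mathcal F\mapsto\mathcal F'$. The delineability property must be verified: one has to show that the number of real roots of each $P_i(\bx',\cdot)$ remains constant, that no two such roots collide, and that no $P_i$ vanishes identically, provided $\bx'$ stays inside a cell on which every polynomial of $\mathcal F'$ has constant sign. This is precisely Collins' theorem, and it relies on analysing principal subresultant coefficients, leading coefficients, and discriminants of the polynomials in $\mathcal F$, all of which must be added to $\mathcal F'$. Everything else — the manifold property of cells, sign-invariance within a cell, and the final restriction to $S$ — follows cleanly from this single delineability statement combined with the inductive hypothesis.
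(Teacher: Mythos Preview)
The paper does not actually prove this statement: it is listed in Appendix~\ref{app:hammers} as a background result quoted from \cite{BPR06book} without proof, so there is no ``paper's own proof'' to compare against. Your outline is the standard cylindrical algebraic decomposition argument due to Collins, which is precisely the approach of the cited reference; as a sketch it is correct, with the only real work (as you rightly flag) being the delineability property of the projection operator.
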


\begin{theorem}[Size of a solution~\cite{BPR94,BPR96,BPR06book}]
    Let $\mathcal{P}$ be a set of polynomials over $\R^d$, each of degree at most $D$, over $N$ variables, with coefficients in $\Zb$ of size at most $L$.
    Let $X = \{\bx \in \Rb^n \mid \forall P \in \mathcal{P}, P(\bx) \geq 0\}$.
    Then, for each connected component $K$ of the set $X$, there exists a point $\bx \in K$ that is algebraic and has size $L D^{O(N)}$.
\end{theorem}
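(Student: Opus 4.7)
The plan is to invoke the classical critical-point method of effective real algebraic geometry: after reducing to the compact case, I would exhibit in each connected component a distinguished point which minimises a well-chosen polynomial map, show it is characterised by a zero-dimensional polynomial system, and finally bound the size of its solutions via B\'ezout-type estimates.

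First, I would reduce to the bounded setting. Intersect $X$ with a large closed ball $\bar{B}_R = \{\bx \in \Rb^N : \sum_i x_i^2 \leq R^2\}$, where $R$ is taken large enough---possibly as an infinitesimal in a real closed field extension---so that every connected component of $X$ contains at least one connected component of $X \cap \bar{B}_R$. Adjoining the polynomial $R^2 - \sum_i x_i^2$ to $\mathcal{P}$ preserves $L$ up to an additive constant, $D$ up to a factor of $2$, and keeps the number of variables equal to $N$. It therefore suffices to exhibit a point of bounded size in each connected component of the resulting compact set.

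Second, on each such compact component $K$, consider the coordinate projection $\bx \mapsto x_1$ (or, for genericity, a random linear form). By compactness it attains a minimum at some $\bx^\star \in K$, which is necessarily a critical point relative to the locally active constraints $P_{i_1}, \dots, P_{i_r}$. After a small generic perturbation of $\mathcal{P}$ by infinitesimal parameters---putting the system in Noether position and making the active varieties smooth and transversely intersecting---the Lagrange conditions at $\bx^\star$ read
\begin{equation*}
    \nabla x_1(\bx^\star) = \sum_{j=1}^{r} \mu_j \nabla P_{i_j}(\bx^\star), \qquad P_{i_j}(\bx^\star) = 0 \text{ for } j = 1, \dots, r,
\end{equation*}
a zero-dimensional polynomial system in the $N$ coordinates together with $r \le N$ Lagrange multipliers, with polynomial degrees bounded by $D$.

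Third, I would apply B\'ezout's theorem (in its multihomogeneous form) to bound the number of solutions of this system by $D^{O(N)}$, and use a rational univariate representation to encode each solution by an algebraic number $\theta$ whose minimal polynomial has degree $D^{O(N)}$ and integer coefficients of bit-size $L \cdot D^{O(N)}$; this provides the claimed representation of $\bx^\star$. The hard part will be controlling the effect of the infinitesimal perturbation: one must show that every connected component of the \emph{original} $X$ actually receives at least one of the produced critical points, and that passing to the limit as the infinitesimals tend to zero does not destroy the algebraic description or inflate the bit-size beyond $L \cdot D^{O(N)}$. This is exactly where Hardt triviality, careful accounting of resultants, and the more delicate deformation arguments of Basu, Pollack and Roy enter, and forms the technical heart of the proof.
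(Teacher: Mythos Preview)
The paper does not prove this theorem at all: it is stated in Appendix~\ref{app:hammers} as a quoted result from the literature~\cite{BPR94,BPR96,BPR06book}, with no accompanying argument. There is therefore no ``paper's own proof'' to compare your proposal against.

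That said, your sketch is a faithful high-level outline of the critical-point method that Basu, Pollack and Roy actually use in the cited references: bounding the problem, extracting a distinguished point per component via a polynomial optimisation/Lagrange system, and bounding the bit-size of the resulting zero-dimensional solutions through a rational univariate representation. You correctly identify the genuinely delicate step (controlling the infinitesimal deformation so that every connected component of the original set is hit, and that limits preserve the size bounds). For the purposes of this paper, however, none of this machinery needs to be reproduced---the result is used as a black box.
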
\label{app:hammers}

\section{A system of inequations} \label{app:inequations}
Here is the full system of inequations mentioned in \Cref{ssec:triangular}.

\begin{equation}
\begin{split}
    t \leq &~\lambda_1 \lambda_2 \lambda_3 a_{11} b_{ 11} c_{ 11}\\
    &+ \lambda_1 \lambda_2 (1-\lambda_3) a_{11} b_{ 12} c_{ 12}\\
    &+ \lambda_1 (1-\lambda_2) \lambda_3 a_{12} b_{ 21} c_{ 11}\\
    &+ \lambda_1 (1-\lambda_2) (1-\lambda_3) a_{12} b_{ 22} c_{ 12}\\
    &+ (1-\lambda_1) \lambda_2 \lambda_3 a_{21} b_{ 11} c_{ 21}\\
    &+ (1-\lambda_1) \lambda_2 (1-\lambda_3) a_{21} b_{ 12} c_{ 22}\\
    &+ (1-\lambda_1) (1-\lambda_2) \lambda_3 a_{22} b_{ 21} c_{ 21}\\
    &+ (1-\lambda_1) (1-\lambda_2) (1-\lambda_3) a_{22} b_{ 22} c_{ 22}
\end{split}
\end{equation}

\begin{equation}
\begin{split}
    t \leq &~\lambda_1 \lambda_2 \lambda_3 (1-a_{11}) (1-b_{ 11}) (1-c_{ 11})\\
    &+ \lambda_1 \lambda_2 (1-\lambda_3) (1-a_{11}) (1-b_{ 12}) (1-c_{ 12})\\
    &+ \lambda_1 (1-\lambda_2) \lambda_3 (1-a_{12}) (1-b_{ 21}) (1-c_{ 11})\\
    &+ \lambda_1 (1-\lambda_2) (1-\lambda_3) (1-a_{12}) (1-b_{ 22}) (1-c_{ 12})\\
    &+ (1-\lambda_1) \lambda_2 \lambda_3 (1-a_{21}) (1-b_{ 11}) (1-c_{ 21})\\
    &+ (1-\lambda_1) \lambda_2 (1-\lambda_3) (1-a_{21}) (1-b_{ 12}) (1-c_{ 22})\\
    &+ (1-\lambda_1) (1-\lambda_2) \lambda_3 (1-a_{22}) (1-b_{ 21}) (1-c_{ 21})\\
    &+ (1-\lambda_1) (1-\lambda_2) (1-\lambda_3) (1-a_{22}) (1-b_{ 22}) (1-c_{ 22})
\end{split}
\end{equation}
\begin{equation}
    \forall i \in \{1, 2, 3\}, 0 \leq \lambda_i \leq 1
\end{equation}

\section{Proof of \Cref{thm:itsOKtobestraight}} \label{app:itsOKtobestraight}
 \straightLineTheorem*

\begin{proof}
    This proof follows the structure of that of \cref{lm:itsOKtobestraightexample} and generalises it: it first defines a slice-based strategy reshaping transformation, then proves some of its properties, and concludes by using them to transform any strategy into a $k$-grid strategy.

    \paragraph*{The transformation $\phi_\Die$}

    Let $\bstrat$ be a collective strategy, and let $\Die$ be a die.
    For each action $\action$ of the Devil, we define the mapping $f_\action$, which maps each roll $x \in [0,1]$ to the team's expected payoff, when the team follows $\bstrat$, the Devil plays deterministically $\action$, and the outcome of the die $\Die$ is $x$:
    $$f_\action: x \mapsto \Eb(\bstrat, \action \mid \Die = x).$$
    Let us also write $f: x \mapsto (f_\action(x))_{\action \in \Actions_{\opp}}$.
    Let now $V = f([0, 1]) \subseteq \Rb^{\Actions_{\opp}}$ be the set of all values taken by the mapping $f$.
    The integral vector $\int f$ lies in the convex hull $\Conv (V)$.
    By Carathéodory's theorem, since $V$ is contained in a space of dimension $k$, the vector $\int f$ can therefore be written as a convex combination of $k+1$ points of $V$: let us call them $\bv_0, \dots, \bv_k$, and let us write $C$ for the convex hull $C = \Conv\{\bv_0, \dots, \bv_k\}$.
    Then, we have $\int f \in C$.

    Let now $P$ be the polyhedron:
    $$P = \left\{\bv \in \Rb^{\action \in \Actions_{\opp}} ~\left|~ \forall \action, v_\action \geq \int f_\action \right.\right\}$$
    of all vectors that are pointwise greater than or equal to the vector $\int f$.
    The intersection $P \cap C$ is nonempty, since it contains at least the integral $\int f$ itself.
    Therefore, the polyhedron $P$ intersects at least one facet of the polytope $C$, which we can write:
    $$\Conv\{\bv_0, \dots, \bv_{i-1}, \bv_{i+1}, \dots, \bv_k\}.$$
    Let us now pick $x_1, \dots, x_k \in [0,1]$ such that we have $\bv_0 = f(x_1)$, \dots, $\bv_{i-1} = f(x_i)$, $\bv_{i+1} = f(x_{i+1})$, \dots, $\bv_k = f(x_k)$.
    There exist, then, coefficients $\lambda_1, \dots, \lambda_k \in [0,1]$ such that $\sum_{j=1}^k \lambda_j = 1$ and for each action $\action$, we have:
    $$\sum_{j=1}^k \lambda_j f_\action(x_j) \geq \int f_\action.$$

    Let us now define, for each player $\player$, the strategy $\strat'_\player$ as follows.
    If player $\player$ has no access to the die $\Die$, then we have $\strat'_\player = \strat_\player$.
    If player $\player$ has access to the die $\Die$, then we define:
    $$\strat'_\player: \broll \mapsto \begin{cases}
        \strat_\player\left( \broll_{-D}, x_1\right) &\text{if } \roll_\Die \in [0, \lambda_1] \\
        \strat_\player\left( \broll_{-D}, x_2\right) &\text{if } \roll_\Die \in (\lambda_1, \lambda_1+\lambda_2] \\
        \dots \\
        \strat_\player\left( \broll_{-D}, x_k\right) &\text{if } \roll_\Die \in (\lambda_1 + \dots + \lambda_{k-1}, 1].        
    \end{cases}$$

    We then define $\phi_\Die(\bstrat) = \bstrat'$.

\paragraph*{Some properties}
The following proposition is immediate by construction.

\begin{proposition}
    The collective strategy $\bstrat'$ is $k$-piecewise-constant for the die $\Die$.
\end{proposition}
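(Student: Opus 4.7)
The plan is to unwind the construction of $\bstrat' = \phi_\Die(\bstrat)$ coordinate by coordinate and read off piecewise constancy directly from the definition. First I would fix an arbitrary partial roll $\broll_{-\Die} \in [0,1]^{\Dice \setminus \{\Die\}}$ and analyse the single-variable function $\roll_\Die \mapsto \bstrat'(\broll_{-\Die}, \roll_\Die)$. Because $\bstrat'$ is a tuple of per-player strategies, it suffices to check that each component $\strat'_\player(\broll_{-\Die}, \roll_\Die)$ is piecewise constant in $\roll_\Die$ with respect to a common partition of $[0,1]$ into at most $k$ intervals; the joint tuple then automatically inherits that partition.

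Next I would dispose of the two cases in the definition of $\strat'_\player$. For a player $\player$ with $\Die \notin \Dice_\player$, the construction simply sets $\strat'_\player = \strat_\player$, and the domain of $\strat_\player$ does not involve $\Die$ at all, so this coordinate is literally constant as $\roll_\Die$ varies. For a player $\player$ with $\Die \in \Dice_\player$, the construction gives the value of $\strat'_\player$ by a case split on which of the $k$ intervals $[0,\lambda_1], (\lambda_1,\lambda_1+\lambda_2], \ldots, (\lambda_1+\cdots+\lambda_{k-1},1]$ contains $\roll_\Die$, and on each such subinterval the value equals $\strat_\player(\broll_{-\Die}, x_j)$ for a fixed $x_j$, which is independent of $\roll_\Die$.

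The key observation to close the argument is that the partition into these $k$ intervals is the \emph{same} for every player with access to $\Die$, because the weights $(\lambda_j)_j$ and the representative slices $(x_j)_j$ were fixed once and for all globally in the definition of $\phi_\Die$. Consequently the full tuple $\bstrat'(\broll_{-\Die}, \cdot)$ is constant on each of those at most $k$ subintervals of $[0,1]$, which is exactly the condition of $k$-piecewise constancy for the die $\Die$. No substantive obstacle is anticipated; the proposition is essentially a bookkeeping check that the construction of $\phi_\Die$ does precisely what it advertises, which is why the authors can flag it as immediate.
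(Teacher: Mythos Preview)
Your proposal is correct and matches the paper's approach: the paper simply states that the proposition is ``immediate by construction'' without further argument, and what you have written is precisely the unwinding of that construction that makes the immediacy explicit.
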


We now prove the following.

\begin{proposition}
    If the collective strategy $\bstrat$ is $k$-piecewise-constant for some die $\Die' \neq \Die$, then so is $\bstrat'$.
\end{proposition}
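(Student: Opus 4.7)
The plan is to trace through the definition of $\phi_\Die$ and verify that this transformation leaves the slice structure in every direction other than $\Die$ intact. More precisely, I fix an arbitrary roll $\broll_{-\Die'} \in [0,1]^{\Dice \setminus \{\Die'\}}$ and study how $\bstrat'(\broll_{-\Die'}, \roll_{\Die'})$ depends on the single remaining variable $\roll_{\Die'}$. The goal is to show that this map is piecewise constant with at most $k$ pieces.

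The key observation is that since $\Die \neq \Die'$, fixing $\broll_{-\Die'}$ also fixes the $\Die$-coordinate; call its value $r^\star$. That value belongs to exactly one of the $k$ intervals $(\lambda_1 + \dots + \lambda_{j-1}, \lambda_1 + \dots + \lambda_j]$ used in the construction of $\phi_\Die$; denote the index of that interval by $j^\star$. Then, by the definition of $\bstrat'$, every team player $\player$ with access to $\Die$ effectively plays $\strat_\player$ evaluated at the rolls obtained from $\broll$ by replacing $\roll_\Die$ with $x_{j^\star}$, while players without access to $\Die$ simply play $\strat_\player$, which does not depend on the $\Die$-coordinate at all.

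Therefore, letting $\tilde{\broll}_{-\Die'}$ denote $\broll_{-\Die'}$ with its $\Die$-coordinate replaced by $x_{j^\star}$, I can write
\[
    \bstrat'(\broll_{-\Die'}, \roll_{\Die'}) ~=~ \bstrat(\tilde{\broll}_{-\Die'}, \roll_{\Die'}),
\]
for every $\roll_{\Die'} \in [0,1]$. By the hypothesis that $\bstrat$ is $k$-piecewise constant for the die $\Die'$, the right-hand side, viewed as a function of $\roll_{\Die'}$ with all other coordinates frozen, is $k$-piecewise constant; hence so is the left-hand side, which is exactly what was needed.

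I do not expect a genuine obstacle in this proof: once the substitution is spelled out, the verification is routine bookkeeping. The transformation $\phi_\Die$ was designed precisely so that every $\roll_\Die$-slice of $\bstrat'$ coincides with a single $\roll_\Die$-slice of $\bstrat$ (at one of the chosen values $x_j$), and from this, stability of $k$-piecewise-constancy along any axis other than $\Die$ is immediate.
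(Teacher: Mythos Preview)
Your proof is correct and follows essentially the same approach as the paper's: fix $\broll_{-\Die'}$, observe that this pins down the $\Die$-coordinate and hence the index $j^\star$, and then identify the map $\roll_{\Die'} \mapsto \bstrat'(\broll_{-\Die'}, \roll_{\Die'})$ with the map $\roll_{\Die'} \mapsto \bstrat(\broll_{-\Die,\Die'}, x_{j^\star}, \roll_{\Die'})$, which is $k$-piecewise constant by hypothesis. Your write-up is in fact slightly more explicit than the paper's in handling players without access to $\Die$.
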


\begin{proof}
    Let $\broll_{-\Die'}$ be a roll.
    Let us consider the mapping $x \mapsto \bstrat'\left(\broll_{-\Die'}, x\right)$.
    It is equal to the mapping $x \mapsto \bstrat\left(\broll_{-\Die, \Die'}, x_j, x\right)$, with $j$ such that $\roll_\Die \in \left[\lambda_1 + \dots + \lambda_{j-1}, \lambda_1 + \dots + \lambda_j\right]$.
    By hypothesis, that mapping is piecewise constant with $k$ pieces.
\end{proof}

Finally, this transformation maintains or increases the value of the strategy.

\begin{proposition}
    The collective strategy $\bstrat'$ has a value at least as large as the collective strategy $\bstrat$.
\end{proposition}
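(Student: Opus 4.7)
The plan is to prove the proposition by establishing the stronger pointwise inequality $\Eb(\bstrat', \action) \geq \Eb(\bstrat, \action)$ for every deterministic Devil action $\action \in \Actions_{\opp}$. Since the Devil's best response against any fixed collective strategy can be taken to be deterministic, taking the minimum over $\action$ then yields $\val(\bstrat') \geq \val(\bstrat)$ directly, which is exactly the desired statement.

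The main step is to compute $\Eb(\bstrat', \action)$ by conditioning on which of the $k$ sub-intervals the roll $\roll_\Die$ lands in. By construction of $\bstrat'$, the event $\roll_\Die \in (\lambda_1 + \dots + \lambda_{j-1}, \lambda_1 + \dots + \lambda_j]$ occurs with probability exactly $\lambda_j$ under the uniform distribution on $[0,1]$, and on this event every team player with access to $\Die$ replaces the value of $\roll_\Die$ in its input by $x_j$, while players not seeing $\Die$ are unchanged. Since all dice are mutually independent, conditioning on which sub-interval $\roll_\Die$ falls into does not alter the joint distribution of $\broll_{-\Die}$; hence the conditional expected payoff given this event equals $\Eb(\bstrat, \action \mid \Die = x_j) = f_\action(x_j)$. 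Summing over $j$ gives
$$\Eb(\bstrat', \action) = \sum_{j=1}^k \lambda_j \, f_\action(x_j).$$
I would then invoke the inequality that was already established via Carathéodory's theorem earlier in the proof: the coefficients $\lambda_j$ and the points $x_j$ were chosen precisely so that $\sum_{j=1}^k \lambda_j f_\action(x_j) \geq \int f_\action$ holds for \emph{every} Devil action $\action$. Since $\Die$ is uniform on $[0,1]$, the tower property gives $\Eb(\bstrat, \action) = \int_0^1 f_\action(x)\, dx$, so the Carathéodory inequality rewrites as $\Eb(\bstrat', \action) \geq \Eb(\bstrat, \action)$, closing the per-action comparison.

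The only subtle point — and what I would take to be the main obstacle — is the independence step used to identify the conditional expectation with $f_\action(x_j)$. One must carefully verify that substituting the constant $x_j$ for $\roll_\Die$ in the inputs of the players with access to $\Die$ does not secretly introduce correlations with the remaining rolls $\broll_{-\Die}$. This is immediate from the standing assumption that dice are mutually independent uniforms, but it is worth writing out since the whole transformation $\phi_\Die$ is designed precisely to redistribute probability mass along the $\Die$-axis without disturbing the behaviour along the other axes. Once this conditional-independence argument is in place, the rest is a one-line combination of the computation above with the Carathéodory-derived inequality, and the proposition follows.
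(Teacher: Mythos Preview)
Your proof is correct and follows essentially the same approach as the paper: fix a Devil action $\action$, condition on which sub-interval $\roll_\Die$ falls into to obtain $\Eb(\bstrat',\action)=\sum_{j=1}^k \lambda_j f_\action(x_j)$, and conclude via the Carath\'eodory-derived inequality $\sum_j \lambda_j f_\action(x_j)\geq \int f_\action = \Eb(\bstrat,\action)$. Your explicit justification of the independence step (that substituting $x_j$ for $\roll_\Die$ leaves the distribution of $\broll_{-\Die}$ unchanged) is a welcome clarification that the paper leaves implicit.
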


\begin{proof}
    Let $\action$ be an action for the Devil.
    We need to show that the expected payoff obtained by the collective $\bstrat'$ such an action is at least as large as that obtained by $\bstrat$.
    That is, we need to show:
    $$\Eb\left( \bstrat, \action\right) \leq \Eb\left( \bstrat', \action \right)$$
    $$\text{i.e.}~~~ \int f_\action \leq \int \Eb\left( \bstrat', \action \mid \Die = x \right) \mathsf{d}x$$
    $$= \sum_{j=1}^k \lambda_j \Eb\left( \bstrat, \action \mid \Die = x_j \right)$$
    $$= \sum_{j=1}^k \lambda_j f_\action(x_j)$$
    which is true by definition of the coefficients $\lambda_j$ and rolls $x_j$.
\end{proof}

\paragraph*{Conclusion}

Let us now consider some collective strategy $\bstrat^0$.
By applying successively the transformation $\phi_\Die$ for each die $\Die$, we obtain a $k$-grid collective strategy whose value is at least as large as the one of $\bstrat^0$.    
\end{proof}

\section{Proof of \cref{prop:alpha0neq0}}
\label{app:proofOflinearalgebraProposition}
\alphaneq*
    \begin{proof}[Proof of \cref{prop:alpha0neq0}]
        Assume we have $\alpha_0 = 0$.
        Then, we obtain the equality:
        \begin{equation}\label{eq:alpha0eq0}
            \sum_{i \in I} \alpha_i \nabla P_i(\bblambda, t) = \bzero.
        \end{equation}

        If we first apply the projection $f$ to this equation, we obtain:
        $$\sum_{i \in I} \alpha_i \frac{\partial P_i}{\partial T} (\bblambda, t) = 0.$$
        The partial derivative $\frac{\partial P_i}{\partial T}$
        is constantly equal to $-1$ when $i$ is an action from $\Actions_\dev$, and to $0$ otherwise.
        We then have:
        $$- \sum_{b \in \Actions_\dev} \alpha_b = \bzero.$$
        But since the $\alpha_i$ are all non-negative, this means $\alpha_b = 0$ for each $b$.

        Coming back to \Cref{eq:alpha0eq0}, we can simplify:
        $$\sum_D \left( \alpha_{D\geq} \nabla P_{D\geq}(\bblambda, t) + \alpha_{D\leq} \nabla P_{D\leq}(\bblambda, t) + \sum_{j=1}^k \alpha_{Dj} \nabla P_{Dj}(\bblambda, t)\right) = \bzero.$$

        Let us now compute those gradients.
        For a given die $D$, the gradient $\nabla P_{D\geq}(\bblambda, t)$ consists of $1$'s on all coordinates $(D, j)$ with $j \in [k]$, and $0$ everywhere else.
        The gradient $\nabla P_{D\leq}(\bblambda, t)$ is the opposite.
        For a given index $j$, the gradient $\nabla P_{D j}$ consists of a $1$ on the coordinate $(D, j)$, and $0$ everywhere else.
        Since for each die $D$, the gradients $\nabla P_{D\geq}(\bblambda, t)$, $\nabla P_{D\leq}(\bblambda, t)$, and $\nabla P_{Dj}(\bblambda, t)$ have non-zero coefficients only on coordinates of the form $(D, j')$ with $j' \in [k]$, we can consider each die $D$ separately, and obtain:
        $$\alpha_{D\geq} \nabla P_{D\geq}(\bblambda, t) + \alpha_{D\leq} \nabla P_{D\leq}(\bblambda, t) + \sum_{j=1}^k \alpha_{Dj} \nabla P_{Dj}(\bblambda, t) = \bzero,$$
        which we can write, by ignoring the meaningless coordinates, as:
        $$(\alpha_{D\geq} - \alpha_{D\leq}) (1, \dots, 1) + \alpha_{D1} (1, 0, \dots, 0) + \dots + \alpha_{Dn} (0, \dots, 0, 1) = \bzero.$$
        This is in theory possible with non-zero factors because these vectors are not linearly independent.
        However, since the point $(\bblambda, t)$ satisfies the equality $\sum_j \lambda_{Dj} = 1$, there is at least one index $j$ such that $\lambda_{Dj} > 0$.
        For that index, the constraint $(D, j)$ is therefore not active, which implies $\alpha_{Dj} = 0$.
        Since every strict subset of the set $\{(1, \dots, 1), (1, 0, \dots, 0), \dots, (0, \dots, 0, 1)\}$ is linearly independent, this implies $\alpha_{D\geq} - \alpha_{D\leq} = \alpha_{D1} = \dots = \alpha_{Dn} = 0$, and since the coefficients are non-negative, it also yields $\alpha_{D\geq} = \alpha_{D\leq} = 0$.
        Applying this reasoning for each die $D$, we obtain that all the coefficients $\alpha_i$ are zero, which is excluded by the definition of Fritz John points.
    \end{proof}

\section{Proof of \Cref{lemma:ETRhardInitialProblem}} \label{app:intermediary_hardness_result}
\ETRhardInitialProblem*

\begin{proof}
    By \cite[Lemma~3.9]{Schaefer13}, the following problem is $\exists\Rb$-complete.

    \begin{problem}
        Given a system $S$ that contains:
        \begin{itemize}
            \item equations $E$ of the form:
        $$E: \sum_{i=1}^n \sum_{j=1}^n a_{ij} x_i x_j + \sum_{i=1}^n b_i x_i + c = 0,$$
            \item the inequation $I: \sum_i x_i^2 \leq 1$,
        \end{itemize}
        does $S$ have a solution?
    \end{problem}

    In this problem, the inequality $\sum_i x_i^2 \leq 1$ implies $-1 \leq x_i \leq 1$ for each $i$.
    Each variable $x_i$ can therefore be written $y_i - z_i$, where each $y_i$ and $z_i$ is constrained to lie in the interval $[0, 1]$.
    The system $S$ has therefore a solution if and only if there is a solution to the following system:
    \begin{itemize}
        \item for each $E \in S$, the inequation:
        $$\sum_{i,j} a_{ij} (y_i - z_j)^2 + \sum_i b_i (y_i - z_i) + c \geq 0,$$

        \item for each $E \in S$, the inequation:
        $$\sum_{i,j} (-a_{ij}) (y_i - z_j)^2 + \sum_i (-b_i) (y_i - z_i) + (-c) \geq 0,$$

        \item the inequation $1- \sum_i (y_i - z_j)^2 \geq 0$,

        \item and the inequations $0 \leq y_i \leq 1$ and $0 \leq z_i \leq 1$ for each $i$.
    \end{itemize}
    
    This shows $\exists\Rb$-hardness for the following problem.

    \begin{problem}
        Given a system $S$ that contains:
        \begin{itemize}
            \item inequations $I$ of the form:
        $$I: \sum_{i=1}^n \sum_{j=1}^n a_{ij} x_i x_j + \sum_{i=1}^n b_i x_i + c \leq 0,$$
            \item the inequation $0 \leq x_i \leq 1$ for each $i$,
        \end{itemize}
        does $S$ have a solution?
    \end{problem}

    Now, given such a system, we can introduce a variable $u$ and force it to equal $1$, by adding the inequations $u^2-1 \geq 0$ and $1-u^2 \geq 0$.
    The system $S$ has therefore a solution if and only if there is a solution to the following system:
    \begin{itemize}
        \item for each $I \in S$, the inequation:
        $$\sum_{i,j} a_{ij} x_i x_j + \sum_i b_i x_i u + c \geq 0,$$

        \item the inequations $u^2-1 \geq 0$ and $1-u^2 \geq 0$,

        \item and the inequations $0 \leq x_i \leq 1$ for each $i$, and $0 \leq u \leq 1$.
    \end{itemize}
    
    This shows $\exists\Rb$-hardness for the following problem.

    \begin{problem}
        Given a system $S$ that contains:
        \begin{itemize}
            \item inequations $I$ of the form:
        $$I: \sum_{i=1}^n \sum_{j=1}^n a_{ij} x_i x_j + c \geq 0,$$
            \item the inequation $0 \leq x_i \leq 1$ for each $i$,
        \end{itemize}
        does $S$ have a solution?
    \end{problem}

    Now, given such a system, we know that the sum of the variables $x_i$ cannot exceed $n$.
    We can therefore write $x_i = ny_i$ for each $i \in [n]$, and introduce one more variable $z = 1 - \sum_i y_i$, which is then constrained to also lie in the interval $[0, 1]$.
    The system $S$ has therefore a solution if and only if there is a solution to the following system:
    \begin{itemize}
        \item for each $I \in S$, the inequation:
        $$\sum_{i,j} n^2 a_{ij} y_i y_j + c \geq 0,$$

        \item the inequations $y_i \geq 0$ for each $i$, and $z \geq 0$,

        \item and the equation $\sum_i y_i + z = 1$.
    \end{itemize}

    This shows $\exists\Rb$-hardness for the following problem.

    \begin{problem}
        Given a system $S$ that contains:
        \begin{itemize}
            \item inequations $I$ of the form:
        $$I: \sum_{i=1}^n \sum_{j=1}^n a_{ij} x_i x_j + c \geq 0,$$
            \item the inequation $x_i \geq 0$ for each $i$,
            \item the equation $\sum_{i=1}^n x_i = 1$,
        \end{itemize}
        does $S$ have a solution?
    \end{problem}

    Finally, given such a system, we can use the equality $\sum_{i=1}^n x_i = 1$ to remove the constant term in each inequation $I$.
    Indeed, the system $S$ has a solution if and  only if there is a solution to the following system:
    \begin{itemize}
        \item for each $I \in S$, the inequation:
        $$\sum_{i,j} (a_{ij} + c) x_i x_j \geq 0,$$

        \item the inequations $x_i \geq 0$ for each $i$,

        \item and the equation $\sum_i x_i = 1$.
    \end{itemize}

    This shows $\exists\Rb$-hardness for our initial problem, and concludes the proof.
\end{proof}

\section{Proof of \Cref{lm:allocate_threshold_hard}} \label{app:allocate_threshold_hard}
\allocateThresholdHard*

\begin{proof}
    This hardness proof requires a slight modification of the proof of \Cref{lm:threshold_nexptime_hard}: intuitively, we remove the dice structure from the instance, but we modify the game to force that dice structure to be the only acceptable one.
    From a given formula $\phi$, consider the dicey game $\DGame = (\Game, \Struct)$ defined in that proof.
    We define the game $\Game'$ from the game $\Game$ as follows: we add to the Devil's action set the actions $\rand^2_1$, \dots, $\rand^2_n$, and to each team player's action set the action $2$.
    We modify the payoff function as follows: when the Devil plays the action $\rand^b_i$, with $b \in \{0, 1\}$, and player $x_i$ plays the action $b$, the team payoff is now $3$ (instead of $2$).    
    When the Devil plays the action $\rand^2_i$, the team gets payoff $3$ if player $x_i$ and every player $y_j$ with $x_i \in X_j$ all play the action $2$.
    Otherwise, it gets the payoff $0$.
    In any other case, if any team player plays the action $2$, the team gets payoff $0$. 
    The game is left unchanged under all other aspects.
    
    We still consider the threshold $\threshold = 1$, and define the dice pack $\Pack$ as follows: there are $n$ dice, named $\Die_1, \dots, \Die_n$, and each die $D_i$ has accessibility $\acc(\Die_i) = |\{y_j \mid x_i \in X_j\}| + 1$.

    The dice structure $\Struct$ is compatible with the dice pack $\Pack$.
    Therefore, if the formula $\phi$ is true, we can equip the game $\Game'$ with $\Struct$ and define a collective strategy as in the proof of \Cref{lm:threshold_nexptime_hard}, except that each player $x_i$ is now using the die $\Die_i$ to play the actions $0$, $1$, and $2$ with probability $\frac{1}{3}$ each, and that when player $x_i$ plays the action $2$, so do all players $y_j$ with $x_i \in X_j$ (which is possible since they all have access to the die $\Die_i$).
    It is easy to see that this collective strategy has value $1$.

    Conversely, if there is a dice structure $\Struct'$ and a collective strategy in the dicey game $(\Game', \Struct')$ that guarantees expected payoff at least $1$, then necessarily each player $x_i$ is playing the actions $0$, $1$, and $2$ with probability $\frac{1}{3}$ each.
    Then, when the Devil plays action $\rand^2_i$ for some $i$, the probability that player $x_i$ and the players $y_j$ with $x_i \in X_j$ all play action $2$ (and then get payoff $3$) must be at least $\frac{1}{3}$.
    Since player $x_i$ is already playing action $2$ with probability $\frac{1}{3}$, this means that the probability that all players $y_j$ with $x_i \in X_j$ play action $2$, knowing that player $x_i$ plays action $2$.
    That is only possible if they share one same die.
    Therefore, the dice structure $\Struct'$ is necessarily equal to $\Struct$, up to permutations of dice that have the same accessibility.
    We can then conclude using the same arguments as in the proof of \Cref{lm:threshold_nexptime_hard}.
\end{proof}

\end{document}